\newtheorem{thm}{Theorem}[section]
\newtheorem{defn}[thm]{Definition}
\newtheorem{lemma}[thm]{Lemma}
\newtheorem{cor}[thm]{Corollary}
\newtheorem{example}[thm]{Example}
\newcommand{\bmb}{\left( \begin{array}{rr}}
\newcommand{\enm}{\end{array}\right)}
\newcommand{\Z}{{\mathbb Z}}
\newcommand{\R}{{\mathbb R}}
\newcommand{\al}{{\alpha}}
\numberwithin{equation}{section}
\begin{document}
\title{Arctic curves of the octahedron equation}
\author{Philippe Di Francesco} 
\address{Department of Mathematics, University of Illinois at Urbana-Champaign, MC-382, 
1409 W\ Green St., Urbana, IL 61801, U.S.A.
e-mail: philippe@illinois.edu}
\author{Rodrigo Soto-Garrido}
\address{Department of Physics, University of Illinois at Urbana-Champaign, 1110 W.\ Green St., Urbana, IL  61801, U.S.A. 
e-mail: sotogar1@illinois.edu.}

\date{\today}
\begin{abstract}

We study the octahedron relation (also known as the $A_{\infty}$ $T$-system), obeyed in particular by the partition 
function for dimer coverings of the Aztec Diamond graph. For a suitable class of doubly periodic initial conditions,
we find exact solutions with a particularly simple factorized form. For these, we show that the density function that 
measures the average dimer occupation of a face of the Aztec graph, obeys a system of linear recursion relations 
with periodic coefficients. This allows us to explore the thermodynamic limit of the corresponding dimer models 
and to derive exact ``arctic" curves separating the various phases of the system.
\end{abstract}

\maketitle
\date{\today}
\tableofcontents

\section{Introduction}

The octahedron recurrence is a system of non-linear equations describing the evolution of a quantity $T_{i,j,k}$, $i,j,k\in \Z$
corresponding to discrete space $(i,j)$ and time $k$. This equation first arose in the context  of integrable quantum spin chains with a Lie group symmetry \cite{KNS,KNS11}, and is obeyed by the corresponding quantum transfer matrices. In this language, the octahedron equation corresponds to the so-called $T$-system (for $A_\infty$). In this formulation, the indices $i,k,j$ respectively stand for representation indices, and a discrete spectral parameter. The $A$-type
$T$-systems have remarkable properties, depending on the choice of boundary conditions, such as discrete integrability \cite{DFK09a}, and
periodicity properties \cite{AH,DFK13} as well as the positive Laurent property (solutions are Laurent polynomials of the initial data with 
non-negative integer coefficients) in relation to cluster algebras \cite{DF}.

This equation or some restrictions thereof appear to be central in a number of combinatorial constructs, 
such as the Desnanot-Jacobi relation between minors of a given matrix and the Dodgson condensation of determinants \cite{DOD},
the lambda-determinant and alternating sign matrices \cite{Bressoud,RR,DFLD}, the puzzles for computing Littlewood-Richardson coefficients  
\cite{KTW}, various generalizations of Coxeter-Conway frieze patterns \cite{Cox,FRISES,BER}, and cluster algebra 
\cite{DFK08} to name a few.

A great progress in understanding the combinatorics of the octahedron equation was due to Speyer \cite{SPY}, 
who worked out the general solution in terms of a dimer model on a graph, also equivalent to 
the domino tiling problem of the Aztec diamond \cite{EKLP}. This establishes the connection between 
a general set of solutions of the octahedron equation with given initial data, and the partition functions 
of statistical lattice models of dimers, whose local Boltzmann  weights are defined in terms of these data.
This was recently extended to more general initial conditions, giving rise to dimer models on specific graphs \cite{DF13}.
Note also that a large class of dimer models on periodic graphs was recently shown to have both integrable and cluster algebra structures
as well \cite{GK}. Finally, the study of the so-called pentagram map, an integrable dynamical system 
on polygons of projective plane displayed intriguing connections \cite{Glick} with solutions of the octahedron equation with special periodic
initial conditions. An analogous connection exists for the generalization to higher pentagram maps \cite{GSTV,DFK13,KV}.

Dimer models  were the subject of a lot of attention, starting with the so-called arctic circle theorem for the 
domino tilings of large Aztec diamonds \cite{JPS}, and later culminating in the global understanding of the arctic curve
phenomenon in the continuum limit \cite{KO} \cite{KOS}, where the phase diagram of the model was shown to exhibit separations
between frozen, disordered, and liquid phases. Analogous phenomena were observed for groves \cite{KPS}, for the 
double-dimer model connected to the hexahedron recurrence \cite{KEN}, for random walks \cite{ADM},
for square Young tableaux \cite{Romik}, 
and for the six-vertex model \cite{CNP}.

In the present paper, we revisit the case of domino tilings/dimer coverings of the Aztec diamond,  from the point of view of 
Speyer's general solution of the octahedron relation. We construct explicit exact solutions of the octahedron equation 
for an infinite class of initial data with special periodicity conditions. 
These in turn are partition functions for dimer models with periodic weights.
For these solutions,
we show that a certain local density function, that measures the average dimer occupation of a face of the Aztec graph, obeys a system of 
linear recursion relations with {\it periodic} coefficients. This allows to compute the density generating function explicitly in the form 
of a rational fraction. Following \cite{PW0204,PW08,PW13book,KEN}, the study of the denominator of this function allows to explore the singularity 
structure of the dimer models  in the thermodynamic limit of large size, and to confirm their phase structure, displaying frozen, 
disordered and facet-like phases separated by generalized ``arctic" curves.

The paper is organized as follows. 

In Section 2, we recall facts on the $A_\infty$ $T$-system/octahedron equation and its initial conditions, as well as its solution as a dimer partition function on the Aztec graph. As a preparatory exercise, we compute the density generating function for the uniform initial data and show how to extract the arctic circle form the explicit solution.

In Section 3, we study in detail $2\times 2$ periodic initial data, for which the $T$-system is wrapped on the torus generated by $(2,0)$ and $(0,2)$ in $\Z^2$. The density generating function is found to solve a linear $4\times 4$ system, whose explicit solution displays two disconnected pieces of ``arctic" curve, separating the dimer configurations in the thermodynamic limit into three phases: (i) four frozen corners with a single dominant configuration induced by the geometry of corners; (ii) a disordered ``temperate region" analog to the inside of the arctic circle, and (iii) a new ``facet"-like central phase, where the configurations are pinned to the sub-lattice corresponding to the faces with the smallest Boltzmann weight. 

Section 4 is devoted to the class of so-called $m$-toroidal boundary conditions. We first construct the explicit solution to the $T$-system, and then show that it leads to linear systems for the density function with {\it triply} periodic coefficients in $\Z^3$. As a result, the density generating function is shown to satisfy a $4m\times 4m$ linear system, whose determinant captures the information on the generalized arctic curves that separate different phases. For this model, we find generically
the same phases: (i) frozen corners; (ii) disordered region; (iii) in general $m-1$ facets  whose position and size vary with the initial data.

We gather a few concluding remarks in Section 5.

\medskip
\noindent{\bf Acknowledgments.} 
We thank E. Fradkin, M. Gekhtman, R. Kedem,  R. Kenyon and G. Musiker for discussions. We thank the referees for a 
very thorough reading of the manuscript, and many useful remarks.
P. D. F. thanks the Simons Center for Geometry and Physics 
for hospitality during the semester ``Conformal Geometry" in the early stages of this work. P. D. F. is supported by the NSF grant 
DMS 13-01636 and the Morris and Gertrude Fine endowment. R. S. G. is supported by the NSF grant DMR-1064319 and the DOE grant 
DE-FG02-07ER46453 at the University of Illinois.

\section{T-system, dimers and arctic curve}

\subsection{The $A_\infty$ $T$-system.}

The unrestricted $A_\infty$ $T$-system (from now on we will drop the $A_\infty$ label), also known as the octahedron recurrence, 
is given by the following difference equation (for a detailed review of the $T$-system see \cite{KNS11}):
\begin{equation}
 T_{i,j,k+1}T_{i,j,k-1}=T_{i+1,j,k}T_{i-1,j,k}+T_{i,j+1,k}T_{i,j-1,k}
 \label{Tsystem}
\end{equation}
where $i,j,k\in \Z$ and $T_{i,j,k}\in\R$. Notice that the system conserves parity $i+j+k=0,1$ mod 2. Let us fix it to $1$ 
throughout the paper. 
One way to think about the $T$-system is to consider $i,j$ as labeling points on the square lattice and $k$ as a discrete time 
(as the vertical axis of a cubic lattice). 
In this sense the $T$-system (\ref{Tsystem}) describes the evolution in time of a given initial data (see Ref. \cite{DF13}). 
In this paper
we will work with a flat initial data, meaning that the value of the $T$ variables is specified on the $(i,j,0)$ and $(i,j,1)$ planes,
namely we fix:
\begin{equation}\label{initdat} T_{i,j,i+j+1\, {\rm mod}\, 2}=t_{i,j} \qquad (i,j\in \Z) \end{equation}
In this paper, we consider the solution $T_{i,j,k}$ to the $T$-system \eqref{Tsystem} subject to various restrictions of the
initial condition \eqref{initdat}. These will be simply the result of imposing extra periodicity conditions on the initial data $t_{i,j}$,
the simplest of which being the uniform case when all $t_{i,j}=1$.

\subsection{Dimer formulation}

\begin{figure}
 \includegraphics[width=12.cm]{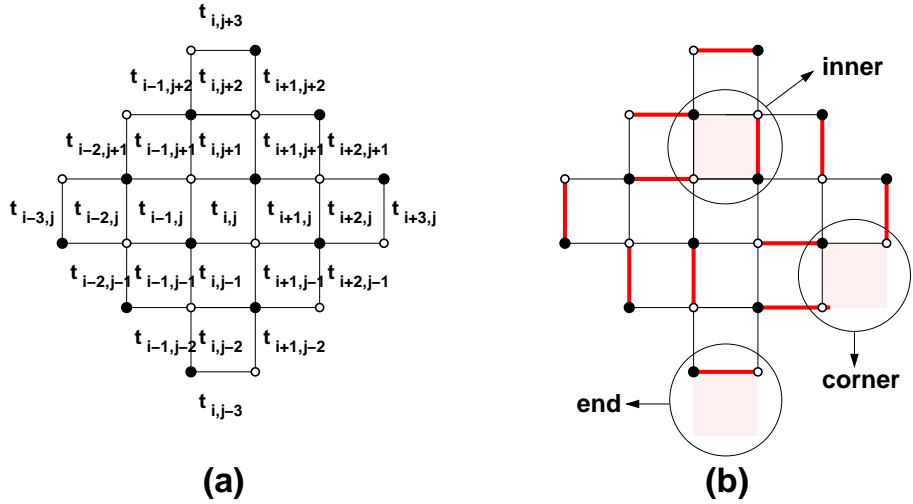}
\caption{\small A typical Aztec diamond graph ${\mathcal A_{i,j,k}}$ for $k=4$ (a) with its face labeling, and a sample dimer configuration on 
${\mathcal A_{i,j,k}}$ (b). We have shaded the three types of faces: inner, corner, end, respectively adjacent to 4,2,1 edges of ${\mathcal A}_{i,j,k}$.}
\label{fig:diamondim}
\end{figure}

The solution $T_{i,j,k}$, for $i+j+k=1$ mod 2, $i,j\in\Z$, $k\geq 0$, of the $A_\infty$ $T$-system \eqref{Tsystem} with initial data \eqref{initdat} 
was identified as the partition function for 
domino tilings of the Aztec diamond, or dually to the dimer coverings of the Aztec diamond graph ${\mathcal A}_{i,j,k}$ \cite{SPY,DF13}. 
The graph ${\mathcal A}_{i,j,k}$ has vertices at points of the lattice $\Z^2$.
Let us  label the faces of $\Z^2$ by the coordinate $(i,j)$ of their lower left corner vertex. Then ${\mathcal A}_{i,j,k}$ has faces $(a,b)$ such 
that $|a-i|+|b-j|\leq k-1$. Each such face receives the label $t_{a,b}$, the initial data assignment of the $T$-system. The edges of the graph are 
inherited from those of the underlying square lattice, 
however the boundary faces have only one or two adjacent edges depending on whether they are corner faces or end faces 
(see Fig. \ref{fig:diamondim} for an illustration).
The vertices of ${\mathcal A}_{i,j,k}$
are naturally bicolored black/white, according to the parity of $i+j=0/1$ mod 2. 

\begin{defn}
The dimer model on ${\mathcal A}_{i,j,k}$ is defined as follows. The configurations of the model are matchings of pairs of vertices connected
via an edge (dimers) such that any vertex of ${\mathcal A}_{i,j,k}$ belongs to exactly one  dimer. Each configuration is weighted by a product 
over all faces $(a,b)$ of ${\mathcal A}_{i,j,k}$ of local weights $w_{a,b}$. For any face $(a,b)$ (including boundary faces) the weight is 
$$w_{a,b}=(t_{a,b})^{1-D_{a,b}}$$
where $D_{a,b}\in \{0,1,2\}$ is the total number of dimers occupying the edges of the square face. The partition function of the model is
$$Z_{i,j,k}=\sum_{{\rm dimer}\atop {\rm configurations}} \prod_{{\rm faces}\ (a,b)} w_{a,b} $$
\end{defn}

The analysis of the present paper is based on the following main result:
\begin{thm}{\cite{SPY,DF13}}
The solution $T_{i,j,k}$, for $i+j+k=1$ mod 2, $i,j\in\Z$, $k\geq 0$, of the $A_\infty$ $T$-system \eqref{Tsystem} with initial data 
\eqref{initdat} is the partition function of the dimer model on ${\mathcal A}_{i,j,k}$, namely
$$ T_{i,j,k}=Z_{i,j,k}$$
\end{thm}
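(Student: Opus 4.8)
\medskip
\noindent\emph{Sketch of the proof strategy.}
The idea is to show that the dimer partition function $Z_{i,j,k}$ satisfies both the octahedron recurrence \eqref{Tsystem} and the initial data \eqref{initdat}, and then to invoke uniqueness. Solved forward in time, \eqref{Tsystem} reads
\[
T_{i,j,k+1}=\frac{T_{i+1,j,k}\,T_{i-1,j,k}+T_{i,j+1,k}\,T_{i,j-1,k}}{T_{i,j,k-1}},
\]
so the two slices $k=0$ and $k=1$ --- equivalently, the data $\{t_{i,j}\}$ --- determine $T_{i,j,k}$ for all $k\ge 0$, as long as no denominator vanishes. Working over the field of rational functions in the indeterminates $\{t_{a,b}\}$ this is harmless, since each $Z_{i,j,k}$ is a nonzero sum of monomials in the $t_{a,b}$ with positive coefficients. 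Hence it is enough to check (i) that $Z_{i,j,k}$ reproduces the prescribed values on the slices $k=0,1$, and (ii) that the $Z_{i,j,k}$ obey \eqref{Tsystem} for all $k\ge 1$; an induction on $k$ then gives $Z_{i,j,k}=T_{i,j,k}$ for all $k\ge 0$.

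Step (i) is an inspection of the smallest cases: for $k\in\{0,1\}$ the graph $\mathcal{A}_{i,j,k}$ is degenerate, carrying a unique (empty) dimer configuration whose weight is the single face label $t_{i,j}$, which matches $T_{i,j,0}=t_{i,j}$ for $i+j$ odd and $T_{i,j,1}=t_{i,j}$ for $i+j$ even; this also fixes the boundary conventions in the definition of $\mathcal{A}_{i,j,k}$.

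Step (ii) is the substantive part. I would first convert the face-weight convention $w_{a,b}=(t_{a,b})^{1-D_{a,b}}$ into an ordinary edge-weighted matching model by a gauge transformation: there is an assignment of edge weights to $\mathcal{A}_{i,j,k}$ together with a monomial prefactor $\pi_{i,j,k}$ in the $t_{a,b}$ such that $Z_{i,j,k}=\pi_{i,j,k}\,M(\mathcal{A}_{i,j,k})$, where $M(\cdot)$ is the edge-weighted perfect-matching generating function. Next, the five graphs $\mathcal{A}_{i\pm1,j,k}$, $\mathcal{A}_{i,j\pm1,k}$ and $\mathcal{A}_{i,j,k-1}$ are all obtained from the single graph $\mathcal{A}_{i,j,k+1}$ by deleting suitable subsets of its outermost (end/corner) vertices. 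I would then apply Kuo-type graphical condensation (Kuo; as used for Aztec diamonds in \cite{EKLP,SPY,DF13}): for a planar bipartite weighted graph $G$ with vertices $\alpha,\gamma$ of one color and $\beta,\delta$ of the other, occurring in the cyclic order $\alpha,\beta,\gamma,\delta$ on the outer face,
\[
M(G)\,M\bigl(G\setminus\{\alpha,\beta,\gamma,\delta\}\bigr)=M\bigl(G\setminus\{\alpha,\beta\}\bigr)\,M\bigl(G\setminus\{\gamma,\delta\}\bigr)+M\bigl(G\setminus\{\alpha,\delta\}\bigr)\,M\bigl(G\setminus\{\beta,\gamma\}\bigr).
\]
Applied to $G=\mathcal{A}_{i,j,k+1}$ with $\alpha,\beta,\gamma,\delta$ its four extreme vertices, the left-hand side is $Z_{i,j,k+1}Z_{i,j,k-1}$ and the two right-hand terms are $Z_{i+1,j,k}Z_{i-1,j,k}$ and $Z_{i,j+1,k}Z_{i,j-1,k}$, all up to the monomial prefactors; tracking these prefactors then yields exactly \eqref{Tsystem} for the $Z$'s, completing step (ii).

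The main obstacle is precisely this last bookkeeping: one must match each vertex-deletion of $\mathcal{A}_{i,j,k+1}$ with the correct shifted diamond --- keeping control of how the surviving faces change type (inner/corner/end), hence of the exponents $1-D_{a,b}$ and of the induced edge weights --- verify that the four extreme vertices alternate in color so that the hypothesis of the condensation identity holds, and check that the prefactors $\pi_{i,j,k}$ combine correctly on the two sides. The face-weight convention $w_{a,b}=(t_{a,b})^{1-D_{a,b}}$ is tailored precisely so that this matching goes through cleanly; carrying it out in detail is the content of \cite{SPY} (and of \cite{DF13} for the general periodic initial data we use below). An alternative to the condensation route is Speyer's original, more bijective argument, which describes all perfect matchings of $\mathcal{A}_{i,j,k}$ explicitly and compares them term by term with the Laurent-monomial expansion of the $T$-system solution.
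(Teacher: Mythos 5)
The paper does not prove this theorem itself --- it is imported verbatim from \cite{SPY,DF13} --- so there is no internal proof to compare against; your sketch reproduces the standard argument of those references (essentially Speyer's): convert the face weights $(t_{a,b})^{1-D_{a,b}}$ into an edge-weighted matching model, apply Kuo's graphical condensation to $\mathcal{A}_{i,j,k+1}$ with its four extremal vertices to get the octahedron relation for the $Z$'s, check the degenerate $k=0,1$ cases, and conclude by induction, using positivity/Laurentness to rule out vanishing denominators in the forward recursion. The approach is correct and is the one used in the cited works; the only caveat is that, as you acknowledge, the substantive content (the weight and prefactor bookkeeping, the color alternation of the four deleted vertices, and the convention making the $k=0$ base case read $Z_{i,j,0}=t_{i,j}$) is deferred to \cite{SPY,DF13} rather than carried out, so your text is an outline of their proof rather than an independent one.
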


\subsubsection{Density}

Consider the solution $T_{i,j,k}$ to the $T$-system with initial data \eqref{initdat}.
If we think of $T_{i,j,k}$ as a partition function, then the derivative 
$t_{\epsilon,\eta}\partial_{t_{\epsilon,\eta}} {\rm Log}\, T_{i,j,k}$ corresponds to some susceptibility 
or density, where $t_{\epsilon,\eta}$ acts as a source,
here a magnetic field attached to dimers around the $(\epsilon,\eta)$ face. More precisely, we have
$$t_{\epsilon,\eta}\partial_{t_{\epsilon,\eta}} {\rm Log}\, T_{i,j,k} =\langle 1-D_{\epsilon,\eta} \rangle \ ,$$
the statistical average of $1-$ the number of dimers surrounding the face $(\epsilon,\eta)$ 
within the set of dimer configurations on ${\mathcal A}_{i,j,k}$.
Assume we further restrict the initial values to $t_{a,b}=t_{a,b}^*$, $a,b\in \Z$.

\begin{defn}
We define the density function $\rho_{i,j,k}^{(\epsilon,\eta)}$ as:
\begin{equation}
 \rho_{i,j,k}^{(\epsilon,\eta)}=t_{\epsilon,\eta}\left. \partial_{t_{\epsilon,\eta}}{\rm Log}\, T_{i,j,k}\right\vert_{t_{a,b}=t_{a,b}^*}
 \label{density}
\end{equation}
\end{defn}

We can easily derive a linear recurrence relation for $\rho_{i,j,k}^{(\epsilon,\eta)}$ by taking the derivative with respect to 
$t_{\epsilon,\eta}$ of the $T$-system relation \eqref{Tsystem}.
After some straightforward algebra we have:
\begin{equation} 
\rho_{i,j,k+1}^{(\epsilon,\eta)}+ \rho_{i,j,k-1}^{(\epsilon,\eta)}=L_{i,j,k}(\rho_{i+1,j,k}^{(\epsilon,\eta)}+\rho_{i-1,j,k}^{(\epsilon,\eta)})
+R_{i,j,k}(\rho_{i,j+1,k}^{(\epsilon,\eta)}+\rho_{i,j-1,k}^{(\epsilon,\eta)}) 
\label{densityequation}				
\end{equation}
where we used the notation:
\begin{equation}\label{LRdef}
L_{i,j,k}=\displaystyle\frac{T_{i+1,j,k}T_{i-1,j,k}}{T_{i,j,k+1}T_{i,j,k-1}}\qquad{\rm and }\qquad 
R_{i,j,k}=1-L_{i,j,k}=\displaystyle\frac{T_{i,j+1,k}T_{i,j-1,k}}{T_{i,j,k+1}T_{i,j,k-1}} 
\end{equation}
where $T_{i,j,k}$ are evaluated at $t_{a,b}=t_{a,b}^*$ for all $a,b\in \Z$.
The recurrence relation is supplemented with the following initial data. Define $\varphi=(\epsilon+\eta+1\ {\rm mod}\ 2)$. Then:
\begin{equation}\label{iniro}
\rho_{i,j,\varphi}^{(\epsilon,\eta)} =\delta_{i,\epsilon}\delta_{j,\eta} \qquad 
\rho_{i,j,1-\varphi}^{(\epsilon,\eta)}=0 \qquad (i,j\in \Z;i+j+\epsilon+\eta=0\, {\rm mod}\, 2)
\end{equation}

Notice that in order to solve the recurrence relation for the density we need to know $L_{i,j,k}$ and $R_{i,j,k}$ appearing 
in \eqref{densityequation}. As we shall see below, the density $\rho_{i,j,k}^{(\epsilon,\eta)}$ is the variable that we will use to explore 
the behavior of the dimer model for large $k$.

\subsection{Arctic curve: the uniform case}

We start with the recurrence relation (\ref{densityequation}) for the uniform initial data:
\begin{equation}
 t_{i,j}^*=1\qquad(i,j \in \Z)
\end{equation}
In this case the solution for the T-system is simply given by $T_{i,j,k}=2^{k(k-1)/2}$: this coincides with the partition function of
uniform domino tilings of an Aztec diamond of size $k$ \cite{EKLP}. The
two ratios $L_{i,j,k}$, $R_{i,j,k}$ appearing in eq. (\ref{densityequation}) are equal to $1/2$. 

To this uniform initial data we add up a source on the face $(\epsilon,\eta)=(0,0)$ or $(0,1)$,
namely consider the densities $\rho_{i,j,k}^{(0,0)}$, $\rho_{i,j,k}^{(0,1)}$.
Both densities obey the following recurrence relation:
\begin{equation}
\rho_{i,j,k+1}+\rho_{i,j,k-1}=\frac{1}{2}(\rho_{i+1,j,k}+\rho_{i-1,j,k}+\rho_{i,j+1,k}+\rho_{i,j-1,k}) \qquad (i,j\in \Z;k\geq 1)
 \label{densitytrivialdata}
\end{equation}
with initial data $\rho_{i,j,0}^{(0,0)}=0$ and $\rho_{i,j,1}^{(0,0)}=\delta_{i,0}\delta_{j,0}$, while
$\rho_{i,j,1}^{(0,1)}=0$ and $\rho_{i,j,0}^{(0,1)}=\delta_{i,0}\delta_{j,0}$. 
Extending the validity of \eqref{densitytrivialdata} to $k=0$ allows to define 
\begin{equation}\label{minusonero}\rho_{i,j,-1}^{(0,0)}=-\delta_{i,0}\delta_{j,0}\end{equation}
and 
\begin{equation}\label{minustworo}\rho_{i,j,-1}^{(0,1)}= \frac{1}{2}(\delta_{i,-1}+\delta_{i,1})\delta_{j,0}+ \frac{1}{2}(\delta_{j,-1}+\delta_{j,1})\delta_{i,0}
\end{equation}
For $(\epsilon,\eta)=(0,0)$ or $(0,1)$, we define generating functions $\rho^{(\epsilon,\eta)}(x,y,z)$ as:
\begin{equation}
 \rho^{(\epsilon,\eta)}(x,y,z)=\sum_{i,j\in\Z,k\geq 0}\rho_{i,j,k}^{(\epsilon,\eta)}\, x^i\, y^j\, z^k
 \label{genfun}
\end{equation}
Multiplying both sides of \eqref{densitytrivialdata} by $x^iy^jz^k$ and then summing over $i,j\in \Z,k\geq 0$ we get:
$$
\sum_{i,j\in \Z,k\geq 0}(\rho_{i,j,k+1}+\rho_{i,j,k-1})x^iy^jz^k
=\frac{1}{2}\sum_{i,j\in \Z,k\geq 0}(\rho_{i+1,j,k}+\rho_{i-1,j,k}+\rho_{i,j+1,k}+\rho_{i,j-1,k})x^iy^jz^k
$$
for both density functions. Substituting the values of $\rho_{i,j,-1}$ from (\ref{minusonero}-\ref{minustworo}), we get:
\begin{eqnarray*}
(z^{-1}+z)\rho^{(0,0)}(x,y,z)&=&\frac{1}{2}(x^{-1}+x+y^{-1}+y)\rho^{(0,0)}(x,y,z)+1 \\
(z^{-1}+z)\rho^{(0,1)}(x,y,z)&=&\frac{1}{2}(x^{-1}+x+y^{-1}+y)\rho^{(0,1)}(x,y,z)+z^{-1}-\frac{1}{2}(x^{-1}+x+y^{-1}+y)
\end{eqnarray*}

So the density generating functions for the uniform initial data are given by:
\begin{eqnarray}
\rho^{(0,0)}(x,y,z)&=&\frac{z}{1+z^2-\frac{z}{2}(x^{-1}+x+y^{-1}+y)}\nonumber \\
\rho^{(0,1)}(x,y,z)&=&\frac{1-\frac{z}{2}(x^{-1}+x+y^{-1}+y)}{1+z^2-\frac{z}{2}(x^{-1}+x+y^{-1}+y)}\nonumber \\
&=&1-z \, \rho^{(0,0)}(x,y,z)(x,y,z) \label{genfuntrivialdata}
\end{eqnarray}

Recall that $\rho^{(0,0)}_{i,j,k}$, $i+j+k=1$ mod 2, is the average $\langle 1-D_{0,0}\rangle_{i,j,k}$ in the dimer model on ${\mathcal A}_{i,j,k}$.
It is the same as the average $\langle 1-D_{-i,-j}\rangle_{0,0,k}$ (for $i+j$ even) in the dimer model on ${\mathcal A}_{0,0,k}$ when $k$ is odd
and as $\langle 1-D_{-i,-j+1}\rangle_{0,1,k}$ (for $i+j$ odd) in the dimer model on ${\mathcal A}_{0,1,k}$ when $k$ is even.
Therefore the generating function for the averages $\langle 1-D_{i,j}\rangle_{0,0,2k-1}$ on the even faces (with $i+j=0$ mod 2) 
of the dimer model on $A_{0,0,2k-1}$ reads:
$$\sum_{i,j\in \Z\atop i+j\ {\rm even}} \langle 1-D_{i,j}\rangle_{0,0,2k-1}\,  x^i \, y^j= \rho^{(0,0)}(x^{-1},y^{-1},z)\vert_{z^{2k-1}}
=\rho^{(0,0)}(x,y,z)\vert_{z^{2k-1}} $$
by use of the obvious symmetries $x\leftrightarrow x^{-1}$ and $y\leftrightarrow y^{-1}$ of $\rho^{(0,0)}$ \eqref{genfuntrivialdata},
and where the notation $f\vert_{z^m}$ stands for the coefficient of $z^m$ in the power expansion of $f$ as a series of $z$.
Similarly, the generating function for the averages $\langle 1-D_{i,j}\rangle_{0,1,2k}$ on the even faces
of the dimer model on $A_{0,1,2k}$ reads:
$$\sum_{i,j\in \Z\atop i+j\ {\rm even}} \langle 1-D_{i,j}\rangle_{0,1,2k} \, x^i \, y^j= y\, \rho^{(0,0)}(x^{-1},y^{-1},z)\vert_{z^{2k-1}}
=y\, \rho^{(0,0)}(x,y,z)\vert_{z^{2k}} $$
We have similar expressions for the averages on odd faces, involving $\rho^{(0,1)}$.

\begin{figure}
 \includegraphics[scale=.8]{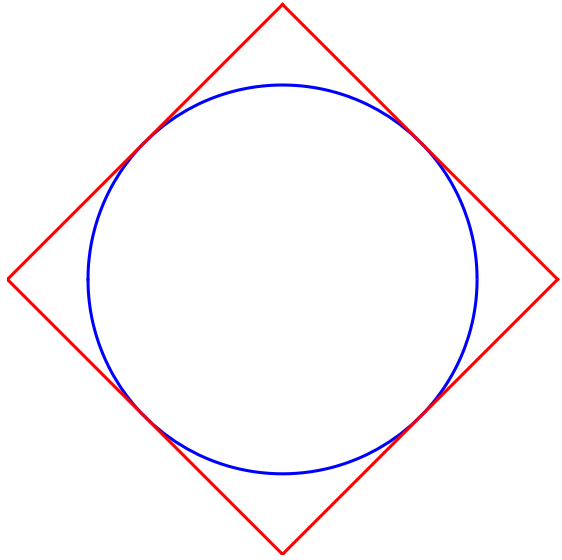}
\caption{\small Arctic circle for the trivial initial data. The corners are frozen and the center is disordered.}
\label{Arcticcircle}
\end{figure}

The singularities of these two expressions are determined by the denominator of $\rho^{(0,0)}(x,y,z)$ given by
(\ref{genfuntrivialdata}).

We wish to explore the behavior of the coefficients $\rho^{(0,0)}_{i,j,k}$ for large $i,j,k$ and $\frac{i}{k}=u$, $\frac{j}{k}=v$ finite.
Following the general theory of singularities of multivariate series \cite{PW0204,PW08,PW13book}, 
the limit is singular along the planar dual curve to the homogeneous polynomial part of the denominator 
of the generating functions at the critical point $x=y=z=1$.
To compute this curve, we may
blow up the singular point $x=y=z=1$ by taking $x\rightarrow 1-tx$, $y\rightarrow 1-ty$ and $z\rightarrow 1+t(ux+vy)$ and 
then expand the denominator in powers of $t$. Up to order $t^2$ we have:
\begin{equation*}
 z^{-1}+z-\frac{1}{2}(x^{-1}+x+y^{-1}+y)=\frac{t^2}{2}((2u^2-1)x^2+(2v^2-1)y^2+4uvxy)+O(t^3)
\end{equation*}
Let us define $H(x,y)=(2u^2-1)x^2+(2v^2-1)y^2+4uvxy$. The dual curve is obtained by imposing $H(x,y)=0$ and 
$\frac{\partial}{\partial x}H(x,y)=\frac{\partial}{\partial y}H(x,y)=0$. However, here and in the following, $H(x,y)$ is always a homogeneous
polynomial, henceforth $(\frac{\partial}{\partial x}+y\frac{\partial}{\partial y})H=m H$, where $m$ is the total degree, $m=2$ here.
We may therefore simply impose $H(x,y)=0$ and $\frac{\partial}{\partial x}H(x,y)=0$, and the last equation $\frac{\partial}{\partial y}H(x,y)=0$
is automatically satisfied.
Eliminating $x$ and $y$, we end up with the singularity locus:
\begin{equation}
 P(u,v)=2(u^2+v^2)-1=0
 \label{arcticcircle}
\end{equation}
This defines the arctic circle (see Fig. \ref{Arcticcircle}). This is the circle inscribed into the
square  domain $|u|+|v|=1$, which corresponds to the limiting domain of non-zero values of $\rho_{i,j,k}$ for $k\to\infty$
while $\frac{i}{k}=u$, $\frac{j}{k}=v$.


\section{Toroidal initial data I: the 2$\times$2 case}\label{sectwotwo}

\begin{figure}
\centering
\includegraphics[scale=.42]{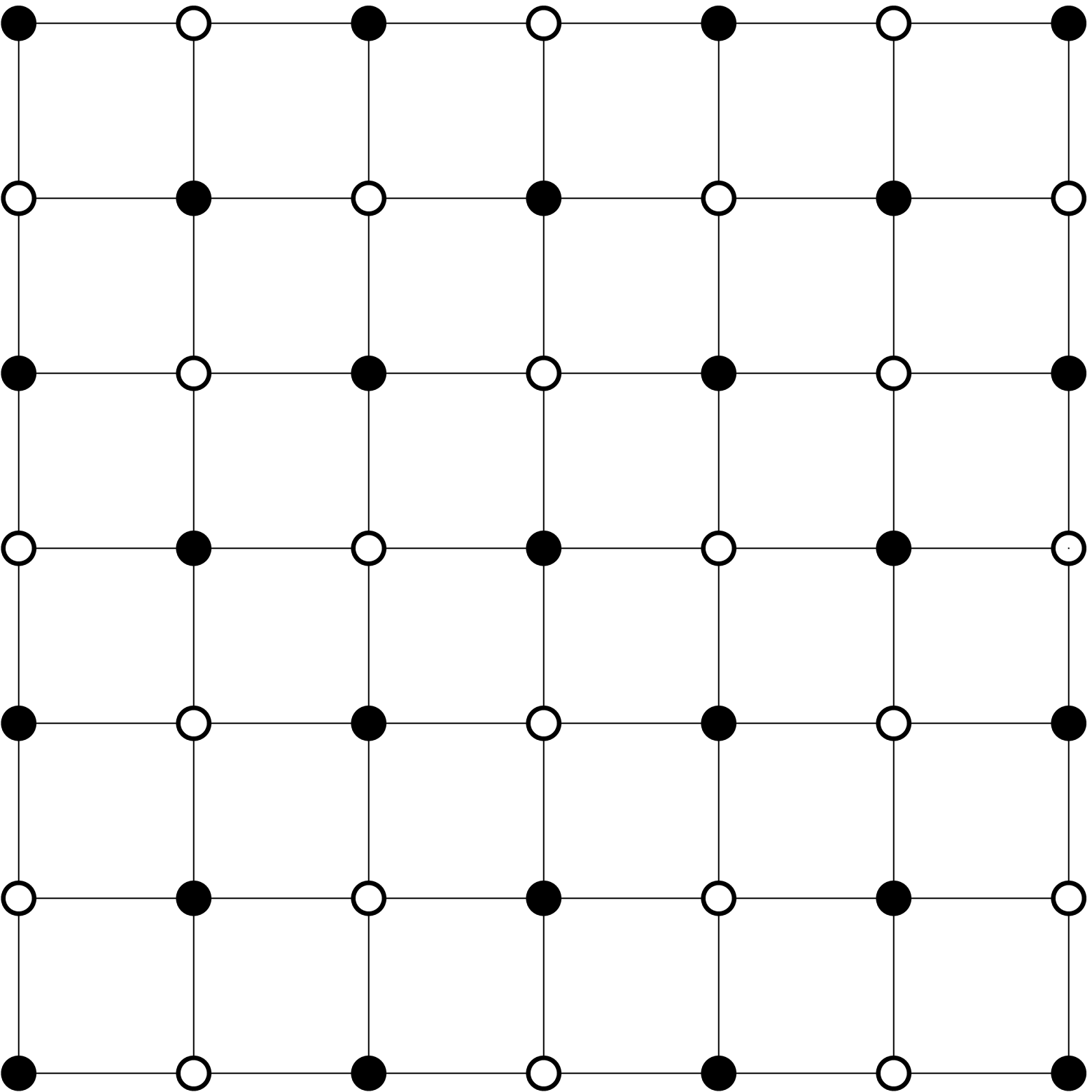}
\setlength{\unitlength}{0.5cm}
\multiput(-13,.4)(4.3,0){4}{\large{$b$}}
\multiput(-13,4.7)(4.3,0){4}{\large{$b$}}
\multiput(-13,9)(4.3,0){4}{\large{$b$}}
\multiput(-13,13.3)(4.3,0){4}{\large{$b$}}
\multiput(-13,2.6)(4.3,0){4}{\large{$d$}}
\multiput(-13,6.9)(4.3,0){4}{\large{$d$}}
\multiput(-13,11.2)(4.3,0){4}{\large{$d$}}
\multiput(-10.8,2.6)(4.3,0){3}{\large{$a$}}
\multiput(-10.8,6.9)(4.3,0){3}{\large{$a$}}
\multiput(-10.8,11.2)(4.3,0){3}{\large{$a$}}
\multiput(-10.8,.4)(4.3,0){3}{\large{$c$}}
\multiput(-10.8,4.7)(4.3,0){3}{\large{$c$}}
\multiput(-10.8,9)(4.3,0){3}{\large{$c$}}
\multiput(-10.8,13.3)(4.3,0){3}{\large{$c$}}
\put(-14,-.7){\vector(1,0){2}}
\put(-14,-.7){\vector(0,1){2}}
\put(-14,1.7){$j$}
\put(-11.9,-0.9){$i$}
\put(-6.7,6){\small{(0,0)}}
\caption{\small Initial data with period 2 in directions $i,j$. The white and black points corresponds to the planes $k=0$ and $k=1$ 
respectively.}
\label{periodicinitialdata}
\end{figure}

In this section we will focus on a very specific initial data, which has period two in 
both $i$ and $j$ directions, namely:
$$ t_{i+2,j}=t_{i,j}\qquad t_{i,j+2}=t_{i,j} $$
More precisely, we set (see Fig. \ref{periodicinitialdata}):
\begin{equation}\label{ini22}
 t_{i,j}=\left\{ \begin{matrix} 
a & {\rm if}\, i=0,j=0\, {\rm mod}\, 2\\
b & {\rm if}\, i=1,j=1\, {\rm mod}\, 2\\
c & {\rm if}\, i=0,j=1\, {\rm mod}\, 2\\
d & {\rm if}\, i=1,j=0\, {\rm mod}\, 2
\end{matrix}\right. 
\end{equation}
Remarkably, this particular initial data allows us to still find
an exact and simple solution for the $T$-system. At the same time it provides us with an illustrative example on how to
compute the arctic curves for non-uniform initial data. We will see that the ratios $L_{i,j,k}$ and $R_{i,j,k}$ appearing in (\ref{densityequation})
have a certain periodicity that allows to reduce the problem to a finite linear system of 4 equations (each for different values of the ratios).

\subsection{Exact solution of the $T$-system with 2$\times$2 periodic initial data}
The corresponding $T$-system actually coincides with the so-called $Q$-system for $\widehat{A_1}$.
The exact solution to this T-system with doubly periodic initial data is given by:

\begin{lemma}
\small{\begin{equation}\label{exasol22}
  T_{i,j,k}=
\left(\frac{a^2+b^2}{c d}\right)^{\left\lfloor \frac{k}{2}\right\rfloor  \left\lfloor
   \frac{k+1}{2}\right\rfloor } \left(\frac{c^2+d^2}{a b}\right)^{\left\lfloor
   \frac{k-1}{2}\right\rfloor  \left\lfloor \frac{k}{2}\right\rfloor }
   \times
\begin{cases}
t_{i,j}&, \text{ if } k=0,1,\ {\rm mod}\ 4  \\
t_{i+1,j+1}&, \text{ if }  k=2,3,\ {\rm mod}\ 4 
\end{cases}
\end{equation}}
\normalsize
\noindent for $t_{i,j}$ as in \eqref{ini22}.
\end{lemma}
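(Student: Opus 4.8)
The plan is to prove \eqref{exasol22} by a straightforward induction on $k$, checking directly that the claimed right-hand side satisfies the octahedron relation \eqref{Tsystem} and reproduces the initial data \eqref{initdat}. (One could instead invoke known closed forms for the $\widehat{A_1}$ $Q$-system, but the direct verification is short enough to be self-contained.) Before running the induction I would record two simplifications. First, using $\lfloor k/2\rfloor\lfloor (k+1)/2\rfloor=\lfloor k^2/4\rfloor$ and $\lfloor (k-1)/2\rfloor\lfloor k/2\rfloor=\lfloor (k-1)^2/4\rfloor$, abbreviate the prefactor as $\phi_k:=X^{\alpha_k}Y^{\beta_k}$ with $X=(a^2+b^2)/(cd)$, $Y=(c^2+d^2)/(ab)$, $\alpha_k=\lfloor k^2/4\rfloor$ and $\beta_k=\alpha_{k-1}$, so that the asserted solution reads compactly $T_{i,j,k}=\phi_k\,t_{i+\epsilon_k,\,j+\epsilon_k}$, where $\epsilon_k\in\{0,1\}$ is $0$ for $k\equiv0,1$ and $1$ for $k\equiv2,3\bmod4$; note that $(\epsilon_k)$ has period $4$, so $\epsilon_{k-1}\ne\epsilon_{k+1}$ for every $k$. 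Second, since $t_{i,j}$ in \eqref{ini22} depends on $(i,j)$ only through the pair $(i\bmod2,\,j\bmod2)$, and since the conserved parity $i+j+k\equiv1\bmod2$ forces the centre $(i,j,k)$ of any instance of \eqref{Tsystem} to have $i+j\equiv k\bmod2$, only very restricted products of initial weights can occur: one has $t_{i+1,j}=t_{i-1,j}$, $t_{i,j+1}=t_{i,j-1}$, and $t_{i,j}\,t_{i+1,j+1}$ equals $ab$ when $i+j$ is even and $cd$ when $i+j$ is odd.

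The base case is immediate: at $k=0$ and $k=1$ one has $\alpha_k=\beta_k=0$ and $\epsilon_k=0$, so the formula reduces to $T_{i,j,k}=t_{i,j}$, matching the flat data \eqref{initdat}--\eqref{ini22} on the two planes $i+j+k\equiv1$. For the inductive step I would assume \eqref{exasol22} at levels $k-1$ and $k$ for some $k\ge1$ and feed it into \eqref{Tsystem}. On the left, because $\{\epsilon_{k-1},\epsilon_{k+1}\}=\{0,1\}$ the two diagonal shifts combine to $T_{i,j,k+1}T_{i,j,k-1}=\phi_{k+1}\phi_{k-1}\,t_{i,j}\,t_{i+1,j+1}$, which is $\phi_{k+1}\phi_{k-1}\,ab$ for $k$ even and $\phi_{k+1}\phi_{k-1}\,cd$ for $k$ odd. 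On the right, the formula at level $k$ gives $T_{i+1,j,k}=T_{i-1,j,k}$ and $T_{i,j+1,k}=T_{i,j-1,k}$ (each pair of faces shares its parities), so the right-hand side of \eqref{Tsystem} equals $\phi_k^2\bigl(t_{i+1+\epsilon_k,\,j+\epsilon_k}^2+t_{i+\epsilon_k,\,j+1+\epsilon_k}^2\bigr)$, and a short case check over the parity classes of $(i+\epsilon_k,\,j+\epsilon_k)$ evaluates this to $\phi_k^2(c^2+d^2)$ for $k$ even and $\phi_k^2(a^2+b^2)$ for $k$ odd. Dividing by $T_{i,j,k-1}=\phi_{k-1}\,t_{i+\epsilon_{k-1},\,j+\epsilon_{k-1}}$ and comparing with the asserted value of $T_{i,j,k+1}$, the step reduces to the single scalar identity
$$\frac{\phi_{k+1}\phi_{k-1}}{\phi_k^2}=\begin{cases} Y=(c^2+d^2)/(ab), & k\ \text{even},\\ X=(a^2+b^2)/(cd), & k\ \text{odd}.\end{cases}$$

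Writing $\phi_k=X^{\alpha_k}Y^{\beta_k}$ turns this scalar identity into the two statements that the second difference $\alpha_{k+1}+\alpha_{k-1}-2\alpha_k$ equals $1$ for $k$ odd and $0$ for $k$ even, and that $\beta_{k+1}+\beta_{k-1}-2\beta_k$ does the opposite; both follow at once from the one-line computation that $\lfloor (k+1)^2/4\rfloor+\lfloor (k-1)^2/4\rfloor-2\lfloor k^2/4\rfloor$ equals $0$ for $k$ even and $1$ for $k$ odd, together with $\beta_k=\alpha_{k-1}$. This closes the induction. I do not expect a serious obstacle: all divisions performed are legitimate because the displayed $T_{i,j,k}$ are manifestly nonzero (for instance for positive $a,b,c,d$, or working over the field of rational functions in $a,b,c,d$), and the only point that truly needs care is the parity bookkeeping --- reconciling the period-$4$ pattern of the diagonal shift $\epsilon_k$ with the period-$2$ parity of the faces appearing in \eqref{Tsystem}. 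Once that is set up cleanly, the whole content of the lemma rests on the elementary second-difference identity for $\lfloor k^2/4\rfloor$.
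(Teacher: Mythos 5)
Your proposal is correct and follows essentially the same route as the paper, which proves the lemma by direct substitution of the closed form into the octahedron recurrence \eqref{Tsystem} together with a case check on $k$ mod $4$; your version merely organizes that check as an induction, reducing it to the second-difference identity for $\lfloor k^2/4\rfloor$ and the parity bookkeeping for $t_{i,j}$, all of which is accurate. No gaps.
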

\begin{proof} By direct substitution into the octahedron recurrence \eqref{Tsystem}, 
and inspection of the cases $k=0,1,2,3$ mod 4.
\end{proof}

\subsection{Density: exact derivation}
As explained before, we may consider various density functions $\rho^{(\epsilon,\eta)}_{i,j,k}$ that measure the average
$\langle 1-D_{\epsilon,\eta}\rangle_{i,j,k}$ on the face $(\epsilon,\eta)$ of the dimer model on ${\mathcal A}_{i,j,k}$.
For odd $k$ this is equal to the average $\langle 1-D_{\epsilon-i,\eta-j}\rangle_{0,0,k}$ of the dimer model on
${\mathcal A}_{0,0,k}$ (in which the central face is of $a$ type). 

Let us consider only even faces of type $a$ or $b$, namely $\epsilon=\eta=0$ or $1$, and $k$ odd.
Then the density $\rho^{(0,0)}(x^{-1},y^{-1},z)\vert_{x^{\rm even}y^{\rm even}}$ generates $\langle 1-D_{i,j}\rangle_{0,0,k}$ on 
$a$-type faces with both $i,j$ even, while $x y \left(\rho^{(1,1)}(x^{-1},y^{-1},z)\vert_{x^{\rm even}y^{\rm even}}\right)$
generates $\langle 1-D_{i,j}\rangle_{0,0,k}$ on $b$-type faces  with both $i,j$ odd. 
By considering odd powers of $x,y$ instead, we have also access to averages of $1-D_{i,j}$ for ${\mathcal A}_{1,1,k}$ 
(in which the central face is of $b$ type).

Now that we have a solution for the $T$-system, we can directly compute the ratios in the recurrence relation
for the density (\ref{densityequation}). We note that the ratio $L_{i,j,k}$ is periodic (as well as $R_{i,j,k}$, from the relation
$L_{i,j,k}+R_{i,j,k}=1$). We have indeed an obvious periodicity on the constant
$k$ planes. Defining $e_1=(2,0,0)$ and $e_2=(0,2,0)$, we have that $L_{(i,j,k)}=L_{(i,j,k)+me_1+ne_2}$ where $m,n\in\Z$
(the same for $R_{i,j,k}$).
However, by using the exact solution \eqref{exasol22}, we find another less obvious periodicity in the direction 
$e_3=(1,1,2)$ as well. This is summarized in the following:

\begin{lemma}\label{perio22lem}
The coefficients $L_{i,j,k},R_{i,j,k}$ corresponding to the $2\times 2$ periodic solution \eqref{exasol22} of the $T$-system
have the following periodicity:
$$L_{(i,j,k)}=L_{(i,j,k)+me_1+ne_2+pe_3}\qquad (m,n\in \Z,p\geq 0) . $$
and similarly for $R_{i,j,k}=1-L_{i,j,k}$.
\end{lemma}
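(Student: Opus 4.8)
The plan is to reduce the claimed periodicity to explicit formulas for $L_{i,j,k}$ and $R_{i,j,k}$ obtained directly from the exact solution \eqref{exasol22}. First I would substitute \eqref{exasol22} into the definition \eqref{LRdef}. Writing $\mathbf{q}_k=\lfloor k/2\rfloor\lfloor (k+1)/2\rfloor$ and $\mathbf{r}_k=\lfloor (k-1)/2\rfloor\lfloor k/2\rfloor$ for the exponents appearing in \eqref{exasol22}, the prefactor $\left(\tfrac{a^2+b^2}{cd}\right)^{\mathbf{q}_k}\left(\tfrac{c^2+d^2}{ab}\right)^{\mathbf{r}_k}$ depends only on $k$, so in the ratio $L_{i,j,k}=\tfrac{T_{i+1,j,k}T_{i-1,j,k}}{T_{i,j,k+1}T_{i,j,k-1}}$ the prefactors of $T_{i+1,j,k}$ and $T_{i-1,j,k}$ are identical and combine, while those of $T_{i,j,k\pm 1}$ combine; the net power of each of $a^2+b^2$, $c^2+d^2$, $ab$, $cd$ is determined by the second-difference-type combination $2\mathbf{q}_k-\mathbf{q}_{k+1}-\mathbf{q}_{k-1}$ and $2\mathbf{r}_k-\mathbf{r}_{k+1}-\mathbf{r}_{k-1}$, each of which is a constant ($\pm 1$) independent of $k$. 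Hence the $k$-dependence of $L_{i,j,k}$ collapses to the $k\bmod 4$ case distinction in \eqref{exasol22} acting on the $t$-factors, i.e.\ $L_{i,j,k}$ is a ratio of products of the four constants $a,b,c,d$ built from $t_{i\pm 1,j}$ versus $t_{i,j}$ with, at most, a $k\bmod 4$ choice of whether the relevant face label is $t_{i,j}$ or $t_{i+1,j+1}$.

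Next I would record the resulting closed forms: for each parity class of $(i,j)$ and each value of $k\bmod 4$, $L_{i,j,k}$ is one of a finite list of expressions such as $\tfrac{ab}{c^2+d^2}$, $\tfrac{cd}{a^2+b^2}$, $\tfrac{a^2+b^2}{\,\cdot\,}$, etc.\ — the point being only that the formula depends on $(i,j,k)$ solely through $(i\bmod 2,\ j\bmod 2,\ k\bmod 4)$. Periodicity under $e_1=(2,0,0)$ and $e_2=(0,2,0)$ is then immediate since these shifts preserve $i\bmod 2$, $j\bmod 2$ and $k$ exactly (this is the "obvious" periodicity already noted in the text). For the shift by $e_3=(1,1,2)$, note that it changes $i\bmod 2$ and $j\bmod 2$ each by $1$, and $k\bmod 4$ by $2$; I would check that the simultaneous flip $t_{i,j}\leftrightarrow t_{i+1,j+1}$ (which is exactly the ambiguity the $k\bmod 4$ case split encodes, since $a\leftrightarrow b$ and $c\leftrightarrow d$ under $(i,j)\mapsto(i+1,j+1)$) is undone by the shift $k\mapsto k+2$ in the case distinction of \eqref{exasol22}. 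Concretely, $k=0,1\bmod 4\mapsto k+2=2,3\bmod 4$ swaps which branch is selected, and the branch swap is precisely $t_{i,j}\to t_{i+1,j+1}$; combined with $(i,j)\to(i+1,j+1)$ this returns the same face label. One must also verify that the $a^2+b^2$ versus $c^2+d^2$ factors, and the $ab$ versus $cd$ factors, are exchanged consistently under this combined operation, which again follows because $\lfloor (k+2)/2\rfloor\lfloor (k+3)/2\rfloor$ relates to $\lfloor k/2\rfloor\lfloor(k+1)/2\rfloor$ and to $\lfloor(k+1)/2\rfloor\lfloor(k+2)/2\rfloor$ in a way that swaps the roles of $\mathbf{q}$ and $\mathbf{r}$ at the level of the relevant second differences. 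The restriction $p\ge 0$ is only needed so that $k+2p\ge 0$ keeps us in the domain $k\ge 0$ where \eqref{exasol22} is valid.

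The main obstacle I expect is purely bookkeeping: one must lay out the four $(i,j)$-parity classes times four residues $k\bmod 4$ and confirm that the combined map "shift by $e_3$" is a bijection on this $4\times 4$ table of expressions that fixes $L$ pointwise. There is a small subtlety in that $L$ depends on $T_{i\pm 1,j,k}$ and $T_{i,j,k\pm 1}$, so when applying the $e_3$-shift one is comparing $L$ at $(i,j,k)$ with $L$ at $(i+1,j+1,k+2)$, and the latter involves $T$ at neighbours of $(i+1,j+1,k+2)$; I would make sure the branch-selection in \eqref{exasol22} is applied consistently at each of these six lattice points. Once the table is written, each entry comparison is a one-line check, and $R_{i,j,k}=1-L_{i,j,k}$ inherits the periodicity automatically. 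A clean way to present this, avoiding the full table, is to observe that $L_{i,j,k}$ can be written uniformly as $L_{i,j,k}=\dfrac{\widehat t_{i+1,j,k}\,\widehat t_{i-1,j,k}}{\widehat t_{i,j,k+1}\,\widehat t_{i,j,k-1}}\cdot(\text{constant})$ where $\widehat t_{a,b,\ell}$ denotes the $\ell\bmod 4$-dependent face label from \eqref{exasol22}, and then verify the single identity $\widehat t_{a+1,b+1,\ell+2}=\widehat t_{a,b,\ell}$ together with the analogous invariance of the constant prefactor; this reduces the proof to one displayed identity plus the remark that $e_1,e_2$ act trivially on all residues.
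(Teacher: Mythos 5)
Your overall strategy is the same as the paper's: the printed proof is just ``by inspection'' of the explicit solution \eqref{exasol22}, and your organization of that inspection --- write $L_{i,j,k}$ from \eqref{LRdef} as a prefactor ratio times a ratio of the $k\bmod 4$--dependent face labels $\widehat t_{a,b,\ell}$, note that $e_1,e_2$ act trivially on all residues, and reduce the $e_3$-invariance to the single identity $\widehat t_{a+1,b+1,\ell+2}=\widehat t_{a,b,\ell}$ (which holds by $t_{a+2,b+2}=t_{a,b}$ together with the branch swap under $\ell\mapsto\ell+2$) --- is a correct and clean way to carry it out. The skeleton of the argument closes, and your final ``single identity'' presentation is the right one.

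Two of your intermediate computations are, however, wrong as stated, and one of them would contradict the lemma if it were true, so they need repair. With $\mathbf{q}_k=\lfloor k/2\rfloor\lfloor(k+1)/2\rfloor$ and $\mathbf{r}_k=\lfloor(k-1)/2\rfloor\lfloor k/2\rfloor=\mathbf{q}_{k-1}$, the second differences $2\mathbf{q}_k-\mathbf{q}_{k+1}-\mathbf{q}_{k-1}$ and $2\mathbf{r}_k-\mathbf{r}_{k+1}-\mathbf{r}_{k-1}$ are \emph{not} constants independent of $k$: they equal $(0,-1)$ for $k$ even and $(-1,0)$ for $k$ odd. Consequently the prefactor does contribute a $k$-dependence to $L_{i,j,k}$, namely a factor $ab/(c^2+d^2)$ for $k$ even and $cd/(a^2+b^2)$ for $k$ odd, so the $k$-dependence does not ``collapse to the $t$-factors''; what saves you is that this extra dependence is only through $k\bmod 2$, which is determined by $k\bmod 4$ and is preserved by all three shifts, so the conclusion that $L_{i,j,k}$ depends on $(i,j,k)$ only through $(i\bmod 2,\,j\bmod 2,\,k\bmod 4)$ --- and hence the periodicity --- still follows. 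Similarly, under the $e_3$-shift $k\mapsto k+2$ the parity of $k$ is unchanged, so the pair of second differences (hence the prefactor factor) is simply \emph{unchanged}; there is no ``swap of the roles of $\mathbf{q}$ and $\mathbf{r}$'' as you assert, and indeed there must not be one, since a genuine swap would replace $cd/(a^2+b^2)$ by $ab/(c^2+d^2)$, which are different in general and are each already invariant under the relabeling $a\leftrightarrow b$, $c\leftrightarrow d$, so nothing would compensate. With these corrections your computation reproduces, e.g., $L_{1,0,1}=\frac{cd}{a^2+b^2}\cdot\frac{a\cdot a}{c\cdot d}=\frac{a^2}{a^2+b^2}=\sigma$, consistent with the text, and the lemma follows as you outline.
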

\begin{proof} By inspection.
\end{proof}

Let us first derive $\rho^{(0,0)}(x,y,z)$.
Define the following partial generating functions:
\begin{equation}
 \rho^{(i_0,j_0,k_0)}(x,y,z)=\sum_{m,n\in \Z,p\geq 0}\rho^{(0,0)}_{(i_0,j_0,k_0)+me_1+ne_2+pe_3}\, x^{i_0+2m+p}y^{j_0+2n+p}z^{k_0+2p} \label{genfun2}
\end{equation}
where $(i_0,j_0,k_0)$ are in the unit cell for the periodicities of Lemma \ref{perio22lem}, 
namely $(i_0,j_0,k_0)\in P=\{(0,0,1),(1,1,1),(1,0,0),(0,1,0)\}$. 
In terms of these the full density generating function
is simply 
$$\rho^{(0,0)}(x,y,z)=\sum_{(i_0,j_0,k_0)\in P}\rho^{(i_0,j_0,k_0)}(x,y,z)$$
Using the recurrence relation for the density (\ref{densityequation}) and the initial conditions $\rho^{(0,0)}_{i,j,1}=\delta_{i,0}\delta_{j,0}$,
$\rho^{(0,0)}_{i,j,0}=0$, we end up with the following linear system:
\begin{equation}
\left(
\begin{array}{cccc}
 \frac{1}{z} & \frac{\left(x^2+1\right) (\tau -1)}{x} & -\frac{\left(y^2+1\right) \tau }{y} &
   z \\
   z & \frac{\left(y^2+1\right) (\tau -1)}{y} & -\frac{\left(x^2+1\right) \tau }{x} &
   \frac{1}{z} \\
 -\frac{\left(y^2+1\right) \sigma }{y} & \frac{1}{z} & z & \frac{\left(x^2+1\right) (\sigma
   -1)}{x} \\
 -\frac{\left(x^2+1\right) \sigma }{x} & z & \frac{1}{z} & \frac{\left(y^2+1\right) (\sigma
   -1)}{y}
\end{array}
\right)
\left(
\begin{array}{c}
\rho^{(0,0,1)}(x,y,z)\\
\rho^{(0,1,0)}(x,y,z)\\
\rho^{(1,0,0)}(x,y,z)\\
\rho^{(1,1,1)}(x,y,z)
\end{array}
\right)=
\left(
\begin{array}{c}
1\\
0\\
0\\
0
\end{array}
\right)
\label{2x2system}
\end{equation}
\noindent where we have used the parametrizations of weights:
$$\sigma=L_{1,0,1}=\frac{a^2}{a^2+b^2},\qquad  \tau=R_{0,0,0}=\frac{c^2}{c^2+d^2}\, ,$$ 
or equivalently:
$$a=b\sqrt{\frac{\sigma}{1-\sigma}}, \qquad {\rm and}\qquad  c=d\sqrt{\frac{\tau}{1-\tau}}\, .$$ 
It is worth noticing that even though we started with 4 arbitrary values  $a,b,c,d$ in our initial data in the $2\times2$
torus, the corresponding system for the density only depends on 2 parameters, the ratios $a/b$ and $c/d$. 
As we saw in the case of the uniform initial data, the arctic curve is determined by the zero locus of the denomiator
of the density functions. 
Here, this denominator is given by the determinant of the above system of 4 equations. 
Defining 
\begin{equation}\label{aldef}
\alpha=16\sigma(1-\sigma)\tau(1-\tau)=\frac{16}{\left(\frac{a}{b}+\frac{b}{a}\right)^2\left(\frac{c}{d}+\frac{d}{c}\right)^2}
\end{equation}
then, up to a factor of $xyz$, the determinant reads:
\small{
\begin{equation}\label{denomD}
D(x,y,z)=\frac{\alpha}{16}  (x^2-y^2)^2(x^2 y^2-1)^2z^4
-x^2 y^2 \left(x y-z^2\right)\left(y-xz^2\right)\left(x-y z^2\right) \left(1-xyz^2\right)
\end{equation}
}
\normalsize
\vskip-12pt
\noindent 
The actual density $\rho^{(0,0)}(x,y,z)$ however depends explicitly on $\sigma,\tau$, not just on $\alpha$. It has the form 
$\rho^{(0,0)}(x,y,z)=\frac{Q^{(0,0)}(x,y,z)}{D(x,y,z)}$,
with $D$ as in \eqref{denomD}, and $Q^{(0,0)}$ the following polynomial of $x,y,z$:
\footnotesize{\begin{eqnarray}Q^{(0,0)}(x,y,z)&=& x y z \left\{(-x^2 y^2 (1 - z^2) ( z (x (1 - x y z^2) + y (x y - z^2)) + x y (1 - z^4))\right. \nonumber \\
&&-x  y (x - y) (1 - x y) z^2 (x (1 - x y z^2) + y (x y - z^2)) \sigma + x^2  y^2 (x - y) (1 - x y) z (1 - z^4) \tau  \nonumber \\
&&\left. + x y (x^2 - y^2) (1 - x^2 y^2) z^2 (1 - z^2)\sigma\tau 
- (x - y) (x^2 - y^2) (1 - x y) (1 - x^2 y^2) z^3 \sigma\tau (1 -\tau)\right\} \label{numQ}
\end{eqnarray}}
\normalsize

Similarly, the density $\rho^{(1,1)}(x,y,z)$
solves the same system \eqref{2x2system}, but with the r.h.s. replaced by $(0,x y,0,0)^t$, due to the initial conditions
$\rho^{(1,1)}_{i,j,1}=\delta_{i,1}\delta_{j,1}$ and $\rho^{(1,1)}_{i,j,0}=0$. Alternatively, $\rho^{(1,1)}/(x y)$ is
obtained by interchanging $a\leftrightarrow b$ and $c\leftrightarrow d$ in the expression for $\rho^{(0,0)}$, namely by performing
the substitutions $\sigma\to 1-\sigma$ and $\tau\to 1-\tau$. These produce a new numerator $Q^{(1,1)}$, 
but leave the denominator $D(x,y,z)$ unchanged.

\subsection{Arctic curve}
Using the same procedure as for the uniform initial data, we expand the denominator $D(x,y,z)$ around the critical point $x=y=z=1$.
Taking $x\rightarrow1-tx$, $y\rightarrow1-ty$ and $z\rightarrow1+t(ux+vy)$, we find 
$D(1-tx,1-ty,1+t(ux+vy))=t^4 H(x,y)+O(t^5)$ at leading order in $t$ (which in this case turns out to be $t^4$). 
Imposing again $H(x,y)=0$ and $\frac{\partial}{\partial x}H(x,y)=0$, we can eliminate $x$ and $y$. 
We finally get the singularity curve $P_\al(u,v)=0$, where:
\footnotesize{
\begin{equation}\label{fortress}
\begin{split}
 P_\al(u,v)=&(1-\alpha)^3+16 \alpha ^2 (u^8+v^8)+8(4-5 \alpha ) \alpha(u^6+v^6)
 +32 \left(\alpha ^2+2 (2-\alpha )^2\right) u^4 v^4\\
 &+\left((4-\alpha )^2-24 \alpha \right) (1-\alpha ) \left(u^4+v^4\right)+8\left(6 \alpha ^2-(4-\alpha )^2\right) u^2 v^2 
 \left(u^2+v^2\right)\\
 &+2\left(48-(4-\alpha )^2\right) (1-\alpha ) u^2 v^2-2 (1-\alpha )^2 (4-\alpha )(u^2+v^2)+64(2-\alpha)\alpha u^2v^2(u^4+v^4)
\end{split}
\end{equation}}

\normalsize
\vskip-1pt
\noindent
Notice that this polynomial depends only on the single parameter $\al$ of \eqref{aldef}. 
Let us examine a few limiting cases of interest.

\begin{figure}
        \centering
             \hbox{ 
                \includegraphics[width=3.85 cm]{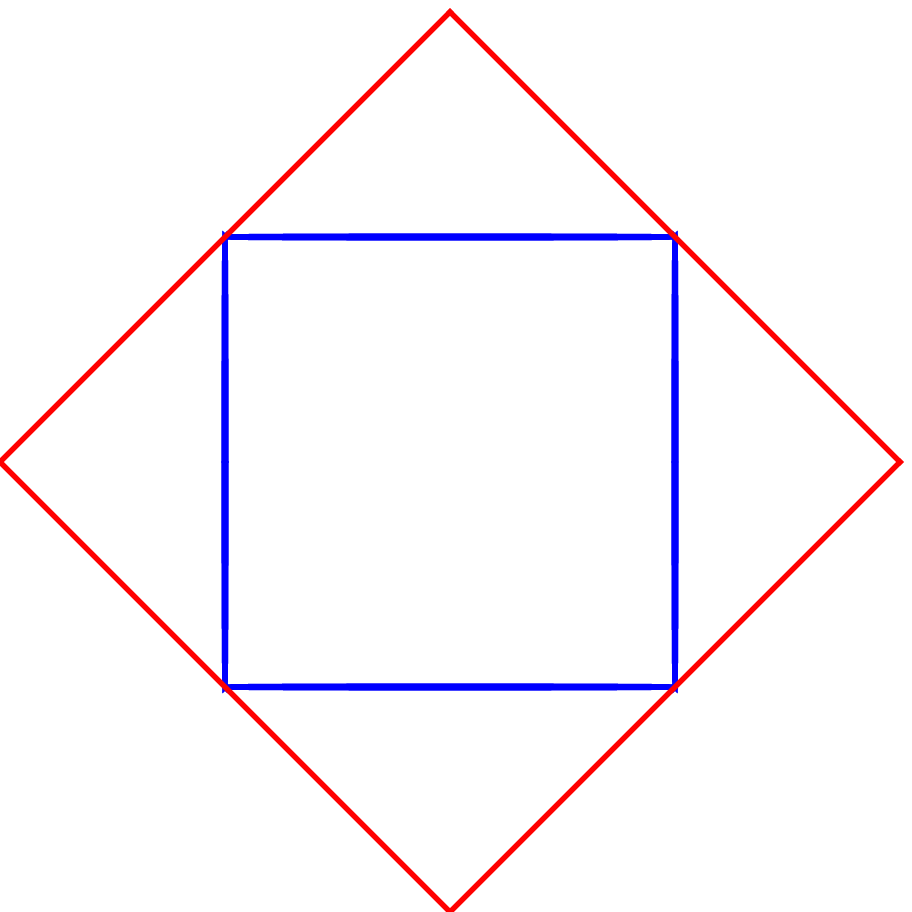}
               \put(-70,-15){$\alpha\rightarrow0$}
              \hskip .2cm
               \includegraphics[width=3.85 cm]{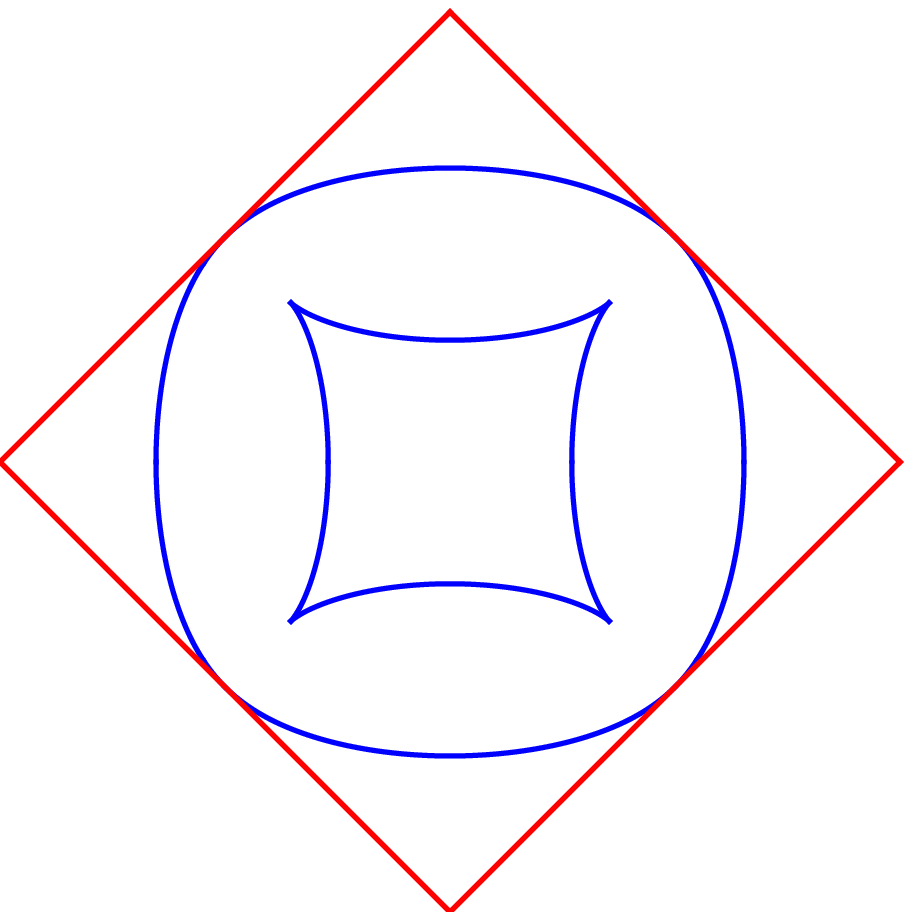}
               \put(-70,-15){$\alpha=1/2$}
	      \hskip .2cm
              \includegraphics[width=3.85 cm]{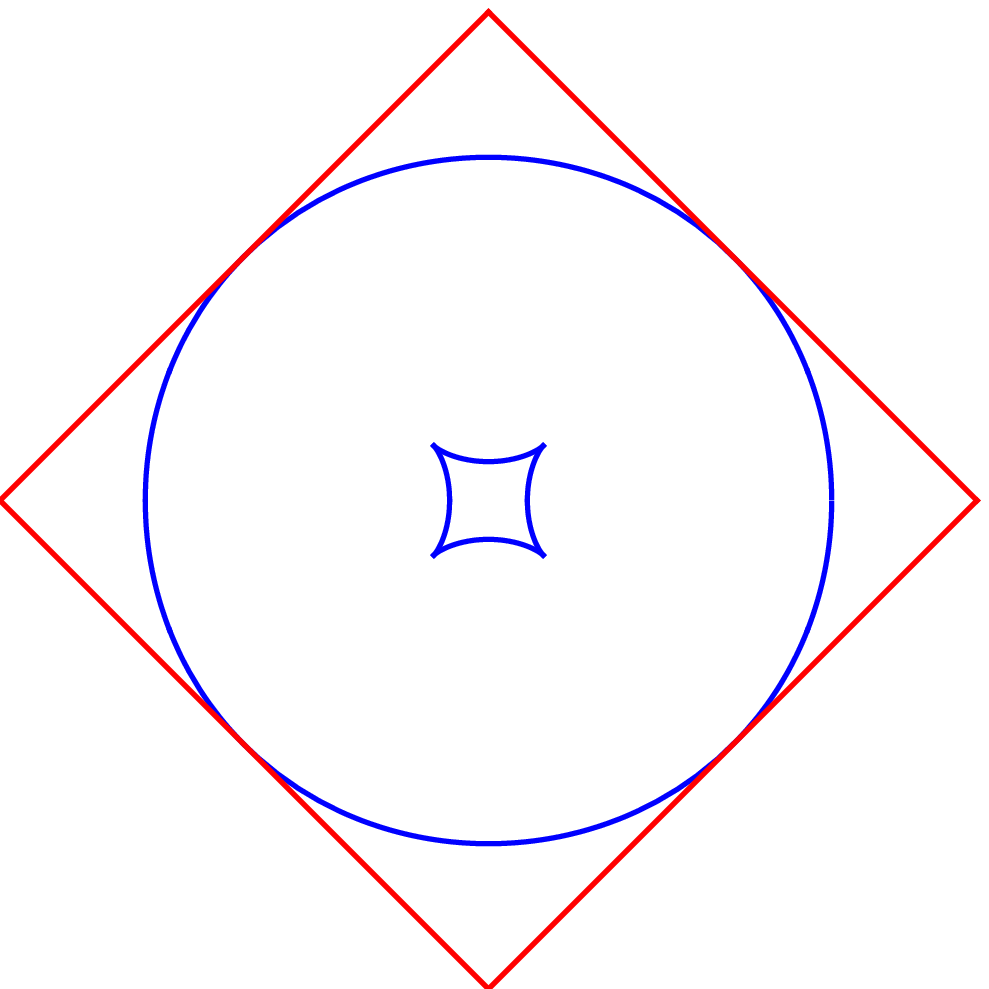}
              \put(-70,-15){$\alpha=19/20$}
              \hskip .2cm
              \includegraphics[width=3.85 cm]{ArcticCircle}
              \put(-69,-15){$\alpha=1$}}
        \caption{\small Arctic curves for the periodic initial data corresponding to different values of $\alpha$.}
        \label{periodicarcticcurve}
\end{figure}

For $\al=1$, the curve degenerates into:
$$P_1(u,v)= 8(u^2+v^2)^3(2u^2+2v^2-1) $$
and we recover the same result as in the uniform case ($\sigma=\tau=1/2$), 
namely the arctic circle $2u^2+2v^2=1$ of Fig.\ref{Arcticcircle}.

For $\al=0$, the curve degenerates into:
$$P_0(u,v)=(4u^2-1)^2(4v^2-1)^2$$
namely into the square with edges $u=\pm \frac{1}{2}$, $|v|\leq \frac{1}{2}$ and $v=\pm \frac{1}{2}$, $|u|\leq \frac{1}{2}$,
inscribed into the domain $|u|+|v|=1$. 

We have represented two more somewhat generic cases, with $\alpha=\frac{1}{2}$ and $\al=1-\frac{1}{20}$ in 
Fig. \ref{periodicarcticcurve}. 
In addition to the actual external arctic curve tangent to the square $|u|+|v|=1$ at the 4 points
$(u,v)=(\pm\frac{1}{2},\pm\frac{1}{2})$, we note the existence of an internal curve with 4 cusps 
at positions $(u,v)=(\pm \frac{\sqrt{1-\al}}{2},\pm \frac{\sqrt{1-\al}}{2})$ along the $u=\pm v$ lines. 
In addition to the frozen and temperate regions, we obtain a new ``bubble" inside, often called the {\it facet} domain.
The facet domain disappears exactly at $\al=1$, in which case 
we are left with simply the arctic circle, whereas it is ``maximal" at $\alpha=0$, where it becomes an 
inscribed square, and gets identified with the external arctic curve, so that the temperate region is squeezed and disappears.
The parameter $\al$ clearly governs the size of this facet domain.

The phase structure with a central facet shown in Fig.\ref{periodicarcticcurve}
coincides with that found for the ``square-octagon fortress" of Example 5.2 of \cite{KO} (see also Figure 18), 
for the value $\alpha=16/25$. This model indeed corresponds to a uniform solution of the octahedron equation,
but with a different initial data stepped surface, namely $T_{i,j,k_{i,j}}=1$, with
$$ k_{i,j}=\left\{ \begin{matrix} 1& {\rm if}\ i+j=0\ {\rm mod}\ 2\\
0 & {\rm if}\ (i,j)=(0,1)\ {\rm mod}\ 2\\
2 & {\rm if}\ (i,j)=(1,0)\ {\rm mod}\ 2
\end{matrix}\right.$$
which in turn corresponds to $a=b=c=1$, $d=2$.

\subsection{Physical interpretation and phase diagram/limit shape}
\begin{figure}
\centering
\includegraphics[width=7.cm]{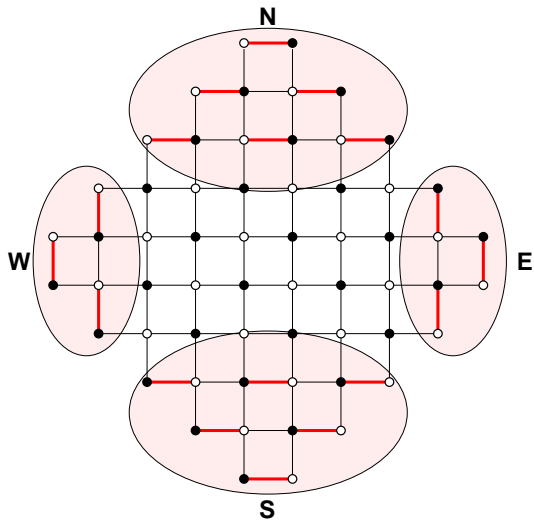}
\caption{\small The four frozen dimer configurations pertaining to the four (N,S,E,W) corners of the Aztec diamond graph.}
\label{fig:corners}
\end{figure}

\subsubsection{Uniform case}
We have seen that the density $\rho_{i,j,k}$ \eqref{density} is a measure of the expectation value, within the statistical ensemble of dimer 
configurations of the Aztec domain of size $k$,
of the observable $1-D_{i,j}$, where $D_{i,j}$ is the number of dimers occupying the edges around a given face $(i,j)$
of the domain with fixed parity of $i+j$. It therefore measures the imbalance between the empty squares configurations ($D_{i,j}=0$) and the 
maximally occupied ones ($D_{i,j}=2$).

The asymptotics of the coefficients  $\rho_{i,j,k}^{(0,0)}$ for large $i,j,k$ with $i/k=u$ and $j/k=v$ 
of the density generating series $\rho^{(0,0)}(x,y,z)$ \eqref{genfuntrivialdata} for the trivial initial data 
can be extracted by using for instance general theorems of Baryshnikov and Pemantle  
\cite{BP} (Theorem 3.7).
The result reads:
\begin{equation}\label{genfuntrivialdata2}
 \rho^{(0,0)}_{i,j,k}=\sim \frac{2}{\pi k} 
\frac{\delta^{[2]}_{i+j+k,1}}{\sqrt{1-2(u^2+v^2)} }
\end{equation} 
where we use the notation $\delta^{[p]}_{i,j}=\delta_{i-j,0\ {\rm mod}\, p}$.
This scaling function $\rho_{i,j,k}\sim \nu(i,j,k)$ appeared in \cite{CEP} (see p.26, where it is found to obey the differential equation 
$\frac{\partial^2}{\partial z^2}\nu=\frac{1}{2}\Big(\frac{\partial^2}{\partial x^2}+\frac{\partial^2}{\partial y^2}\Big)\nu$).

\begin{figure}
 \includegraphics[width=8.cm]{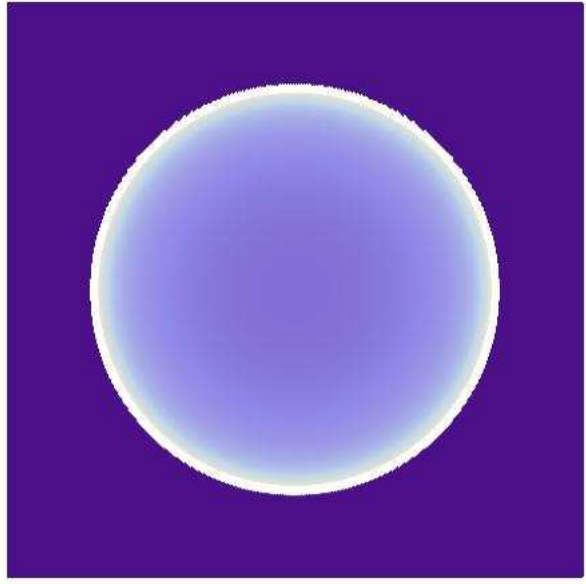}
\caption{\small Density profile for the coefficients $\rho_{i,j,k}^{(0,0)}$ given in \eqref{genfuntrivialdata2} 
for $k=211$ and $-k\leq i,j \leq k$.}
\label{figdensitycoeff}
\end{figure}

\noindent We display in Fig. \ref{figdensitycoeff} the density profile for this asymptotic value of $\rho_{i,j,k}^{(0,0)}$.

\begin{figure}
 \includegraphics[width=8.cm]{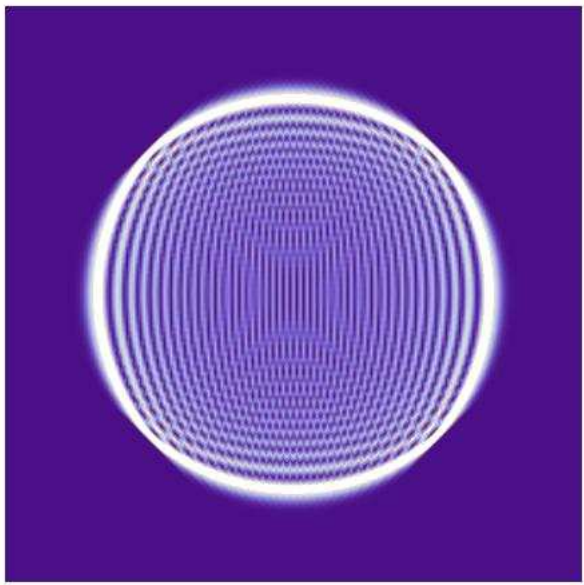}
\caption{\small Density profile for the trivial initial data. The dark color corresponds to $\rho_{i,j,k}=0$ and brighter color to larger 
$\rho_{i,j,k}$. The scale of colors is arbitrary.}
\label{densitycircle}
\end{figure}

The standard explanation for the arctic circle in the uniform initial data case is that the dimer configurations 
that contribute to the partition function $T_{i,j,k}$ tend
to be in a fundamental crystalline state in the vicinity of the corners of the square domain $|i|+|j|\leq |k|$. There are
four distinct such states, each corresponding to a (N,S,E,W) corner, characterized by an occupation number $D_{i,j}=1$
on each face $(i,j)$ (see Fig.\ref{fig:corners}).
Away from the corners, the dimer model has a non-trivial entropy,
and the competition between order and disorder gives rise to a separating critical curve in the continuum thermodynamic limit
when $k\to\infty$ with $\frac{i}{k}=u$ and $\frac{j}{k}=v$ fixed, between a frozen phase (next to the corners) and a so-called temperate 
phase (in the center). 
Outside of the critical curve, the density decays exponentially as $k\to\infty$
to $0$, as each square tends to be occupied by a single dimer, while
inside the curve it decreases as a power law $\propto \frac{1}{k}$, whereas the coefficient tends to the non-zero function 
$2/(\pi\sqrt{1-2(u^2+v^2)})$, singular on the arctic curve. 
This function indicates a growing local disorder
in the dimer configurations, maximum at the center of the Aztec domain.
We have represented the values of the rescaled density function $k|\rho_{i,j,k}^{(0,0)}|$ for fixed $k=85$, 
$-k\leq i,j \leq k$ and $i+j+k=1$ mod 2.

\subsubsection{$2\times 2$ periodic case}\label{22periosec}

To understand the emergence of a new (facet) central phase, let us first consider the simple case $\sigma=0$. 
We saw that in that case, the arctic curve degenerates to an inscribed square and the temperate region disappears.
This is attained
for instance by fixing $b,c,d>0$ and letting $a\to 0$ in the various density functions. 

\begin{figure}
\centering
\includegraphics[width=5.cm]{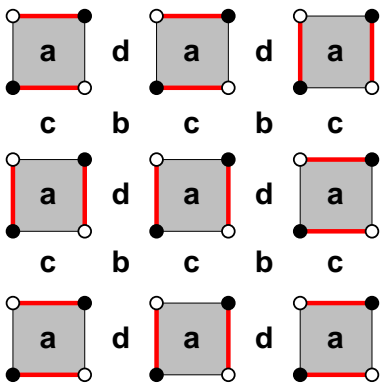}
\caption{\small The typical dominant dimer configurations for odd $k$ when $a\to 0$: each $a$-type face is occupied by two dimers,
with two arbitrary choices of orientation on each $a$-type face.}
\label{fig:ground22}
\end{figure}

It is clear that if $a$ is very small, the Boltzmann weight of maximally occupied dimer configurations around  the
$a$ faces becomes the dominant contribution to the partition function (see Fig.\ref{fig:ground22} for an illustration). 
We expect therefore a phase where the $a$ type
faces are occupied by two dimers, with arbitrary (vertical or horizontal) orientation. This phase is globally crystalline, from
the pinning of the dimers to the $a$-type faces that form a square sublattice, but retains some non-trivial entropy, from the
arbitrary orientation of the pair of dimers at each site, hence the name facet. 
Note that this also imposes another square sublattice of empty faces.
The corresponding value of $\langle 1-D\rangle$ for $k$ odd is $-1$ on the former sublattice, and $1$ on the latter.

\begin{figure}
\centering
\includegraphics[width=8.cm]{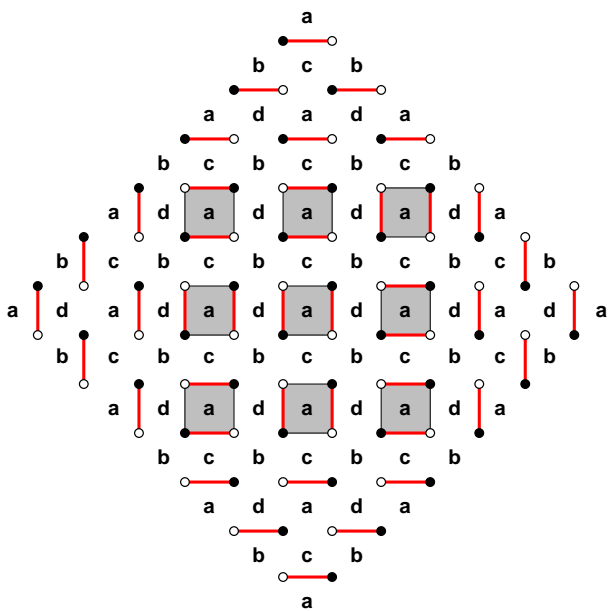}
\caption{\small The typical dominant dimer configurations for $k=7$ when $a\to 0$: each $a$-type face in the inscribed square region
is occupied by two dimers,
with two arbitrary choices of orientation on each $a$-type face (shaded).}
\label{fig:insquare}
\end{figure}

More precisely, for $\sigma=0$ the solution of the system \eqref{2x2system} and its companion for $\epsilon=\eta=1$ above lead to:
\footnotesize{
\begin{eqnarray*}\rho^{(0,0)}(x,y,z)&=& \frac{z}{(xy -z^2)(x-yz^2)(y-xz^2)(1-xyz^2)} \times \\
&&\left\{ x^2 y^2 - x y z^2 - x^3 y z^2 - x^2 y^2 z^2 - x y^3 z^2 - 
 x^3 y^3 z^2 - x^2 y z^3 - x y^2 z^3 - x^3 y^2 z^3 \right.  \\
 &&  \left. - x^2 y^3 z^3 +
 x^2 z^4 + x^2 y^2 z^4 + x^2 y^4 z^4 + x^2 y z^5 + x y^2 z^5 + 
 x^3 y^2 z^5 + x^2 y^3 z^5 + x^2 y^2 z^6 \right\}\\
&&+\tau\frac{z^3(x^2-y^2)(1-x^2y^2)}{(xy -z^2)(x-yz^2)(y-xz^2)(1-xyz^2)} 
\end{eqnarray*}
}
\normalsize
\vskip-12pt
\noindent 
and:
\footnotesize{
\begin{eqnarray*}
\frac{\rho^{(1,1)}(x,y,z)}{x y}&=&  \frac{z}{(xy -z^2)(x-yz^2)(y-xz^2)(1-xyz^2)} \times \\
&&\left\{ x^2 y^2 + x^2 y z + x y^2 z + x^3 y^2 z + x^2 y^3 z + y^2 z^2 + 
 x^2 y^2 z^2 + x^4 y^2 z^2 - x^2 y z^3  \right.  \\
&&  \left.- x y^2 z^3- x^3 y^2 z^3 - 
 x^2 y^3 z^3 - x y z^4 - x^3 y z^4 - x^2 y^2 z^4 - x y^3 z^4 - 
 x^3 y^3 z^4 + x^2 y^2 z^6 \right\}\\
&&-\tau\frac{z^3(x^2-y^2)(1-x^2y^2)}{(xy -z^2)(x-yz^2)(y-xz^2)(1-xyz^2)} 
\end{eqnarray*}
}
\normalsize
\vskip-12pt
\noindent 
with $\tau=\frac{c^2}{c^2+d^2}$.
As explained above, the final generating function for $\langle 1-D_{i,j}\rangle_{0,0,2k-1}$ for $i+j$ even is obtained by  
extracting the even powers of $x,y$ from $\rho^{(0,0)}$ and the odd powers of $x,y$ from $\rho^{(1,1)}/(x y)$,
namely by forming:
\begin{eqnarray*}U(x,y,z)&=&\frac{1}{4}(A(x,y,z)+A(-x,y,z)+A(x,-y,z)+A(-x,-y,z))\\
&&+\frac{1}{4}(B(x,y,z)-B(-x,y,z)-B(x,-y,z)+B(-x,-y,z))
\end{eqnarray*}
where $A=\rho^{(0,0)}$, and $B=\rho^{(1,1)}/(x y)$ above. 
Similarly, the generating function for $\langle 1-D_{i,j}\rangle_{1,1,2k}$ for $i+j$ even is obtained by  
extracting the odd powers of $x,y$ from $\rho^{(0,0)}$ and the even powers of $x,y$ from $\rho^{(1,1)}/(x y)$,
namely by forming:
\begin{eqnarray*}V(x,y,z)&=&\frac{1}{4}(A(x,y,z)-A(-x,y,z)-A(x,-y,z)+A(-x,-y,z))\\
&&+\frac{1}{4}(B(x,y,z)+B(-x,y,z)+B(x,-y,z)+B(-x,-y,z))
\end{eqnarray*}

Let us denote by $f_k(x,y)$ the coefficient of $z^k$ in the series expansion of $f(x,y,z)$, and by $[n]_x=\frac{x^n-x^{-n}}{x-x^{-1}}$
and similarly for $y$. Noting the generating functions
\begin{eqnarray*}\sum_{n\geq 0} [n]_x[n]_y z^n &=&\frac{x^2 y^2z(1-z^2)}{(xy -z^2)(x-yz^2)(y-xz^2)(1-xyz^2)}\\
\sum_{n\geq 1} [n+1]_x[n-1]_y z^n&=&\frac{y z^2
 (y + x^2 y + x^4 y - x z - x^3 z - x y^2 z - x^3 y^2 z + x^2 y z^2)}{(xy -z^2)(x-yz^2)(y-xz^2)(1-xyz^2)} 
\end{eqnarray*}
and expressing $U(x,y,z)$ and $V(x,y,z)$ in terms of these,
we finally get:
\begin{eqnarray*}
U_{4k-1}(x,y)&=&[2k]_x[2k]_y-[2k-1]_x[2k-1]_y\\
U_{4k-3}(x,y)&=&\tau ([2k-2]_x[2k]_y-[2k-3]_x[2k-1]_y)\\
&&\qquad +(1-\tau)([2k]_x[2k-2]_y-[2k-1]_x[2k-3]_y)\\
V_{4k-1}(x,y)&=&\tau ([2k-1]_x[2k+1]_y-[2k-2]_x[2k]_y)\\
&&\qquad +(1-\tau)([2k+1]_x[2k-1]_y-[2k]_x[2k-2]_y)\\
V_{4k-3}(x,y)&=&[2k-1]_x[2k-1]_y-[2k-2]_x[2k-2]_y
\end{eqnarray*}
Recall that the $U$'s correspond to the averages in the dimer model on Aztec graphs with a central face of $a$ type, while
the $V$'s correspond to the averages in the dimer model on Aztec graphs with a central face of $b$ type.
The case $U_{4k-1},V_{4k-3}$ display an alternance of $\pm 1$ on (even, even)/(odd,odd) faces.
This is in agreement with the typical dominant configuration represented in Fig.\ref{fig:insquare}, corresponding to $U_7(x,y)$:
the facet occupies exactly the inscribed square  $|i|,|j|\leq 3$, while outside this domain each face is occupied by a single dimer,
i.e. we have four corners frozen in their respective fundamental states with zero entropy.
The cases $U_{4k-3},V_{4k-1}$ also have a central facet square region with the same alternance of $\pm 1$, but have a 
thin boundary region around the square where the averages explicitly depend on $\tau$.

We conclude that asymptotically the density is identically zero outside of the inscribed square $|u|,|v|\leq \frac{1}{2}$, while inside 
it takes finite nonzero values that alternate on two square sublattices. 
The arctic curve is nothing but the exact phase separation, here reduced to the inscribed square $|u|=\frac{1}{2},|v|\leq \frac{1}{2}$
and $|v|=\frac{1}{2},|u|\leq \frac{1}{2}$.

\begin{figure}
        \centering
             \hbox{ 
                \includegraphics[width=3.85 cm]{2x2densitycircle}
               \put(-70,-15){$\alpha=1$}
              \hskip .2cm
               \includegraphics[width=3.85 cm]{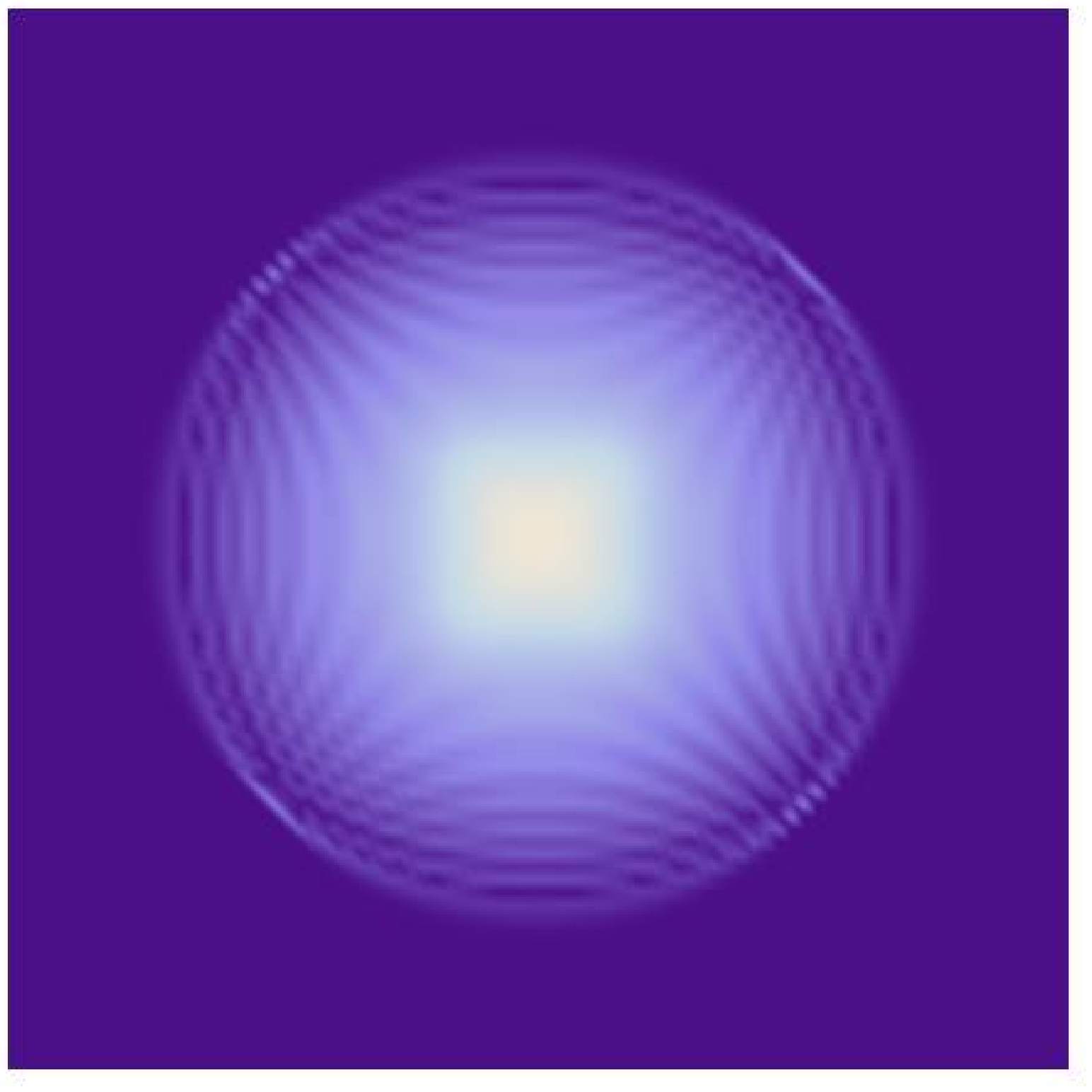}
               \put(-75,-15){$\alpha=0.98$}
	      \hskip .2cm
              \includegraphics[width=3.85 cm]{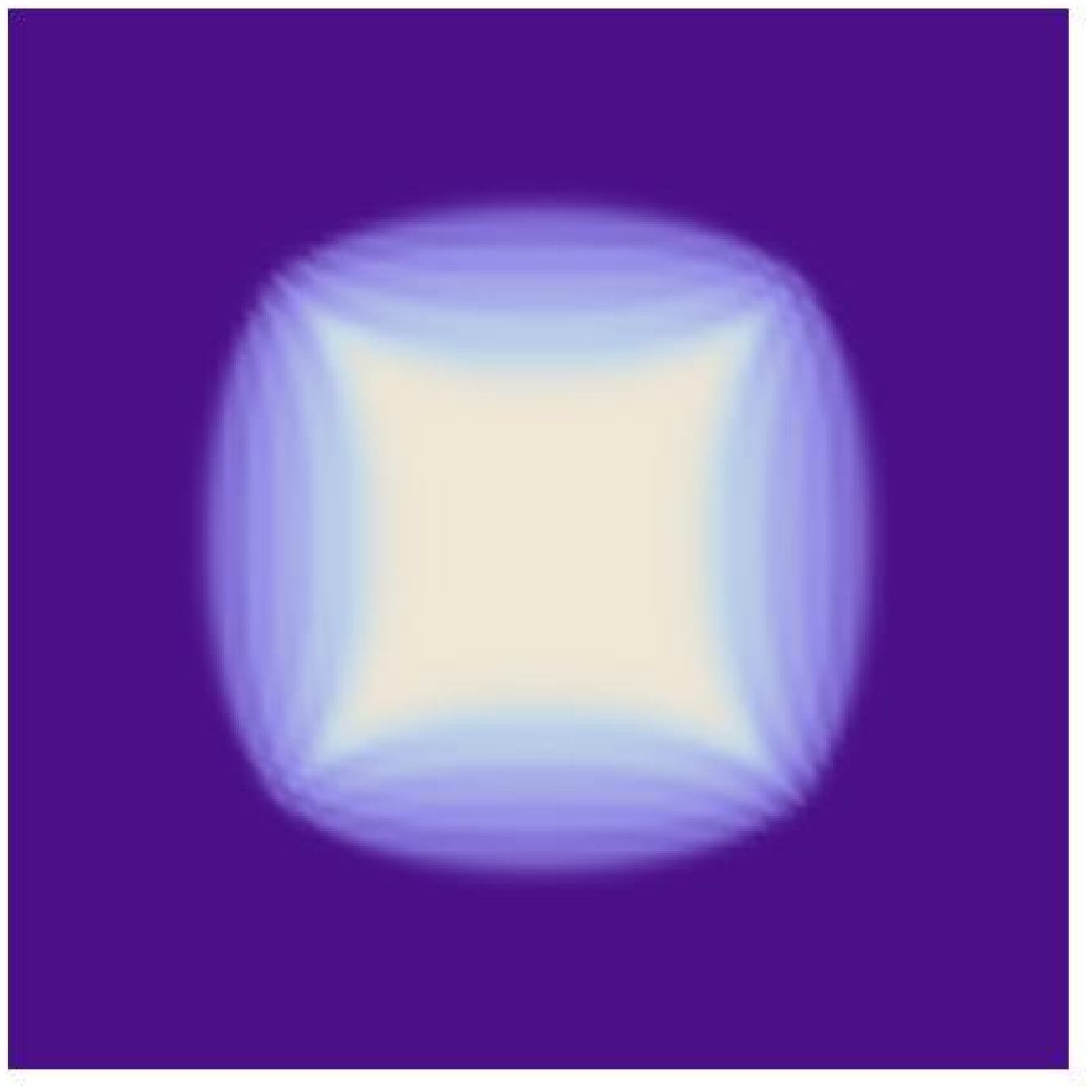}
              \put(-72,-15){$\alpha=0.5$}
              \hskip .2cm
              \includegraphics[width=3.85 cm]{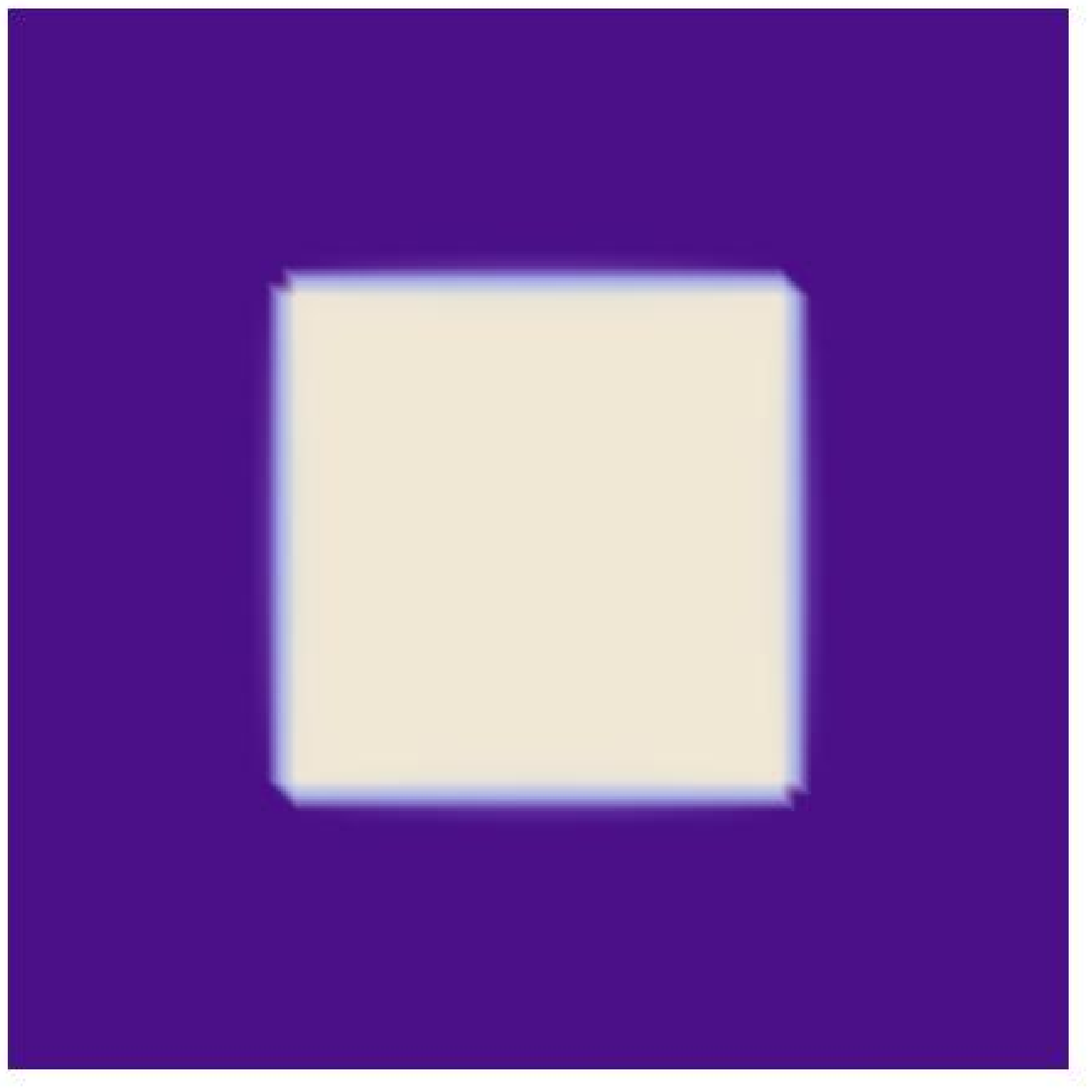}
              \put(-72,-15){$\alpha=0.01$}}
        \caption{\small Density profiles for $\rho_{i,j,k}$ for the periodic initial data corresponding to different values of $\alpha$. We fixed
        $b=c=d=1$ and vary $a$ accordingly to get the $\alpha$ shown in the profiles (see \eqref{aldef})}
        \label{densityprofiles}
\end{figure}

More generally, when $1>\sigma,\tau>0$, we still expect frozen corner phases and a central facet phase induced by the pinning on one
sublattice of configurations of pairs of parallel dimers, corresponding to the smallest weights among $a,b$ and $c,d$. 
The plot of the arctic curve \eqref{fortress} shows that a disordered phase separates the facet 
from the frozen corners. In both the frozen corners and the facet, the convergence of $\rho_{i,j,k}$ for $i/k=u,j/k=v$ fixed 
is exponential in $k$.
As explained above, the behavior of $\rho^{(0,0)}$ in the disordered phase is connected to the singularity $x=y=z=1$ of both
numerator \eqref{numQ} and denominator \eqref{denomD}. We find that the leading orders in the $t$ expansion for
$x\to 1-t x,y\to 1-t y,z\to1-t z$ are respectively $Q^{(0,0)}\sim t $ if $\sigma\neq \frac{1}{2}$, and $Q^{(0,0)}\sim t^2 $ otherwise,
while $D\sim t^4$ in all cases. We deduce that if $\sigma=\frac{1}{2}$ then $\rho_{i,j,k}$ for $i/k=u,j/k=v$ fixed tends to $0$ 
algebraically, as $k^{-1}$, and diverges at the boundary of the temperate zone (both along the facet border and the frozen corners border).
However if $\sigma\neq \frac{1}{2}$, we find that $\rho_{i,j,k}$ tends to a scaling function without any global rescaling.
We display in Fig.\ref{densityprofiles} a picture of the values of $|\rho_{i,j,k}^{(0,0)}|$ for size $k=85$, 
$-k\leq i,j \leq k$ and both $i$ and $j$ even.

\section{Toroidal initial data II: the $m$-toroidal case}

In this section, we introduce the $T$-system with initial data wrapped on a torus involving $4m$ 
arbitrary initial values. This particular choice is exactly solvable, and leads to the exact derivation
of higher degree arctic curves.

\subsection{Exact solution of the $T$-system with $m$-toroidal initial data}

\begin{defn}
Let us consider the following condition on the initial data $\{t_{i,j}\}_{i,j\in\Z}$ of the 
$T$-system \eqref{Tsystem}:
\begin{equation}\label{init2m} t_{i+m,j-m}=t_{i,j}\qquad {\rm and} \qquad 
t_{i+2,j+2}=t_{i,j}\qquad (i,j\in \Z)
\end{equation}
Initial data with this property will be called {\it $m$-toroidal boundary conditions}. 
The corresponding torus of the $\Z^2$ plane is generated by the two vectors
$\vec{e}_1=(m,-m,0)$ and $\vec{e}_2=(2,2,0)$.
\end{defn}

It is easy to show that any solution of the $T$-system with $m$-toroidal boundary conditions satisfies the same 
toroidal conditions, namely that:
$T_{i+m,j-m,k}=T_{i,j,k}$ and
$T_{i+2,j+2,k}=T_{i,j,k}$
for all $i,j\in\Z$ and $k\in \Z_+$, and $i+j+k=1$ mod 2.

Quite remarkably, there is an explicit expression for the solution of the $T$-system for $m$-toroidal boundary conditions. 

\begin{figure}
\centering
\includegraphics[width=10.cm]{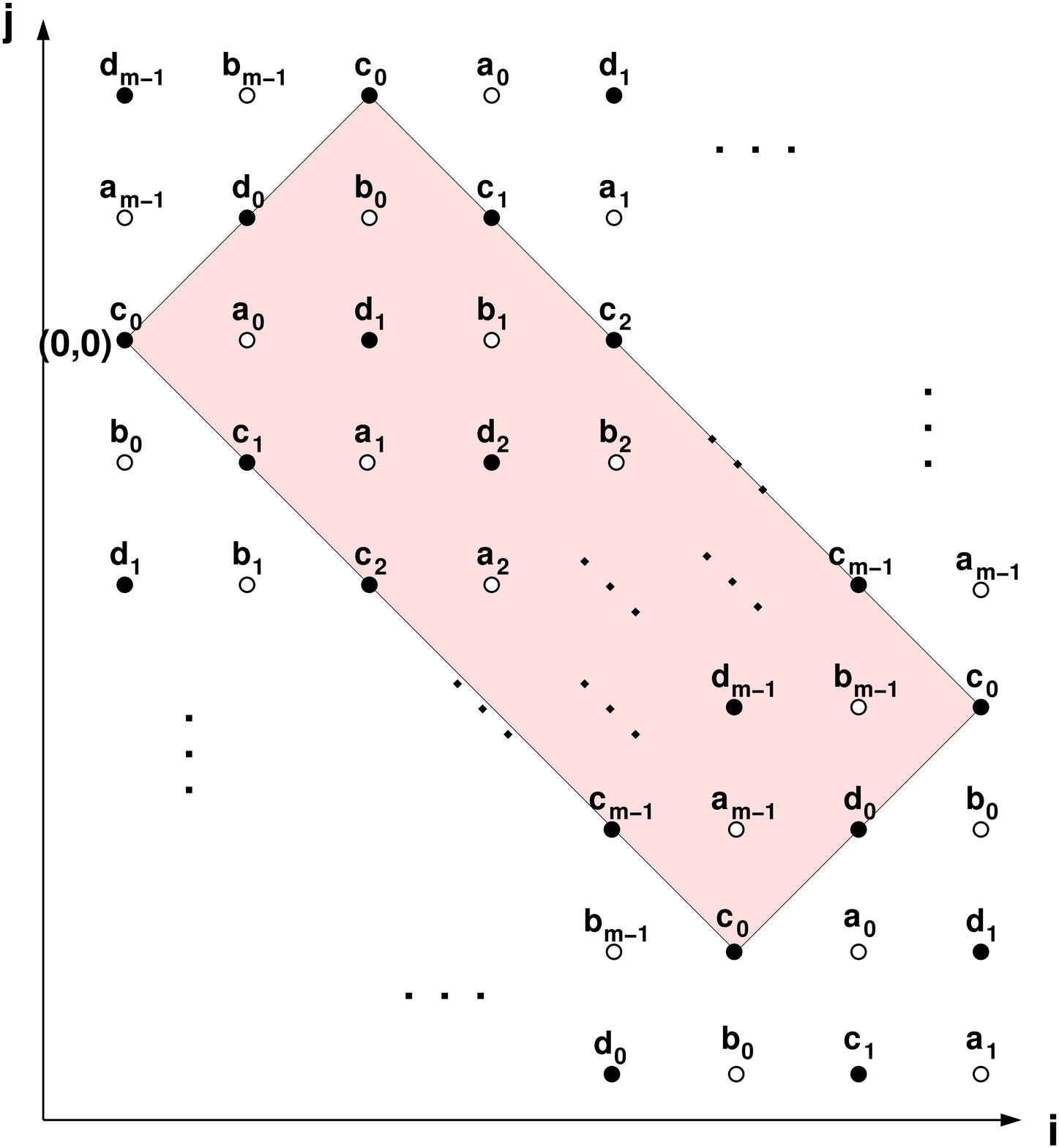}
\caption{\small The initial data of the $m$-toroidal $T$-system in the $(i,j)$ plane. The empty circles correspond to $k=0$
and the filled circles to $k=1$. We have shaded a fundamental domain for the corresponding torus.}
\label{fig:grid2m}
\end{figure}

Let us first denote
respectively by $a_i,b_i,c_i,d_i$ the initial data corresponding to a fundamental domain in the planes $k=0$ and $k=1$,
namely (see Fig.\ref{fig:grid2m} for an illustration):
\begin{eqnarray}
a_i &=& T_{i+1,-i,0}=t_{i+1,-i}\qquad 
b_i = T_{i+2,-i+1,0}=t_{i+2,-i+1}\nonumber \\
c_i &=& T_{i,-i,1}=t_{i,-i}\qquad \quad
d_i = T_{i+1,-i+1,1}=t_{i+1,-i+1}\label{abcd}
\end{eqnarray}
for $i\in \Z$.
The sequences $a_i,b_i,c_i,d_i$ $i\in \Z$ are clearly periodic with period $m$, as a direct consequence of the conditions \eqref{init2m}.
Let us further introduce two $m$-periodic sequences $x_i,y_i$ defined as:
\begin{equation}
x_i=\frac{c_{i} d_{i+1}+c_{i+1}d_{i}}{a_i b_i} \qquad {\rm and}\qquad y_i=  \frac{a_{i-1}b_{i}+a_{i}b_{i-1}}{c_i d_i}\quad (i\in \Z)
\end{equation}
Note that we have:
\begin{equation*} T_{i+1,-i,2}=x_i b_i  \qquad T_{i+2,-i+1,2}=x_i a_i
\end{equation*}
We also define for $n\geq 1$ and $i\in \Z$ the quantities:
\begin{equation}
u_{n,i}=\prod_{\ell=0}^{n-1} \left(x_{i-\ell-1}\right)^{\frac{n+1}{2}-\big\vert \frac{n-1}{2}-\ell\big\vert}\qquad 
v_{n,i}=\prod_{\ell=0}^{n-1} \left(y_{i-\ell-1}\right)^{\frac{n+1}{2}-\big\vert \frac{n-1}{2}-\ell\big\vert}
\end{equation}
with the convention that $u_{0,i}=u_{-1,i}=u_{-2,i}=1$ and similarly for $v$.
Finally, let $\theta_{i,j,k}$ be defined for $i,j\in \Z$ and $k\in \Z_+$ as:
\begin{equation}
\theta_{i,j,k}=T_{i+\lfloor \frac{k}{2} \rfloor,j+\lfloor \frac{k}{2} \rfloor,k\, {\rm mod}\, 2}
\end{equation}
In particular, for $k=0,1$ we have from the initial data \eqref{initdat}:
\begin{equation}\label{inithet}
 \theta_{i,j,i+j+1\, {\rm mod}\, 2}=t_{i,j} . \end{equation}
 
With the above definitions, the following theorem gives the exact value of the solution 
$T_{i,j,k}$ of the $T$-system with $m$-toroidal boundary conditions.

\begin{thm}\label{solmper}
With the above definitions for $u_{n,i},v_{n,i},\theta_{i,j,k}$,
the solution $T_{i,j,k}$ to the $T$-system with $m$-toroidal boundary conditions  \eqref{init2m} given by \eqref{abcd}  reads explicitly:
\begin{equation}\label{exactsol} T_{i,j,k}=u_{k-1,\frac{i-j+k-1}{2}}\,  v_{k-2,\frac{i-j+k-1}{2}} \, \theta_{i,j,k} 
\end{equation}
Moreover, we have the following explicit values for the cross-ratios $L_{i,j,k}$ and $R_{i,j,k}$, with the notations
$\al=\frac{i-j}{2}$, $\beta=\frac{i-j-1}{2}$, and $\delta^{[p]}_{i,j}=\delta_{i-j,0\, {\rm mod}\, p}$:
\begin{eqnarray*}
L_{i,j,k}&=&\frac{T_{i+1,j,k}T_{i-1,j,k}}{T_{i,j,k+1}T_{i,j,k-1}}=
\delta^{[4]}_{i+j+k,0}\left(\delta^{[2]}_{k,0}\, \frac{a_\al b_{\al-1}}{a_\al b_{\al-1}+a_{\al-1} b_{\al}}+\delta^{[2]}_{k,1}\,
\frac{c_{\beta+1} d_{\beta}}{c_\beta d_{\beta+1}+c_{\beta+1} d_{\beta}}  \right)\\
&&\qquad \qquad\qquad\quad +\delta^{[4]}_{i+j+k,2}\left(\delta^{[2]}_{k,0}\,  \frac{a_{\al-1} b_{\al}}{a_\al b_{\al-1}+a_{\al-1} b_{\al}} +\delta^{[2]}_{k,1}\,
\frac{c_{\beta} d_{\beta+1}}{c_\beta d_{\beta+1}+c_{\beta+1} d_{\beta}}  \right)\\
R_{i,j,k}&=&\frac{T_{i,j+1,k}T_{i,j-1,k}}{T_{i,j,k+1}T_{i,j,k-1}}= 1-L_{i,j,k}
\end{eqnarray*}
\end{thm}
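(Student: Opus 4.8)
The plan is to verify \eqref{exactsol} directly by induction on $k$, checking that the proposed closed form satisfies the octahedron recurrence \eqref{Tsystem} together with the initial data \eqref{inithet}. First I would note that by the symmetry of the $T$-system under simultaneous translation of the $(i,j)$-indices, it suffices to track the dependence on the single combination $\al=\tfrac{i-j+k-1}{2}$ appearing in the subscripts of $u$ and $v$, together with the "fast" variable recorded by $\theta_{i,j,k}$. Substituting $T_{i,j,k}=u_{k-1,\al}\, v_{k-2,\al}\, \theta_{i,j,k}$ into \eqref{Tsystem}, the $\theta$-factors reorganize — using \eqref{inithet} and the definition of $\theta$ in terms of shifted initial data — into exactly the two-term combinations that appear in the numerators of $x_i$ and $y_i$. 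Concretely, on the $k$ even/odd sublattices one gets expressions of the shape $c_\beta d_{\beta+1}+c_{\beta+1}d_\beta$ or $a_{\al-1}b_\al+a_\al b_{\al-1}$, which are by definition $a_\al b_\al\, x_\beta$ (resp. $c_\beta d_\beta\, y_\al$). The bulk of the proof is the bookkeeping showing that the ratios of $u,v$-prefactors on the two sides of \eqref{Tsystem} precisely absorb these $x,y$ factors.

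The key identity to extract is a recursion for $u_{n,i}$: from the definition $u_{n,i}=\prod_{\ell=0}^{n-1}(x_{i-\ell-1})^{\frac{n+1}{2}-|\frac{n-1}{2}-\ell|}$ one checks the "frieze-like" relation
\[
u_{n+1,i}\, u_{n-1,i}\;=\;x_{i-1}\, u_{n,i}\, u_{n,i-1}
\]
(and the analogous relation for $v$), which is the reflection of the fact that the exponents $\frac{n+1}{2}-|\frac{n-1}{2}-\ell|$ form the "triangular" profile that linearizes under second differences. With this in hand, plugging the product form into \eqref{Tsystem} reduces, after cancelling common $\theta$-independent factors, to an identity purely among $x$'s, $y$'s and the numerators of the initial data, case by case according to $k\bmod 2$ and the class of $i+j+k \bmod 4$ (the two sublattices on which the octahedron relation picks out different faces). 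The base cases $k=0,1$ are \eqref{inithet} by the convention $u_{-1,i}=u_{-2,i}=1$, $v_{-1,i}=v_{-2,i}=1$, and one also needs $k=2$ separately to seed the $v$-recursion, which is supplied by the stated values $T_{i+1,-i,2}=x_i b_i$, $T_{i+2,-i+1,2}=x_i a_i$.

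For the formulas for $L_{i,j,k}$ and $R_{i,j,k}$: once \eqref{exactsol} is established, one simply substitutes it into the definition \eqref{LRdef}. All the $u_{n,\cdot}$ and $v_{n,\cdot}$ factors cancel between numerator and denominator except for a single residual factor (a ratio of one $x$ or one $y$), and the $\theta$-factors collapse — again via \eqref{inithet} — to a ratio of the form $\frac{a_? b_?}{(\text{numerator of }x\text{ or }y)}$; normalizing by $x_i a_i b_i=c_i d_{i+1}+c_{i+1}d_i$ (resp. $y_i c_i d_i = a_{i-1}b_i+a_i b_{i-1}$) yields exactly the stated rational expressions, with the Kronecker symbols $\delta^{[4]}_{i+j+k,0}$ versus $\delta^{[4]}_{i+j+k,2}$ selecting which of the two faces in each sum survives. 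I expect the main obstacle to be purely organizational rather than conceptual: keeping the four congruence classes $i+j+k\bmod 4$ and the parity of $k$ straight simultaneously, and making sure the half-integer shifts hidden in $\al,\beta$ and in $\lfloor k/2\rfloor$ are consistently tracked so that the $x,y$ indices land where the $u,v$-recursion demands. Since the paper's own proof of the $2\times 2$ case (Lemma, \eqref{exasol22}) is "by direct substitution," I would present this as a (longer but) direct verification along the same lines, relegating the $k\bmod 4$ case analysis to a short table.
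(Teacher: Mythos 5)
Your overall strategy --- verify the closed form \eqref{exactsol} by direct substitution into \eqref{Tsystem}, reducing to exchange identities among the $u$'s, $v$'s and the $\theta$-ratios, then obtain $L_{i,j,k},R_{i,j,k}$ by cancellation --- is the same as the paper's, which organizes the computation slightly more efficiently: it computes the cross-ratios $L_{i,j,k}$ and $R_{i,j,k}$ directly from the candidate expression and observes that they sum to $1$, which simultaneously proves that the candidate solves the $T$-system and establishes the stated formulas, with no induction needed beyond matching the initial data at $k=0,1$ (so your separate $k=2$ ``seed'' is superfluous). However, the key identity you rely on is false as stated: for $n=1$ one has $u_{2,i}u_{0,i}=x_{i-1}x_{i-2}$ while $x_{i-1}u_{1,i}u_{1,i-1}=x_{i-1}^2x_{i-2}$, and it fails again at $n=3$ (the two sides differ by a factor $x_{i-3}$). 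The correct exchange relation --- the one actually needed, since in \eqref{Tsystem} the product $T_{i,j,k+1}T_{i,j,k-1}$ carries the two second indices $\alpha$ and $\alpha-1$, not twice the same index --- is
\begin{equation*}
u_{n+1,\alpha}\, u_{n-1,\alpha-1}\;=\;u_{n,\alpha}\, u_{n,\alpha-1}\times
\begin{cases} 1, & n\ \text{odd},\\ x_{\alpha-\frac{n+2}{2}}, & n\ \text{even},\end{cases}
\end{equation*}
together with its $v$-analogue, in which the $y$-correction appears for the \emph{other} parity; these are precisely \eqref{ratiou}--\eqref{ratiov} in the paper.

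The two features your version loses --- the shift $\alpha\to\alpha-1$ in the lower factor, and the parity-dependent, $n$-dependent index of the correction factor --- are not bookkeeping details but the mechanism of the proof. The $x$-correction from the $u$'s occurs exactly at odd $k$, where the $\theta$-ratio has numerator $c_\beta d_{\beta+1}$ or $c_{\beta+1}d_\beta$ over $a_\beta b_\beta$, and the $y$-correction from the $v$'s at even $k$, where the numerator is $a_\alpha b_{\alpha-1}$ or $a_{\alpha-1}b_\alpha$ over $c_\alpha d_\alpha$; only with this matching do the cross-ratios collapse to the stated rational expressions and satisfy $L_{i,j,k}+R_{i,j,k}=1$, which is the entire verification. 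With your relation (correction $x_{i-1}$ present for every $n$, no index shift) the substitution into \eqref{Tsystem} does not close, so as written there is a genuine gap; replacing your identity by the one above and redoing the $k\bmod 2$, $i+j+k\bmod 4$ case analysis repairs it and lands exactly on the paper's argument.
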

\begin{proof}
For $T_{i,j,k}$ as in the statement of the theorem, let us compute the ratios
$L_{i,j,k}$ and $R_{i,j,k}$.
To this end, we note that by definition:
\begin{equation}\label{ratiou}\frac{u_{k,\al}}{u_{k-1,\al}}=\prod_{\ell=0}^{\lfloor \frac{k-1}{2} \rfloor} x_{\al-k+\ell} \quad {\rm and}\quad 
\frac{u_{k-1,\al-1}\, u_{k-1,\al}}{u_{k-2,\al-1}\, u_{k,\al}}
=\left\{ \begin{matrix}1 & {\rm if}\, k \, {\rm even}\\ 1/x_{\al-\frac{k+1}{2}} & {\rm otherwise}\end{matrix}\right.\end{equation}
and similarly for $v$:
\begin{equation}\label{ratiov}
\frac{v_{k-2,\al-1}\, v_{k-2,\al}}{v_{k-3,\al-1}\, v_{k-1,\al}}
=\left\{ \begin{matrix}1/y_{\al-\frac{k}{2}} & {\rm if}\, k \, {\rm even}\\ 1 & {\rm otherwise}\end{matrix}\right.\end{equation}
Let us pick $i,j,k$ such that $i+j+k=0$ mod 2, and use the result for the cross-ratios of (\ref{ratiou}-\ref{ratiov}) for
$\al=\frac{i-j+k}{2}$.
Noting finally that 
\begin{eqnarray*}
\frac{\theta_{i+1,j,2k}\theta_{i-1,j,2k}}{\theta_{i,j,2k+1}\theta_{i,j,2k-1}}&=&\left\{ \begin{matrix} \frac{a_{\al} b_{\al-1}}{c_\al d_\al}& {\rm if}\, 
\frac{i+j+2k}{2} \, {\rm even} \\
 \frac{a_{\al-1} b_{\al}}{c_\al d_\al}& {\rm otherwise} \end{matrix} \right. \ \left( \al=\frac{i-j}{2}\right)\\
\frac{\theta_{i+1,j,2k-1}\theta_{i-1,j,2k-1}}{\theta_{i,j,2k}\theta_{i,j,2k-2}}&=&\left\{ \begin{matrix}
\frac{c_\beta d_{\beta+1}}{a_\beta b_\beta}& {\rm if}\,  
\frac{i+j+2k-1}{2} \, {\rm even} \\
\frac{c_{\beta+1} d_{\beta}}{a_\beta b_\beta} & {\rm otherwise} \end{matrix} \right. \left( \beta=\frac{i-j-1}{2}\right)\\
\end{eqnarray*}
we conclude that
\begin{eqnarray*}
L_{i,j,2k}&=& \left\{ \begin{matrix} \frac{a_\al b_{\al-1}}{a_\al b_{\al-1}+a_{\al-1} b_{\al}} &\  {\rm if}\ i+j+2k=0 \, {\rm mod}\, 4\\ 
 \frac{a_{\al-1} b_{\al}}{a_\al b_{\al-1}+a_{\al-1} b_{\al}} &\  {\rm if}\ i+j+2k=2 \, {\rm mod}\, 4\\ \end{matrix}\right.\qquad \left(\alpha=\frac{i-j}{2}\right)
\\
L_{i,j,2k-1}&=& \left\{ \begin{matrix} \frac{c_{\beta+1} d_{\beta}}{c_\beta d_{\beta+1}+c_{\beta+1} d_{\beta}} &\  
{\rm if}\ i+j+2k-1=0 \, {\rm mod}\, 4\\ 
 \frac{c_\beta d_{\beta+1}}{c_\beta d_{\beta+1}+c_{\beta+1} d_{\beta}} &\  
 {\rm if}\ i+j+2k-1=2 \, {\rm mod}\, 4\\ \end{matrix}\right. \left(\beta=\frac{i-j-1}{2}\right)
\end{eqnarray*}
Similarly we compute the quantities:
\begin{eqnarray*}
R_{i,j,2k}&=& \left\{ \begin{matrix}  \frac{a_{\al-1} b_{\al}}{a_\al b_{\al-1}+a_{\al-1} b_{\al}} &\  {\rm if}\ i+j+2k=0 \, {\rm mod}\, 4\\ 
\frac{a_\al b_{\al-1}}{a_\al b_{\al-1}+a_{\al-1} b_{\al}}  &\  {\rm if}\ i+j+2k=2 \, {\rm mod}\, 4\\ \end{matrix}\right. \qquad \left(\alpha=\frac{i-j}{2}\right)
\\
R_{i,j,2k-1}&=& \left\{ \begin{matrix} \frac{c_\beta d_{\beta+1}}{c_\beta d_{\beta+1}+c_{\beta+1} d_{\beta}} &\  
{\rm if}\ i+j+2k-1=0 \, {\rm mod}\, 4\\ 
 \frac{c_{\beta+1} d_{\beta}}{c_\beta d_{\beta+1}+c_{\beta+1} d_{\beta}} &\  {\rm if}\ i+j+2k-1=2 \, 
 {\rm mod}\, 4\\ \end{matrix}\right. \left(\beta=\frac{i-j-1}{2}\right)
\end{eqnarray*}
We conclude that $R_{i,j,k}+L_{i,j,k}=1$ for all $k\geq 0$ and $i,j\in \Z$, and therefore $T_{i,j,k}$ satisfies the $T$-system \eqref{Tsystem}.
Moreover, the initial values of $T_{i,j,k}$ are $T_{i,j,i+j+1\, {\rm mod}\, 2}=\theta_{i,j,i+j+1\, {\rm mod}\, 2}=t_{i,j}$ by \eqref{inithet}. The theorem follows.
\end{proof}

\begin{example}
For $m=1$, all sequences are constant, $a_i=a$, $b_i=b$, $c_i=c$, $d_i=d$, and $x_i=\frac{2ab}{cd}=x$, $y_i=\frac{2cd}{ab}=y$. 
Moreover, we have
$u_{n,i}=x^{\lfloor \frac{(n+1)^2}{4}\rfloor}$, $v_{n,i}=y^{\lfloor \frac{(n+1)^2}{4}\rfloor}$, hence:
$$T_{i,j,k}=2^{\frac{k(k-1)}{2}}
\left(\frac{ab}{cd}\right)^{\lfloor \frac{k}{2}\rfloor} \left\{\delta^{[2]}_{k,0}\left(a\, \delta^{[4]}_{i+j+k,1}+b\, \delta^{[4]}_{i+j+k,3}\right) 
+\delta^{[2]}_{k,1}\left(c\, \delta^{[4]}_{i+j+k,1}+d\, \delta^{[4]}_{i+j+k,3}\right)
\right\}$$
This solution is slightly more general that the uniform one (which would correspond to $a=b=c=d=1$ and  $T_{i,j,k}=2^{\frac{k(k-1)}{2}}$), 
but we easily compute:
$$R_{i,j,k}=L_{i,j,k}=\frac{1}{2} \qquad (i,j\in \Z;k\in \Z_+;i+j+k=0\, {\rm mod}\, 2)\, ,$$
therefore the general equation \eqref{densityequation} for the density reduces to that of the uniform case \eqref{densitytrivialdata}.
\end{example}

\begin{example}
For $m=2$, we find that 
$$ x_0=\frac{ c_1 d_0+c_0d_1}{a_0b_0} , \quad x_1=\frac{ c_1 d_0+c_0d_1}{a_1b_1}, 
\quad y_0=\frac{a_0b_1+a_1b_0}{c_0d_0},\quad y_1=\frac{a_0b_1+a_1b_0}{c_1d_1} .$$
and the solution reads:
\begin{eqnarray*}
T_{i,j,k}&=&\delta_{k,0}^{[2]}\left(a_{\frac{i-j-1}{2}}\, \delta_{i+j+k,1}^{[4]}+b_{\frac{i-j-1}{2}}\, \delta_{i+j+k,3}^{[4]}\right) 
(x_0x_1y_0y_1)^{\frac{k(k-2)}{8}}\left(x_{\frac{i-j-1}{2}}\right)^{\lfloor \frac{k+2}{4}\rfloor}\left(x_{\frac{i-j+1}{2}}\right)^{\lfloor \frac{k}{4}\rfloor}\\
&+&\delta_{k,1}^{[2]}\left(c_{\frac{i-j}{2}}\, \delta_{i+j+k,1}^{[4]}+d_{\frac{i-j}{2}}\, \delta_{i+j+k,3}^{[4]}\right)
(x_0x_1y_0y_1)^{\frac{k^2-1}{8}}\left(y_{\frac{i-j}{2}}\right)^{-\lfloor \frac{k-1}{4}\rfloor}\left(y_{\frac{i-j}{2}+1}\right)^{-\lfloor \frac{k+1}{4}\rfloor}
\end{eqnarray*}
Again, this solution is more general than that of the $2\times 2$ case \eqref{exasol22} 
(which would correspond to $c_0=d_1=a$, $c_1=d_0=b$, $a_1=b_0=c$, and $a_0=b_1=d$), 
but as we shall see below (Example \ref{twotwoex}), it has the same values of $R_{i,j,k}$ and $L_{i,j,k}$ 
and therefore the same equation for the density.
\end{example}

\subsection{Density: exact derivation}

In this section we consider the $T$-system with $m$-toroidal boundary conditions.
We define the density $\rho\equiv \rho^{(0,0)}$ as before as the response of the system to an infinitesimal perturbation of the initial data at position $(0,0)$,
with value $T_{0,0,1}=t_{0,0}=c_0$. More precisely,
we write:
$$ \rho_{i,j,k}=c_0\frac{\partial {\rm Log}\, T_{i,j,k}}{\partial t_{0,0}}\Big\vert_{t_{0,0}=c_0} $$

Our aim in this section is to compute $\rho_{i,j,k}$ explicitly (As before, the singularity locus of the generating function for $\rho$ will determine the suitable arctic curve.). Note that we have the following initial conditions:
\begin{equation}\label{initrho}
\rho_{i,j,0}=0\qquad {\rm and}\qquad  \rho_{i,j,1}=\delta_{i,0}\delta_{j,0}\qquad (i,j\in \Z)
\end{equation}
Next, differentiating the $T$-system relation w.r.t. to $t_{0,0}$ provides us with the following 
system of linear recursion relations for $\rho_{i,j,k}$:
\begin{equation}\label{systemrho}
\rho_{i,j,k+1}+\rho_{i,j,k-1}=L_{i,j,k} (\rho_{i+1,j,k}+\rho_{i-1,j,k})+R_{i,j,k}(\rho_{i,j+1,k}+\rho_{i,j-1,k}) 
\end{equation}
which, together with the initial conditions \eqref{initrho}, determine $\rho_{i,j,k}$ entirely.
The crucial remark here is that although this system is infinite, it has only finitely many distinct coefficients.
Indeed, from Theorem \ref{solmper}, we deduce the following simple:
\begin{cor}
The quantities $(L_{i,j,k},R_{i,j,k})$ for the solutions of the $T$-system with $m$-toroidal boundary conditions have
the following periodicities:
\begin{eqnarray*} 
L_{i+2,j+2,k}&=& L_{i,j,k}\qquad L_{i+m,j-m,k}=L_{i,j,k}\qquad L_{i+1,j+1,k+2}=L_{i,j,k} \\
R_{i+2,j+2,k}&=& R_{i,j,k}\qquad R_{i+m,j-m,k}=R_{i,j,k}\qquad R_{i+1,j+1,k+2}=R_{i,j,k}
\end{eqnarray*}
\end{cor}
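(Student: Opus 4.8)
The plan is to read off the three claimed periodicities of the pair $(L_{i,j,k},R_{i,j,k})$ directly from the explicit formulas for $L_{i,j,k}$ and $R_{i,j,k}$ obtained in Theorem \ref{solmper}. Since $R_{i,j,k}=1-L_{i,j,k}$, it suffices to treat $L$, and since $L$ is expressed through cases according to the residue of $k$ modulo $2$ and of $i+j+k$ modulo $4$, together with the index $\al=\frac{i-j}{2}$ (when $k$ is even) or $\beta=\frac{i-j-1}{2}$ (when $k$ is odd) into the $m$-periodic sequences $a_i,b_i,c_i,d_i$, I would check each of the three shifts against these ingredients separately.

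First I would handle the shift $(i,j,k)\to(i+2,j+2,k)$: it leaves $k$ unchanged, leaves $i+j+k$ changed by $4$ hence fixes its class mod $4$, and leaves $i-j$ unchanged, hence fixes $\al$ and $\beta$; so every case of the formula for $L_{i,j,k}$ is preserved verbatim. Next, the shift $(i,j,k)\to(i+m,j-m,k)$ again fixes $k$, changes $i+j$ by $0$ hence fixes $i+j+k$ exactly, and changes $i-j$ by $2m$, so $\al\mapsto\al+m$ and $\beta\mapsto\beta+m$; here I invoke the $m$-periodicity of $a_i,b_i,c_i,d_i$ (noted right after \eqref{abcd} as a consequence of \eqref{init2m}) to conclude that the relevant ratios such as $\frac{a_\al b_{\al-1}}{a_\al b_{\al-1}+a_{\al-1}b_\al}$ are unchanged. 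Finally, for the shift $(i,j,k)\to(i+1,j+1,k+2)$: the value $i+j+k$ changes by $4$, so its class mod $4$ is fixed; $k$ changes by $2$, so its class mod $2$ is fixed; and $i-j$ is unchanged, so $\al$ and $\beta$ are unchanged — again every branch of the case distinction is preserved. This exhausts the three generators, and the corresponding statements for $R$ follow from $R=1-L$.

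The only mild subtlety, and the step I would present most carefully, is bookkeeping the parity conventions: the formulas in Theorem \ref{solmper} are written with $k$ explicitly even ($L_{i,j,2k}$) versus odd ($L_{i,j,2k-1}$), and with the index into the periodic sequences being $\al=\frac{i-j}{2}$ in the even case and $\beta=\frac{i-j-1}{2}$ in the odd case, which are integers precisely because $i+j+k=0$ mod $2$ forces $i-j\equiv k$ mod $2$. One must check that each of the three shifts preserves the parity of $k$ (all three do: two fix $k$ and one changes it by $2$) so that one stays within the same half of the formula, and that the shift of $i-j$ is even in each case (it is $0$, $2m$, $0$ respectively) so that the corresponding $\al$ or $\beta$ shifts by an integer, namely by $0$, $m$, $0$. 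Once this is laid out, the corollary is immediate by inspection — there is no real obstacle, only the routine verification that the three lattice vectors $(2,2,0)$, $(m,-m,0)$, $(1,1,2)$ act trivially on the data $\bigl(k \bmod 2,\ (i+j+k)\bmod 4,\ (i-j)\bmod 2m\bigr)$ that determines $L_{i,j,k}$.
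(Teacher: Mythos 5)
Your proof is correct and follows essentially the same route as the paper: the key point, as in the paper, is that the shift $(i,j,k)\to(i+1,j+1,k+2)$ preserves $k$ mod $2$, $i+j+k$ mod $4$, and $\al,\beta$, so the explicit expressions of Theorem \ref{solmper} are unchanged. The only cosmetic difference is that for the first two shifts the paper simply invokes the toroidal periodicity inherited by $T_{i,j,k}$ from the initial data, whereas you re-verify them on the explicit formulas together with the $m$-periodicity of $a_i,b_i,c_i,d_i$ — both are immediate.
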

\begin{proof}
The first two relations are clear, as this periodicity is inherited from that of the initial data \eqref{init2m}.
The last one is checked directly on the expressions for $L_{i,j,k},R_{i,j,k}$ of Theorem \ref{solmper}: one simply notices
that the translation $(i,j,k)\to (i+1,j+1,k+2)$ leaves $\al$ and $\beta$ invariant, and leaves also the quantity $i+j+k\to i+j+k+4$
invariant modulo $4$, and $k\to k+2$ invariant modulo 2. 
\end{proof}
In other words, coefficients of the system  \eqref{systemrho} are periodic in the $\Z^3$ lattice, with period vectors: 
 $\vec{e}_1=(2,2,0)$,
$\vec{e}_2=(m,-m,0)$ and $\vec{e}_3=(1,1,2)$.
This suggests to introduce the following generating functions, for $i,j\in \Z$ and $k\in \Z_+$:
$$  \rho^{(i,j,k)}(x,y,z) = \sum_{a,b\in\Z,c\geq 0} \rho_{i+2a+mb+c,j+2a-mb+c,k+2c} x^{i+2a+mb+c}y^{j+2a-mb+c}z^{k+2c}$$
while the total density generating function $\rho(x,y,z)=\sum_{i,j\in\Z,k\in\Z_+} x^iy^jz^k \rho_{i,j,k}$ is equal to the sum
$$ \rho(x,y,z)=\sum_{(i,j,k)\in \Pi_m}\rho^{(i,j,k)}(x,y,z)$$
where $\Pi_m$ is the set of integral points within the paralellepipedon based on $\vec{e}_i$, $i=1,2,3$, namely: 
$$\Pi_m=\{\sum_{i=1}^3 t_i \vec{e}_i,\ 0\leq t_i <1 \}\cap \{(i,j,k)\in \Z^3, \ i+j+k=1\, {\rm mod}\, 2\}$$
Note that with the above definition, $\rho^{(i,j,k)}(x,y,z)$ is periodic in $i,j,k$, namely it satisfies:
$$\rho^{(i+2a+mb+c,j+2a-mb+c,k+2c)}(x,y,z)=\rho^{(i,j,k)}(x,y,z)$$
for $a,b\in \Z$ and $c\in \Z_+$, for all $i,j\in \Z$ and $k\in \Z_+$.
For later use, we also note that, extending the above definition to $k=-1$ leads to: 
\begin{eqnarray*}\rho^{(i,j,-1)}(x,y,z)&=&-z^{-1}\delta_{i,0}\delta_{j,0}+ 
\sum_{a,b\in\Z,c\geq 1} \rho_{i+2a+mb+c,j+2a-mb+c,k+2c} x^{i+2a+mb+c}y^{j+2a-mb+c}z^{k+2c}\\
&=& -z^{-1}\delta_{i,0}\delta_{j,0}+\rho^{(i+1,j+1,1)}(x,y,z)
\end{eqnarray*}
where we have used $\rho_{i,j,1}+\rho_{i,j,-1}=0$ and the initial condition.
We finally obtain the following system for the generating functions $\rho^{(i,j,k)}(x,y,z)$, $k=0,1$ by use of the periodicities:
\begin{eqnarray*}
&&{\rm For}\ \  i+j=0\, {\rm mod}\, 2: \\
&&\quad z^{-1}\rho^{(i,j,1)}(x,y,z)+z\rho^{(i+1,j+1,1)}(x,y,z)
=L_{i,j,0} (x^{-1}\rho^{(i+1,j,0)}(x,y,z)+x\rho^{(i-1,j,0)}(x,y,z))\\
&&\qquad\qquad\qquad\qquad\qquad\qquad\qquad +R_{i,j,0}(y^{-1}\rho^{(i,j+1,0)}(x,y,z)+y\rho^{(i,j-1,0)}(x,y,z) )+\delta_{i,0}\delta_{j,0}\\
&&{\rm For}\ \  i+j=1\, {\rm mod}\, 2: \\
&&\quad z^{-1}\rho^{(i-1,j-1,0)}(x,y,z)+z\rho^{(i,j,0)}(x,y,z)
=L_{i,j,1} (x^{-1}\rho^{(i+1,j,1)}(x,y,z)+x\rho^{(i-1,j,1)}(x,y,z))\\
&&\qquad\qquad\qquad\qquad\qquad\qquad\qquad\qquad\qquad +R_{i,j,1}(y^{-1}\rho^{(i,j+1,1)}(x,y,z)+y\rho^{(i,j-1,1)}(x,y,z) )
\end{eqnarray*}
We may actually further restrict this system to a fundamental domain
$P_m$ of the $(i,j,k)$, $k=0,1$ planes modulo $\vec{e}_1$ and $\vec{e}_2$, which we take to be:
$$ P_m=\Big\{ (i+1,-i,0),(i+2,-i+1,0),(i,-i,1),(i+1,-i+1,1)\Big\}_{ i\in  \{0,1,...,m-1\}}$$
Let us define the four following $m$-periodic functions for $i\in \Z$:
\begin{eqnarray*}
\alpha_i(x,y,z)&=& \rho^{(i+1,-i,0)}(x,y,z),\qquad \beta_i(x,y,z)=\rho^{(i+2,-i+1,0)}(x,y,z) \\
\gamma_i(x,y,z)&=& \rho^{(i,-i,1)}(x,y,z),\qquad \quad \delta_i(x,y,z)=\rho^{(i+1,-i+1,1)}(x,y,z)
\end{eqnarray*}
and the coefficients
\begin{equation}
\lambda_i=L_{i,-i,0}=\frac{a_ib_{i-1}}{a_{i-1}b_i+a_ib_{i-1}} \qquad \mu_i=L_{i+2,-i+1,1}=\frac{c_{i+1}d_i}{c_id_{i+1}+c_{i+1}d_i}
\end{equation}
so that $L_{i+1,-i+1,0}=1-\lambda_i$ and $L_{i+3,-i+2,1}=1-\mu_i$. Note also that, as is readily seen from their definition,
these coefficients are not independent, as they must satisfy the relations:
\begin{equation}\label{relamu}
\prod_{i=0}^{m-1} \left(\frac{1}{\lambda_i} -1\right)=\prod_{i=0}^{m-1} \left(\frac{1}{\mu_i} -1\right)=1
\end{equation}
We may summarize the above results into:

\begin{thm}\label{sysrho}
The density generating functions $\al_i,\beta_i,\gamma_i,\delta_i$, for $i\in \{0,1,...,m-1\}$, 
are uniquely determined as the solutions of the following $4m\times 4m$
linear system:
\begin{eqnarray}
z^{-1}\, \al_i+z\, \beta_i-\lambda_i\, (x^{-1}\,\gamma_{i+1}+x\,\delta_i )-(1-\lambda_i)(y^{-1}\,\gamma_{i}+y\,\delta_{i+1} )&=&0 
\nonumber \\
z^{-1}\, \beta_i+z \, \al_i-(1-\lambda_i) (x^{-1}\,\delta_{i+1}+x\,\gamma_i )-\lambda_i\,(y^{-1}\,\delta_{i}+y\,\gamma_{i+1} )&=&0 
\nonumber \\
z^{-1}\, \gamma_i+z \, \delta_i-\mu_i\, (x^{-1} \,\al_i +x\, \beta_{i-1})-(1-\mu_i)(y^{-1}\,\al_{i-1}+y\,\beta_{i} )&=&\delta_{i,0} 
\nonumber \\
z^{-1}\, \delta_i+z \, \gamma_i-(1-\mu_i) (x^{-1} \,\beta_i +x\,\al_{i-1})-\mu_i\,(y^{-1}\,\beta_{i-1}+y\,\al_{i} )&=&0
\label{sysgenro}
\end{eqnarray}
for $i,j\in \{0,1,...,m-1\}$, 
subject to the periodicity conditions $\al_m=\al_0$, $\al_{-1}=\al_{m-1}$, and similarly for the $\beta$, $\gamma$, $\delta$'s.
\end{thm}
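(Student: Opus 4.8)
The plan is to derive the $4m\times 4m$ system directly from the recursion \eqref{systemrho} for $\rho_{i,j,k}$ together with the periodicity of the coefficients $(L_{i,j,k},R_{i,j,k})$ established in the Corollary, and then to argue that the resulting finite linear system has a unique solution that reproduces the true density generating function. First I would fix $(i,j,k)$ with $k=0$ or $k=1$ and multiply both sides of \eqref{systemrho} by the monomial $x^iy^jz^k$, then sum over the sublattice generated by $\vec e_1=(2,2,0)$, $\vec e_2=(m,-m,0)$, $\vec e_3=(1,1,2)$. Because each of $L_{i,j,k}$ and $R_{i,j,k}$ is invariant under all three translations (Corollary), every term of the summed recursion collapses into one of the partial generating functions $\rho^{(i',j',k')}(x,y,z)$, up to an explicit monomial factor $x^{\pm1}$, $y^{\pm1}$, or $z^{\pm1}$ coming from the shift of the summation index. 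The only subtlety is the boundary term at $k=0$: the step $\rho_{i,j,k-1}$ with $k=0$ produces $\rho_{i,j,-1}$, which I would eliminate using the identity $\rho^{(i,j,-1)}(x,y,z)=-z^{-1}\delta_{i,0}\delta_{j,0}+\rho^{(i+1,j+1,1)}(x,y,z)$ recorded just before the statement; this is the source of the inhomogeneous term $\delta_{i,0}$ on the right-hand side of the third equation, and of the $z\,\rho^{(i+1,j+1,1)}$ term on the left of the first equation.

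Next I would carry out the bookkeeping that reduces the indices to the fundamental domain $P_m$. A point $(i,j,k)$ with $k\in\{0,1\}$ lies in one of the four $\Z$-families listed in $P_m$: for $k=0$ the lattice $\vec e_1,\vec e_2$ splits the plane $i+j$ even into the two cosets represented by $(i+1,-i,0)$ and $(i+2,-i+1,0)$, and for $k=1$ into the cosets represented by $(i,-i,1)$ and $(i+1,-i+1,1)$; these are exactly the functions named $\al_i,\beta_i,\gamma_i,\delta_i$. Writing out \eqref{systemrho} for a representative of each of the four families and translating the neighbours $(i\pm1,j,k)$ and $(i,j\pm1,k)$ back into the fundamental domain then yields precisely the four displayed equations \eqref{sysgenro}. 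The coefficients appearing are $\lambda_i=L_{i,-i,0}$ and $\mu_i=L_{i+2,-i+1,1}$ (and their complements $1-\lambda_i$, $1-\mu_i$, which equal $L_{i+1,-i+1,0}$ and $L_{i+3,-i+2,1}$ respectively), all read off from the explicit formula for $L_{i,j,k}$ in Theorem \ref{solmper}; the constraint \eqref{relamu} follows by multiplying the telescoping products of the ratios $a_ib_{i-1}/(a_{i-1}b_i)$ and $c_{i+1}d_i/(c_id_{i+1})$ around the $m$-cycle, which reduces to $1$ because the sequences $a_i,b_i,c_i,d_i$ are $m$-periodic.

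Finally I would establish uniqueness. Since $\rho_{i,j,k}$ is determined by the recursion \eqref{systemrho} and the initial data \eqref{initrho}, the formal power series $\rho^{(i,j,k)}(x,y,z)$, expanded as series in $z$ with coefficients in $\Z[x^{\pm1},y^{\pm1}]$, are well defined, and they satisfy \eqref{sysgenro}; so a solution of the system exists. Conversely, given any solution of the linear system over the field $\mathbb{Q}(x,y,z)$ (or in the ring of formal Laurent series in $z$), one recovers the coefficients $\rho_{i,j,k}$ order by order in $z$: the $z^0$ term of the $\gamma,\delta$ equations fixes $\gamma_i,\delta_i$ at lowest order via the initial condition, and then the $z^{n}$ coefficient of each of the four equations expresses the order-$n$ data in terms of strictly lower orders, exactly mirroring the recursion. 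Hence the solution is unique. \emph{The main obstacle} is purely organizational rather than conceptual: keeping track of which translate by $\vec e_1,\vec e_2,\vec e_3$ sends a given neighbour back into $P_m$, and thus attaching the correct monomial $x^{\pm1},y^{\pm1},z^{\pm1}$ and the correct one of $\{\lambda_i,1-\lambda_i,\mu_i,1-\mu_i\}$ to each of the sixteen off-diagonal terms — this is where an index slip would be easy and must be checked against the parities $i+j+k\bmod 4$ and $k\bmod 2$ that govern the formula for $L_{i,j,k}$ in Theorem \ref{solmper}.
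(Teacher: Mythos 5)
Your proposal is correct and follows essentially the same route as the paper: the paper obtains Theorem \ref{sysrho} exactly by differentiating the $T$-system to get the linear recursion \eqref{systemrho}, invoking the periodicity Corollary for $L_{i,j,k},R_{i,j,k}$, summing against monomials over the lattice generated by $\vec{e}_1,\vec{e}_2,\vec{e}_3$, eliminating the $k=-1$ boundary term via $\rho^{(i,j,-1)}=-z^{-1}\delta_{i,0}\delta_{j,0}+\rho^{(i+1,j+1,1)}$ (the source of the $\delta_{i,0}$), and reducing to the fundamental domain $P_m$ with $\lambda_i,\mu_i$ read off from Theorem \ref{solmper}. Your order-by-order-in-$z$ uniqueness argument is sound and in fact makes explicit a point the paper leaves implicit; the only blemishes are cosmetic (e.g.\ the parities of $i+j$ for the $k=0$ and $k=1$ cosets are stated backwards, though the representatives you list are the correct ones).
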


The solution of the general system for the $m$-periodic densities of Theorem \ref{sysrho} is 
always a rational fraction of $x,y,z$, with denominator given by the determinant of the system \eqref{sysgenro}.
The matrix of coefficients may be rewritten in block form as:
$$ M=\begin{pmatrix}
 z^{-1}\, I & z I & -M(x,y) & -{\bar M}(y^{-1},x^{-1}) \\
 z \, I &  z^{-1}I & -M(y^{-1},x^{-1}) & -{\bar M}(x,y)\\
 -P(x,y) & -{\bar P}(y^{-1},x^{-1}) &  z^{-1}I & z \, I \\
 -{\bar P}(x,y) & -P(y^{-1},x^{-1}) &  z \, I &  z^{-1}I 
\end{pmatrix}$$
where all the entries are $m\times m$ matrices, with:
$$P(x,y)=\begin{pmatrix} 
\frac{\mu_0}{x} & 0 & 0 &\cdots & 0 & \frac{1-\mu_0}{y}\\
\frac{1-\mu_1}{y} &\frac{\mu_1}{x} & 0 &  & & 0 \\
0 & \frac{1-\mu_2}{y} &\frac{\mu_2}{x} &\ddots & & 0\\
\vdots & \ddots & \ddots & \ddots &\ddots &\vdots  \\
\vdots & & \ddots & \ddots &\ddots  &0  \\
0 & \cdots &\cdots & 0& \frac{1-\mu_{m-1}}{y} & \frac{\mu_{m-1}}{x}
\end{pmatrix} $$
$$
M(x,y)=\begin{pmatrix} \frac{1-\lambda_0}{y} & \frac{\lambda_0}{x} & 0 &\cdots & 0 & 0\\
0 &\frac{1-\lambda_1}{y} &  \frac{\lambda_1}{x}  &\ddots  & & 0 \\
0 & 0 &\frac{1-\lambda_2}{y} &\frac{\lambda_2}{x}  &\ddots & \vdots\\
\vdots & \ddots & \ddots & \ddots &\ddots &0 \\
0 & & \ddots & \ddots &\ddots  &\frac{\lambda_{m-2}}{x}  \\
 \frac{\lambda_{m-1}}{x}  & \cdots & \cdots & 0& 0 & \frac{1-\lambda_{m-1}}{x}
\end{pmatrix} $$
and ${\bar P}(x,y)$ is $P(x,y)$ with $\mu_i$ and $1-\mu_i$ interchanged, while ${\bar M}(x,y)$ is $M(x,y)$ with 
$\lambda_i$ and $1-\lambda_i$ interchanged.

\begin{example}
For $m=1$, the relation \eqref{relamu} gives $\lambda_0=\mu_0=\frac{1}{2}$ as expected. Moreover, 
denoting by $\alpha\equiv \al_0$, etc., the linear system of Theorem \ref{sysrho} reduces to:
\begin{eqnarray*}
z^{-1}\, \al+z\, \beta-\frac{1}{2} (x^{-1}\,\gamma+x\,\delta)-\frac{1}{2}(y^{-1}\,\gamma+y\,\delta )&=&0 
\\
z^{-1}\, \beta+z \, \al-\frac{1}{2}(x^{-1}\,\delta+x\,\gamma)-\frac{1}{2}(y^{-1}\,\delta+y\,\gamma )&=&0 
\\
z^{-1}\, \gamma+z \, \delta-\frac{1}{2}(x^{-1} \,\al+x\, \beta)-\frac{1}{2}(y^{-1}\,\al+y\,\beta )&=&1
\\
z^{-1}\, \delta+z \, \gamma-\frac{1}{2}(x^{-1} \,\beta +x\,\al)-\frac{1}{2}(y^{-1}\,\beta+y\,\al)&=&0
\end{eqnarray*}
The total density $\rho=\al+\beta+\gamma+\delta$ therefore satisfies:
$$\Big(z^{-1}+z-\frac{1}{2}(x+x^{-1}+y+y^{-1})\Big) \rho(x,y,z)=1 $$
which matches \eqref{genfuntrivialdata}.
\end{example}

\begin{figure}
\centering
\includegraphics[width=12.cm]{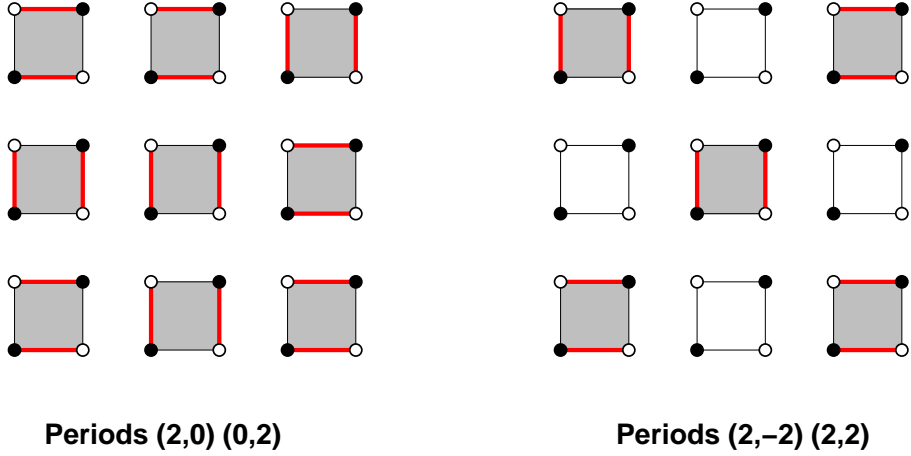}
\caption{\small Comparison between the facet phases of the $2\times2$ (left) and $2$-toroidal (right) $T$-system. In the latter
the pinning of configurations for small $c_0$ is on a larger square sublattice (of $c_0$-type faces). But as a result, the sublattice 
of $d_1$-type faces decouples entirely, and these also have two arbitrary parallel dimer configurations, which makes the phases
identical configurationwise.}
\label{fig:compfacet}
\end{figure}

\begin{example}\label{twotwoex}
For $m=2$, the relation  \eqref{relamu} gives $\lambda_1=1-\lambda_0$ and $\mu_1=1-\mu_0$. 
The $8\times 8$ linear system of Theorem \ref{sysrho} reads:
\begin{eqnarray*}
z^{-1}\, \al_0+z\, \beta_0-\lambda_0\, (x^{-1}\,\gamma_{1}+x\,\delta_0 )-\lambda_1\, (y^{-1}\,\gamma_{0}+y\,\delta_{1} )&=&0 \\
z^{-1}\, \beta_0+z \, \al_0-\lambda_1\, (x^{-1}\,\delta_{1}+x\,\gamma_0 )-\lambda_0\, (y^{-1}\,\delta_{0}+y\,\gamma_{1} )&=&0 \\
z^{-1}\, \al_1+z\, \beta_1-\lambda_1\, (x^{-1}\,\gamma_{0}+x\,\delta_1 )-\lambda_0\, (y^{-1}\,\gamma_{1}+y\,\delta_{0} )&=&0 \\
z^{-1}\, \beta_1+z \, \al_1-\lambda_0\, (x^{-1} \,\delta_0+x\,\gamma_1 )-\lambda_1 \, (y^{-1} \,\delta_1+y\, \gamma_0 )&=&0 \\
z^{-1}\, \gamma_0+z \, \delta_0-\mu_0\, (x^{-1} \,\al_0 +x\, \beta_{1} )-\mu_1\, (y^{-1}\,\al_{1}+y\,\beta_{0} )&=&1 \\
z^{-1}\, \delta_0+z \, \gamma_0-\mu_1\, (x^{-1} \,\beta_0 +x\,\al_{1} )-\mu_0\, (y^{-1}\,\beta_{1}+y\,\al_{0} )&=&0 \\
z^{-1}\, \gamma_1+z \, \delta_1-\mu_1\, (x^{-1} \,\al_1+x\, \beta_{0} )-\mu_0\, (y^{-1}\,\al_{0}+y\,\beta_{1} )&=&0 \\
z^{-1}\, \delta_1+z \, \gamma_1-\mu_0\,  (x^{-1} \,\beta_1 +x\,\al_{0} )-\mu_1\, (y^{-1}\,\beta_{0}+y\,\al_{1} )&=&0
\end{eqnarray*}
This boils down to the following $4\times 4$ system for a new set of generating functions: 
$\al=\al_0+\beta_1$, $\beta=\beta_0+\al_1$, $\gamma=\gamma_0+\delta_1$ and $\delta=\delta_0+\gamma_1$:
\begin{eqnarray*}
z^{-1}\, \al+z\, \beta-\lambda_0\, (x^{-1}+x)\delta-\lambda_1\, (y^{-1}+y)\gamma&=&0 \\
z^{-1}\, \beta+z \, \al-\lambda_1\, (x^{-1}+x)\gamma-\lambda_0\, (y^{-1}+y)\delta&=&0 \\
z^{-1}\, \gamma+z \, \delta-\mu_0\, (x^{-1}+x)\al-\mu_1\, (y^{-1}+y)\beta&=&1 \\
z^{-1}\, \delta+z \, \gamma-\mu_1\, (x^{-1}+x)\beta-\mu_0\, (y^{-1}+y)\al&=&0
\end{eqnarray*}
with $\lambda_0+\lambda_1=1=\mu_0+\mu_1$.
Note that this system is equivalent to that of \eqref{2x2system}, with $\lambda_1=\sigma$ and $\mu_1=\tau$. 
The simple reason for this is that the condition \eqref{relamu} has 
induced a more restrictive periodicity condition on the coefficients $\lambda_i,\mu_i$, namely $\lambda_1=1-\lambda_0$ and $\mu_1=1-\mu_0$,
which is equivalent to that of Section \ref{sectwotwo}. Another indication is found by comparing the facet phases in both models.
We have represented in Fig.\ref{fig:compfacet} the two facet phases. The pinning of dimer configurations on $c_0$ type faces
for $c_0$ small induces a pinning on $d_1$ faces as well, which makes the two phases identical.
\end{example}

\subsection{Arctic curves and phase diagram/limit shape}

\begin{figure}
        \centering
             \hbox{ 
                \includegraphics[width=3.85cm]{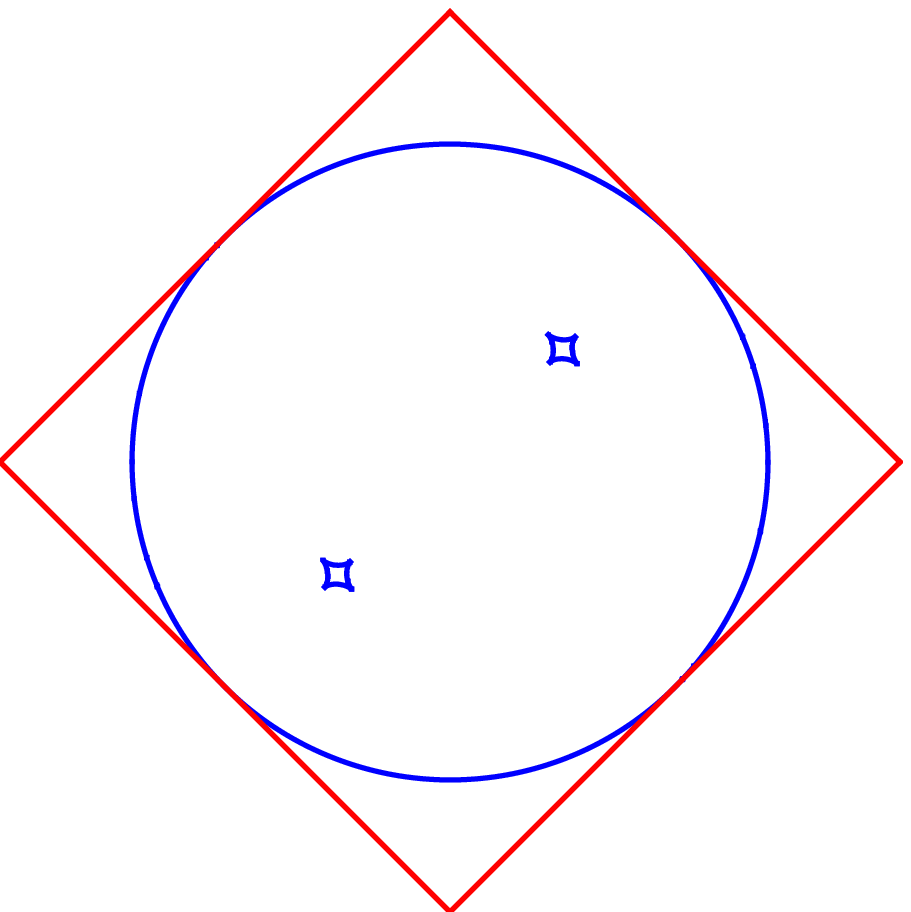}
                \put(-75,-15){$\lambda_1=4/9$}
               \hskip .2cm
             \includegraphics[width=3.85cm]{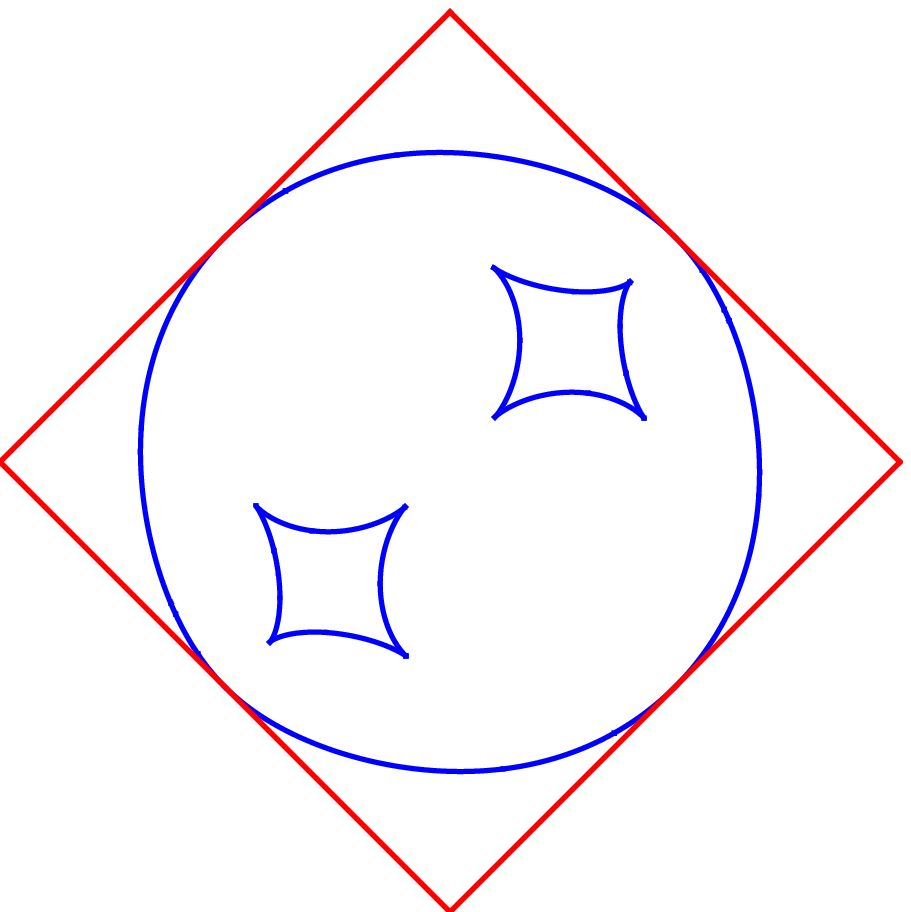}
             \put(-75,-15){$\lambda_1=1/5$}
             \hskip .2cm
             \includegraphics[width=3.85cm]{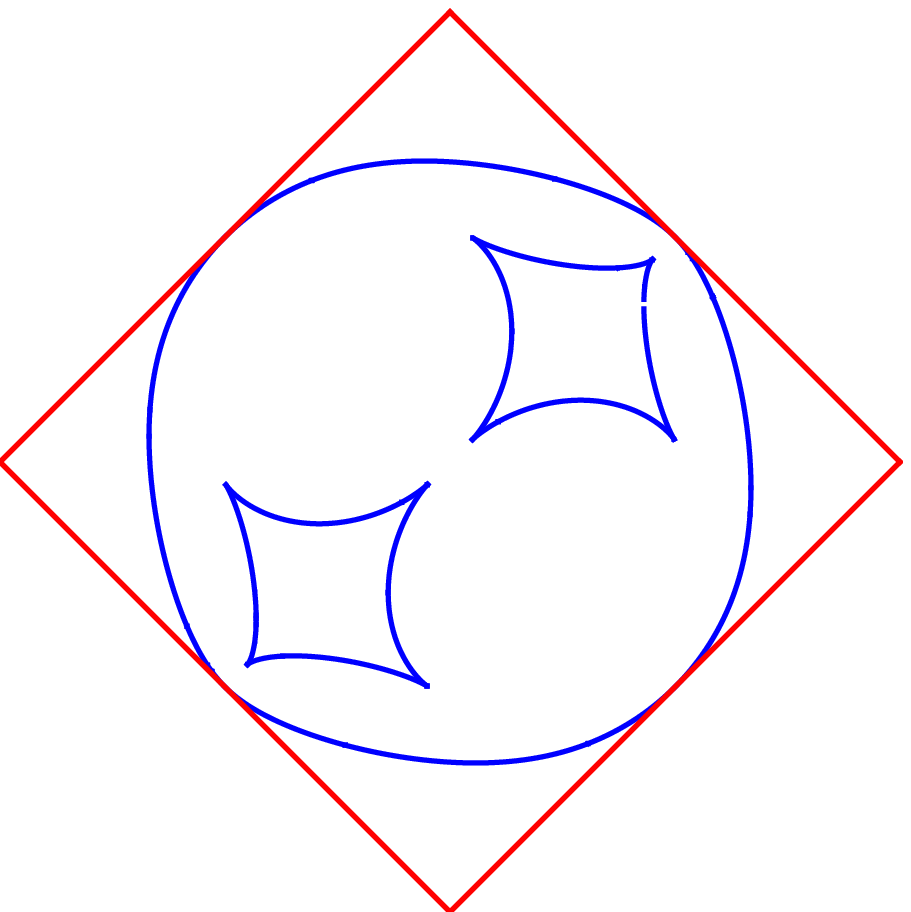}
             \put(-75,-15){$\lambda_1=19/10$}
             \hskip .2cm
             \includegraphics[width=3.85cm]{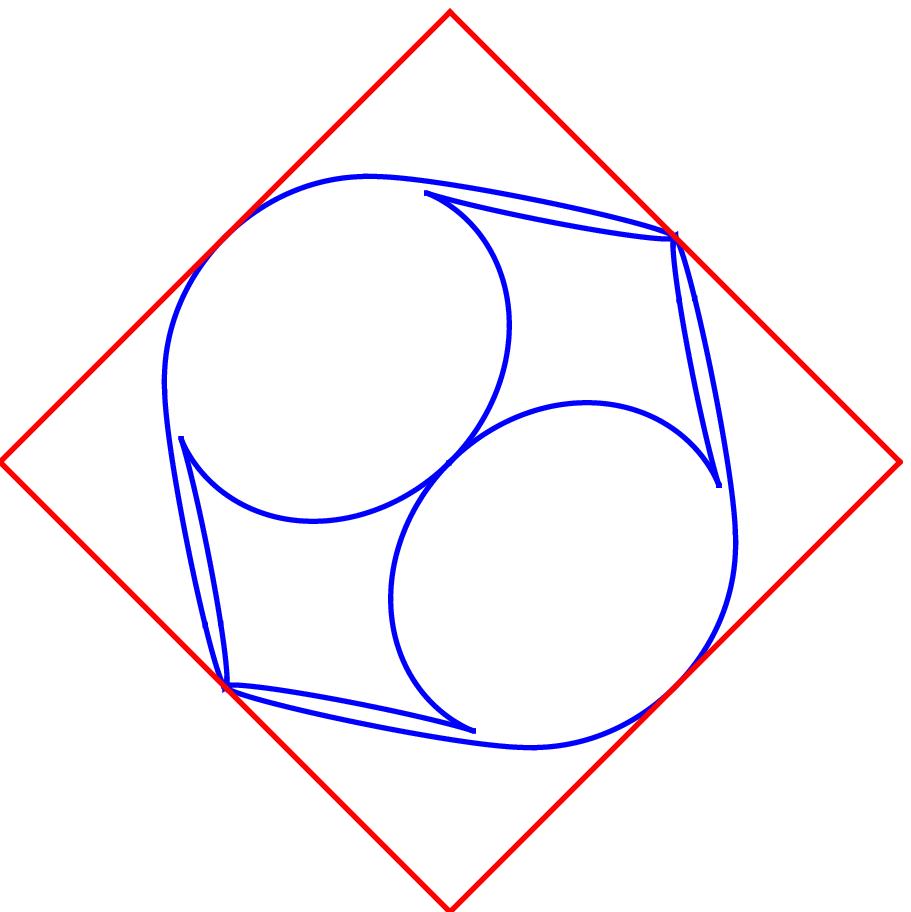}
             \put(-75,-15){$\lambda_1=200/201$}}
        \caption{\small Arctic curves for the $3$-toroidal initial data corresponding to different values of $\lambda_1$,
        where $\lambda_0=1/2$, $\lambda_2=1-\lambda_1$ and $\mu_0=\mu_1=\mu_2=1/2$.}
        \label{periodicarcticcurve2x3-1}
\end{figure}

\begin{figure}
        \centering

             \hbox{ 
                \includegraphics[width=3.85cm]{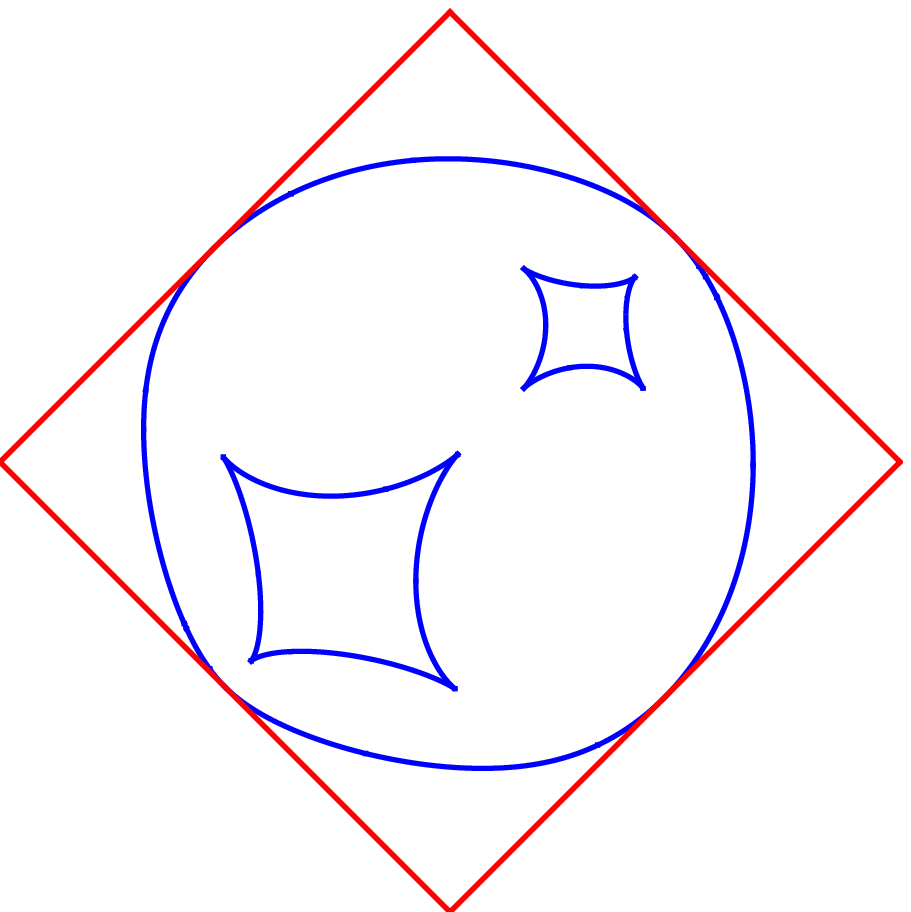}
                \put(-75,-15){$\mu_1=1/5$}
               \hskip .2cm
             \includegraphics[width=3.85cm]{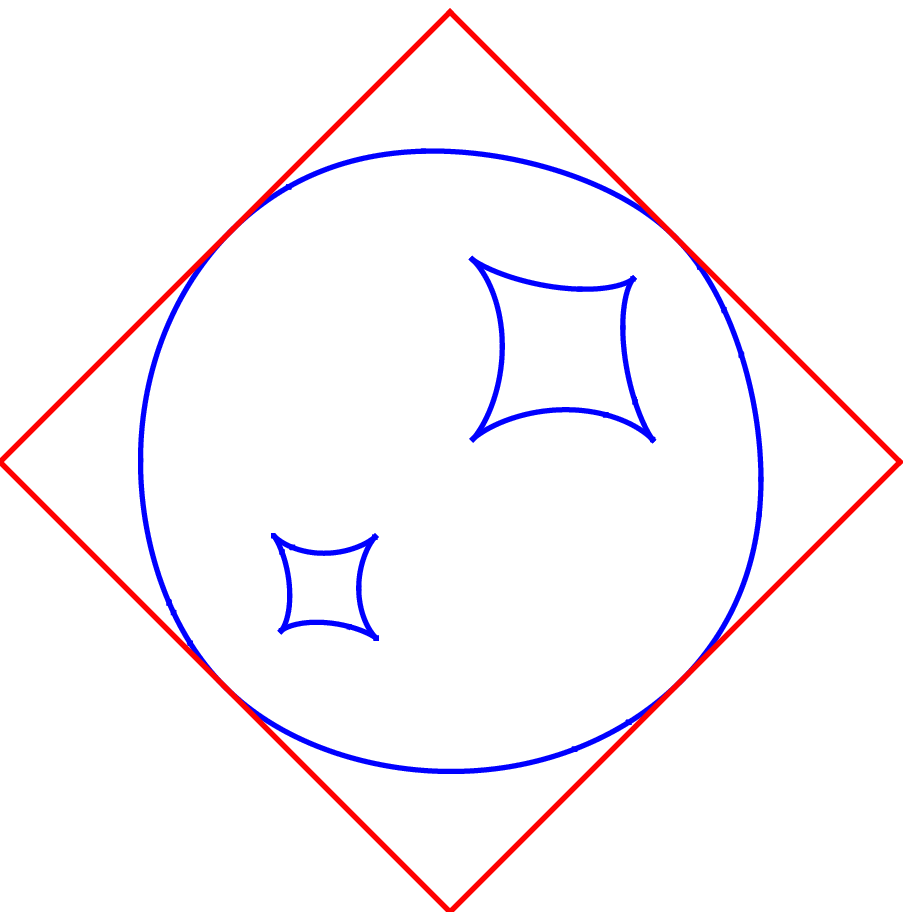}
             \put(-75,-15){$\mu_1=2/3$}
             \hskip .2cm
             \includegraphics[width=3.85cm]{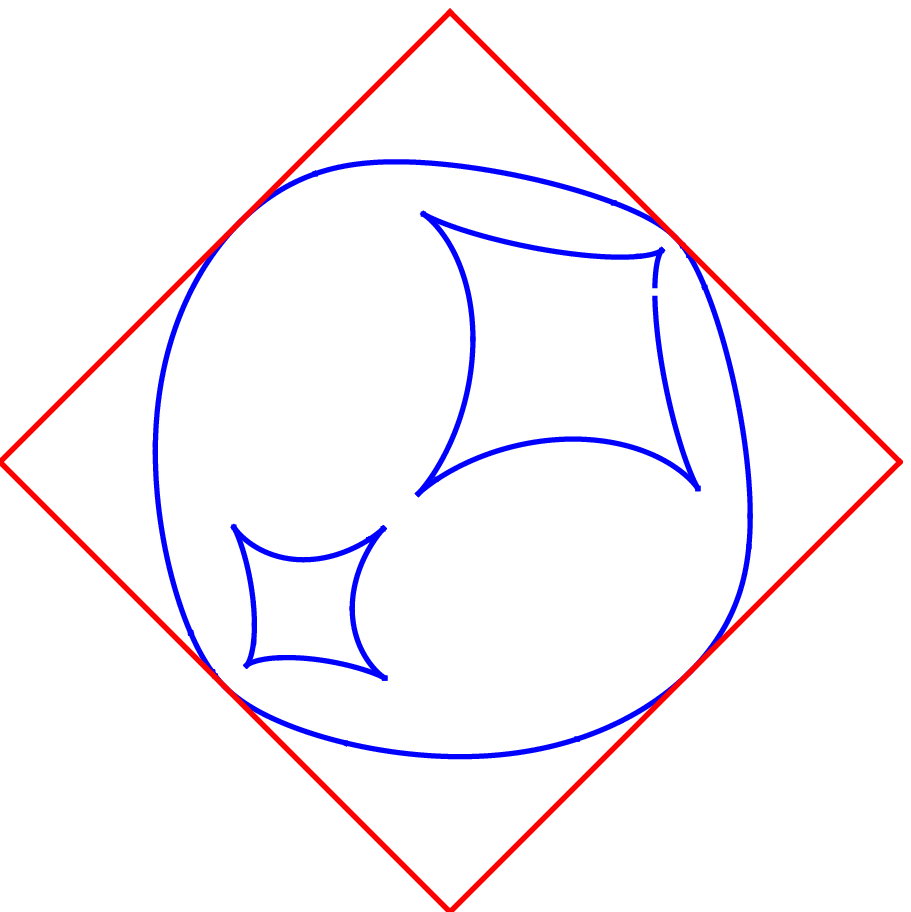}
             \put(-75,-15){$\mu_1=9/10$}
             \hskip .2cm
             \includegraphics[width=3.85cm]{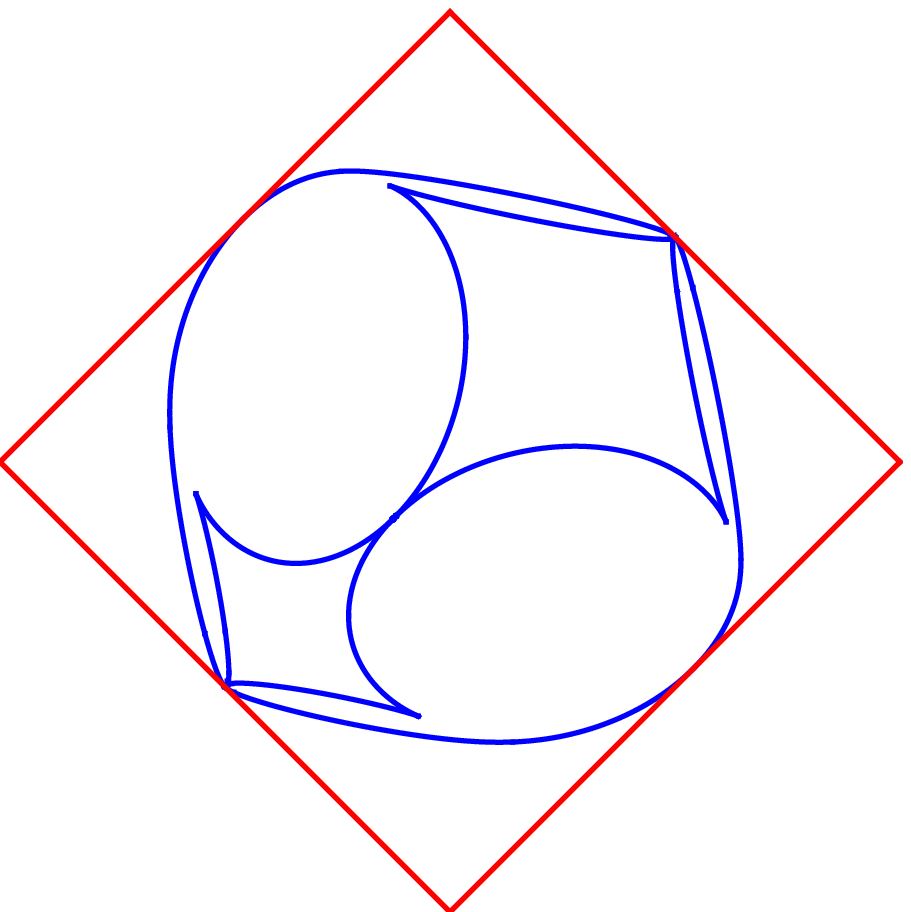}
             \put(-75,-15){$\mu_1=99/100$}}
        \caption{\small Arctic curves for the $3$-toroidal initial data corresponding to different values of $\mu_1$.
        Where $\lambda_0=1/2$, $\lambda_1=1/4$, $\lambda_2=1-\lambda_1=3/4$, $\mu_0=1/2$ and $\mu_2=1-\mu_1$.}
        \label{periodicarcticcurve2x3-2}
\end{figure}

It is clear that the solution of the general system for the $m$-periodic densities of Theorem \ref{sysrho} is 
always a rational fraction of $x,y,z$, with denominator given by the determinant of the system \eqref{sysgenro}. 
Applying the usual analysis to the singularity locus of this denominator
yields some algebraic arctic curve of higher degree. 
The details being cumbersome, we display here various plots
of these curves for $m=3,4$.

\begin{figure}
\centering
\includegraphics[width=14.cm]{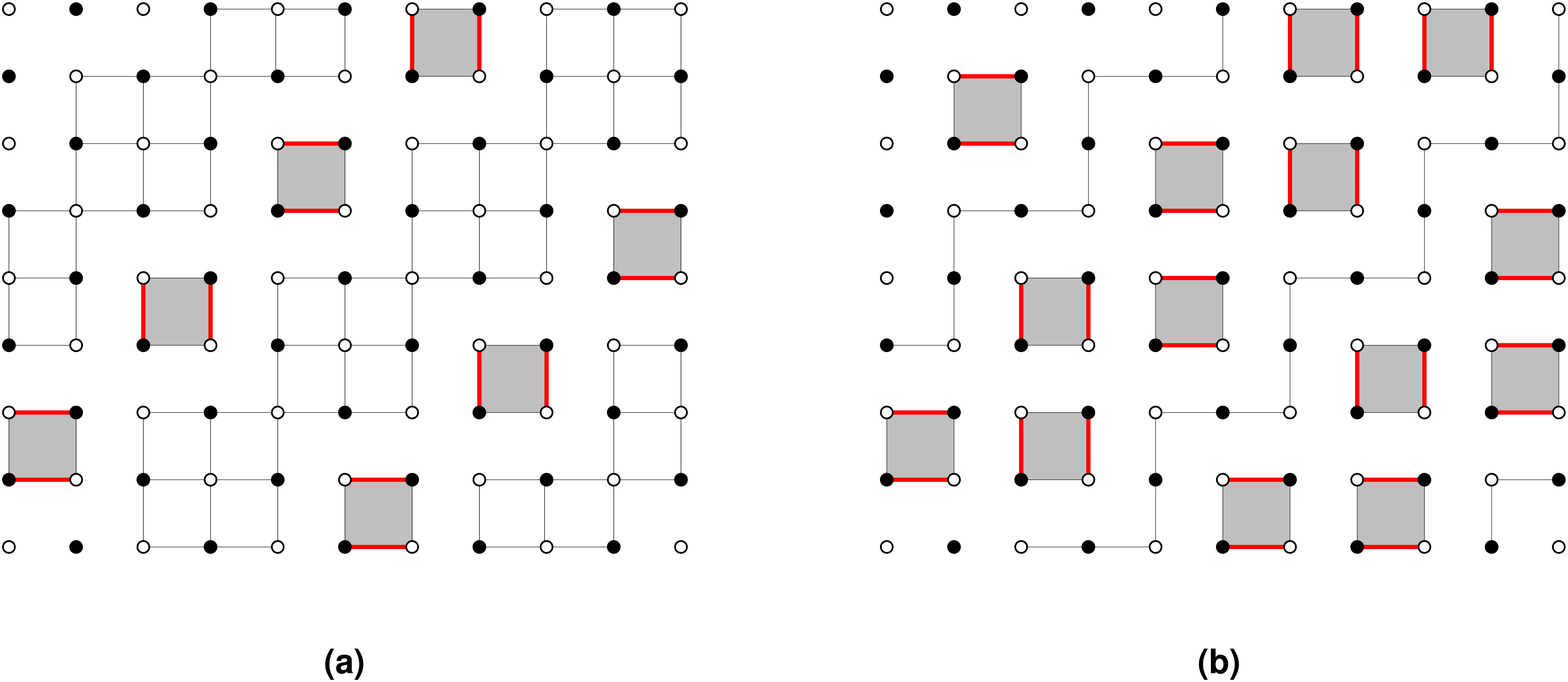}
\caption{\small Typical dominant states for the $3$-toroidal partition function, in the cases where: (a) $c_0\to 0$ while all other face weights remain finite (we have shaded the $c_0$-type faces); and (b) $c_0\sim d_1\to 0$ while all other face weights remain finite (both types of faces are shaded).}
\label{fig:ground23}
\end{figure}

\subsubsection{Case $m=3$}

\begin{figure}
        \centering
             \hbox{ 
                \includegraphics[width=3.85 cm]{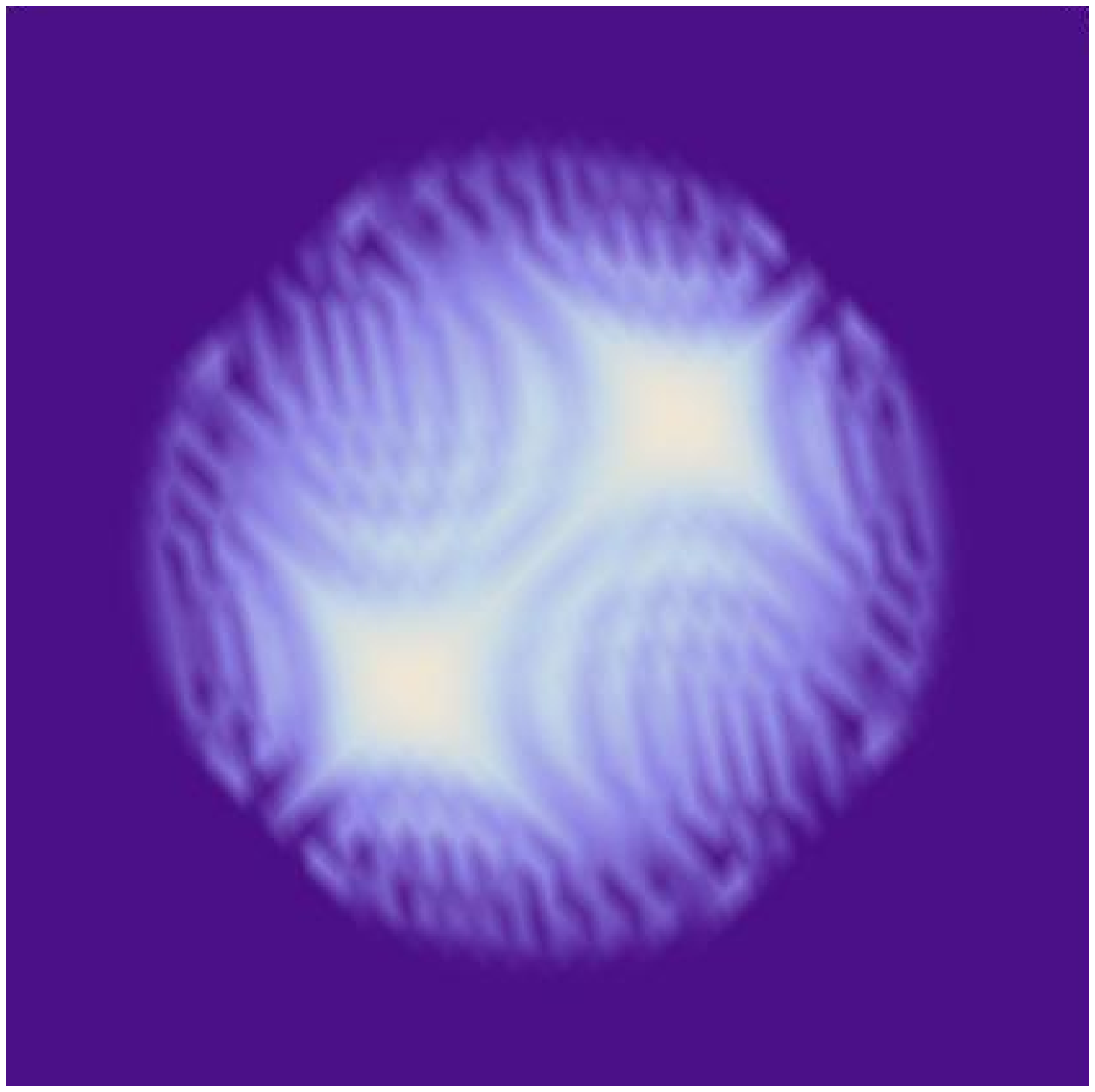}
               \put(-100,-15){$\mu_0=5/9 (c_0=4/5)$}
              \hskip .2cm
               \includegraphics[width=3.85 cm]{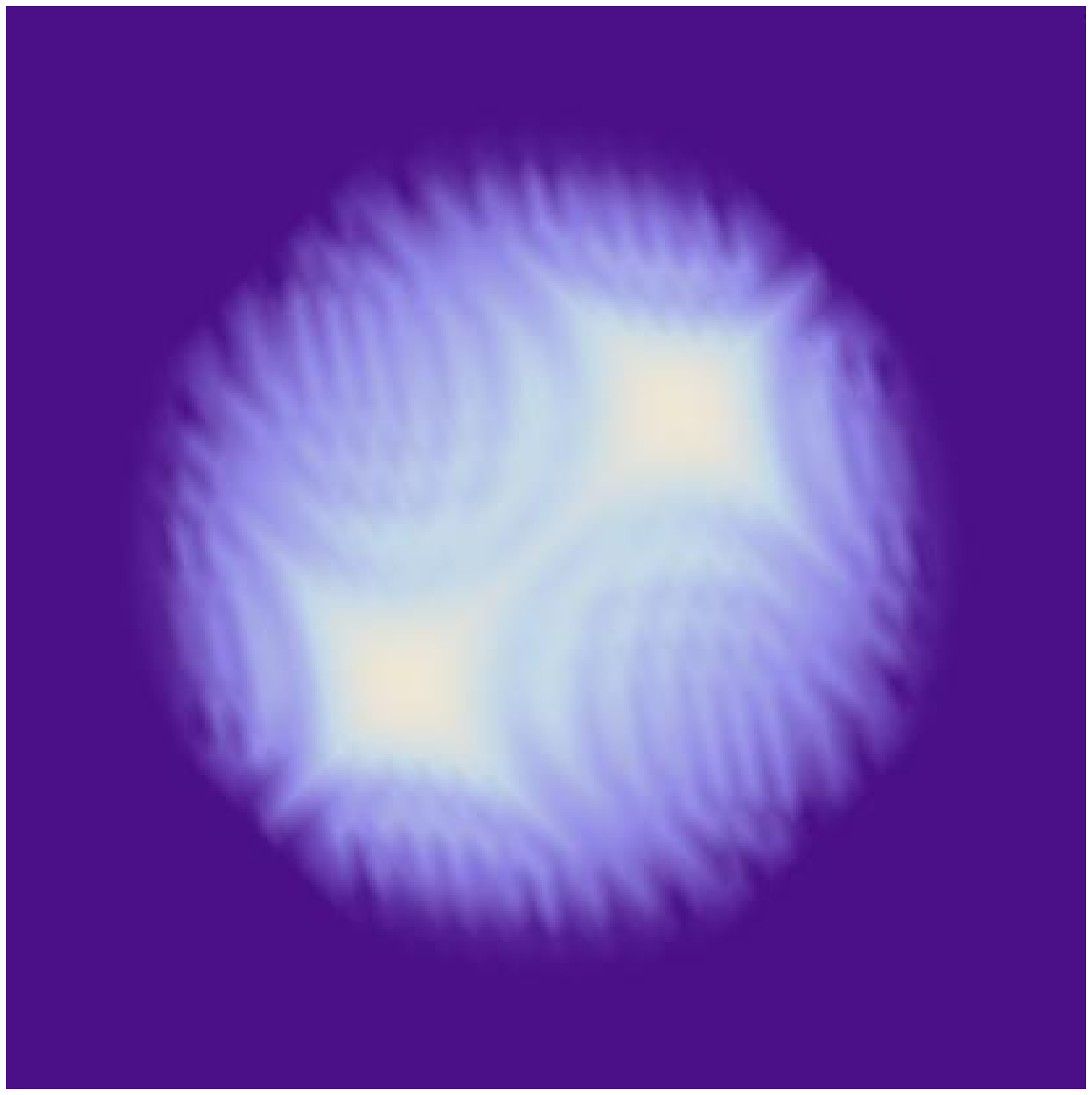}
               \put(-100,-15){$\mu_0=3/5 (c_0=2/3)$}
	      \hskip .2cm
              \includegraphics[width=3.85 cm]{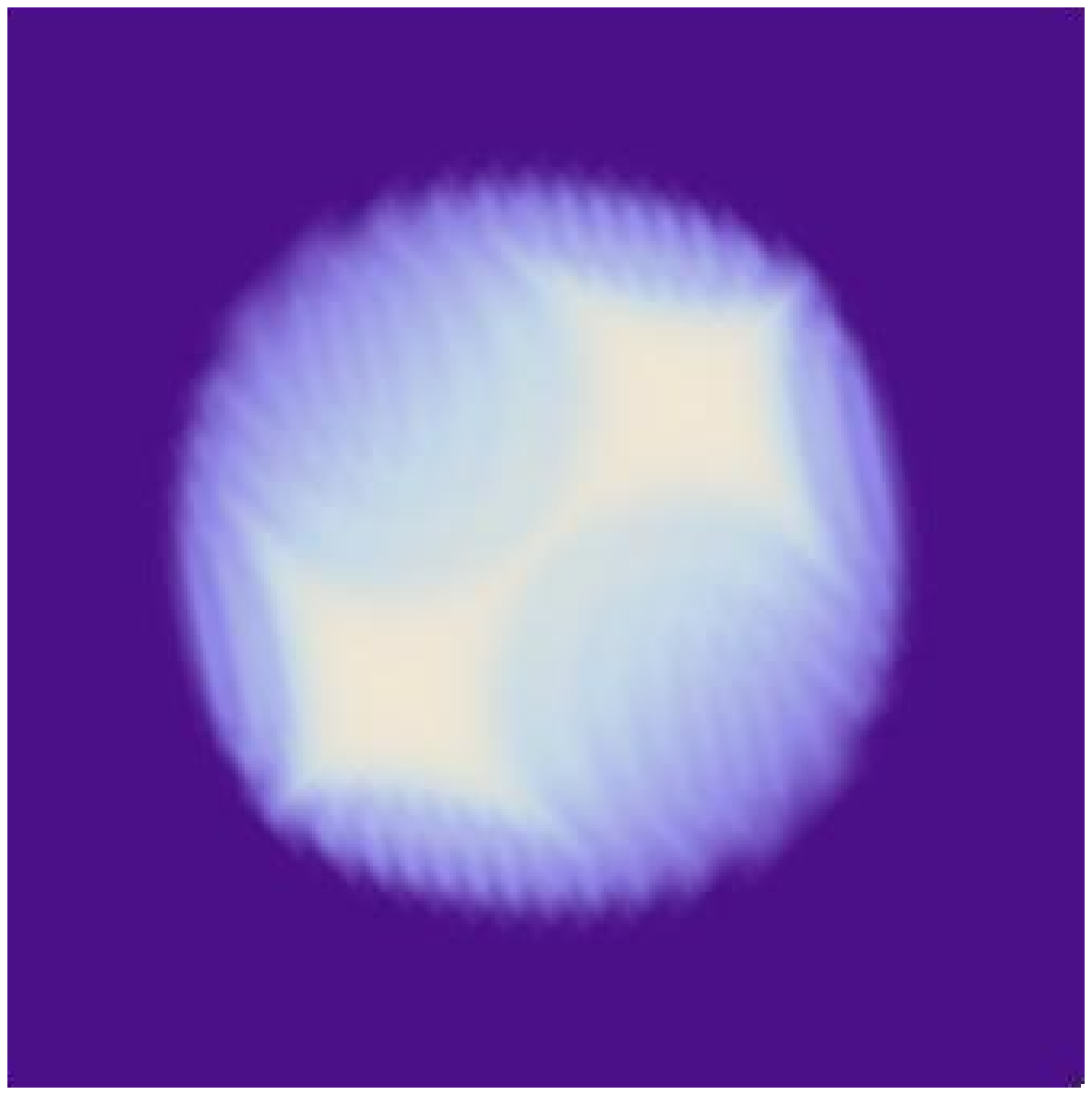}
              \put(-100,-15){$\mu_0=4/5 (c_0=1/4)$}
              \hskip .2cm
              \includegraphics[width=3.85 cm]{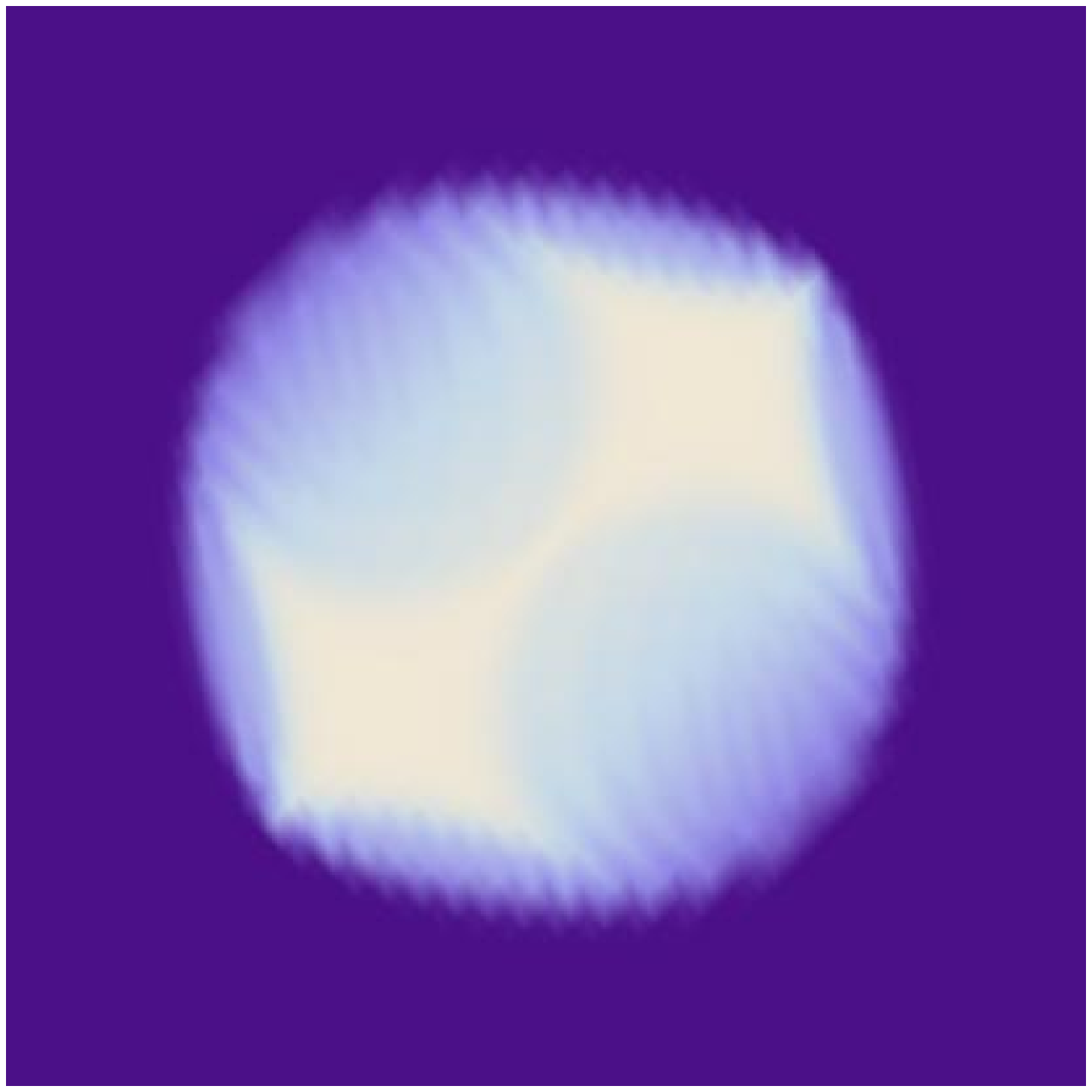}
              \put(-102,-15){$\mu_0=9/10 (c_0=1/9)$}}
        \caption{\small Density profiles for $\rho_{i,j,k}$ for the $3$-toroidal initial data corresponding to different values of $\mu_0$.
        Where $\mu_1=1/2$, $\mu_2=1-\mu_0$ and $\lambda_0=\lambda_1=\lambda_2=1/2$. (We have fixed $a_0=a_1=a_2=b_0=b_1=b_2=c_1=c_2=d_0=d_1=d_2=1$
        and only vary $c_0$.)}
        \label{densityprofiles2x3}
\end{figure}

For $m=3$, the arctic curve is found generically to be the zero locus of a polynomial of degree $14$ in $u,v$. In the physical range of parameters
$\lambda_i,\mu_i\in [0,1]$, we find generically 3 disconnected pieces, a first curve tangent to the square $|u|+|v|=1$ in four points, and two
inner pieces, each with 4 cusps, thus defining three inner regions in addition to the 4 frozen corners.
As before, we expect the two innermost regions to correspond to facet type phases, 
where the configurations get pinned to the faces with the smallest weights. As before, the density tends to 0 exponentially in the corners, and as a power of $k$ in the disordered region.

Assuming say that $c_0\to 0$ while the other faces weights remain finite,
we expect the  crystalline state depicted in Fig.\ref{fig:ground23} (a) to be dominant. If we take $c_0\sim d_1\to 0$, the crystalline state of 
Fig.\ref{fig:ground23} (b) tends to dominate the partition function. In all cases, we have a pinning of the configurations on the shaded 
faces with smallest weights,
all tending to be occupied by two parallel dimers, with 2 possibilities on each shaded face.

These however are not compatible with the boundary conditions of the Aztec graph, hence the formation of facets. Note that no facet occupies the 
center of the Aztec graph, except in the limiting case where one of the weights vanishes
(see Fig.\ref{periodicarcticcurve2x3-2} right, with $\mu_1=\frac{99}{100}$, $\mu_2=\frac{1}{100}$), in which case the two 
facets meet at a quadruple point in the center, which in the limit is the intersection of two tangent ellipses.

We have displayed in Fig. \ref{periodicarcticcurve2x3-1} the arctic curves occurring when the $\mu$ parameters are all trivial ($=1/2$)
while the $\lambda$'s vary. The curve is symmetric, and the two facets have identical size. In Fig.\ref{periodicarcticcurve2x3-2} however,
we have let both $\lambda$'s and $\mu$'s vary, and we see that the relative sizes of the facets vary as well. The qualitative
explanation for this is that one of the facet phases is more compatible with certain corners than others.

Like in the $2\times 2$ periodic  initial data case of Sect.\ref{22periosec}, the behavior of $\rho_{i,j,k}$ for large $k$  is found by comparing how the numerator and denominator of the expression for $\rho^{(0,0)}(x,y,z)$ behave in the vicinity of the singular point $x=y=z=1$. We find that
$\rho^{(0,0)}(1-t x,1-t y,1-t z)$ behaves as $t^{-3}$ when $\lambda_0\neq \lambda_2$, and as $t^{-2}$ otherwise, which means that if
$\lambda_0\neq \lambda_2$ $\rho_{i,j,k}$ tends to $0$ algebraically as $k^{-1}$, whereas it goes to a scaling function without overall scaling
otherwise. We have represented the profiles of the density function $k \rho^{(0,0)}_{i,j,k}$ for size $k=77$ in Fig.\ref{densityprofiles2x3}.

\subsubsection{Case $m=4$}

\begin{figure}
        \centering
             \hbox{ 
                \includegraphics[width=3.75cm]{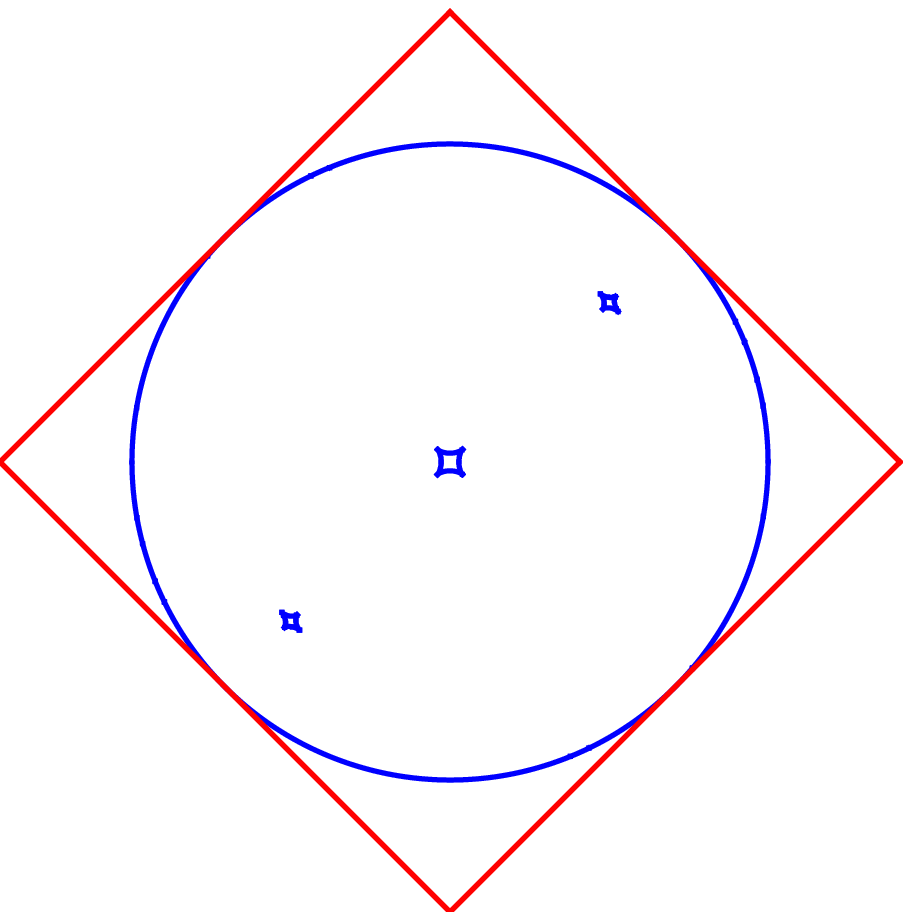}
                \put(-75,-15){$\lambda_2=4/9$}
                \hskip .2cm
             \includegraphics[width=3.75cm]{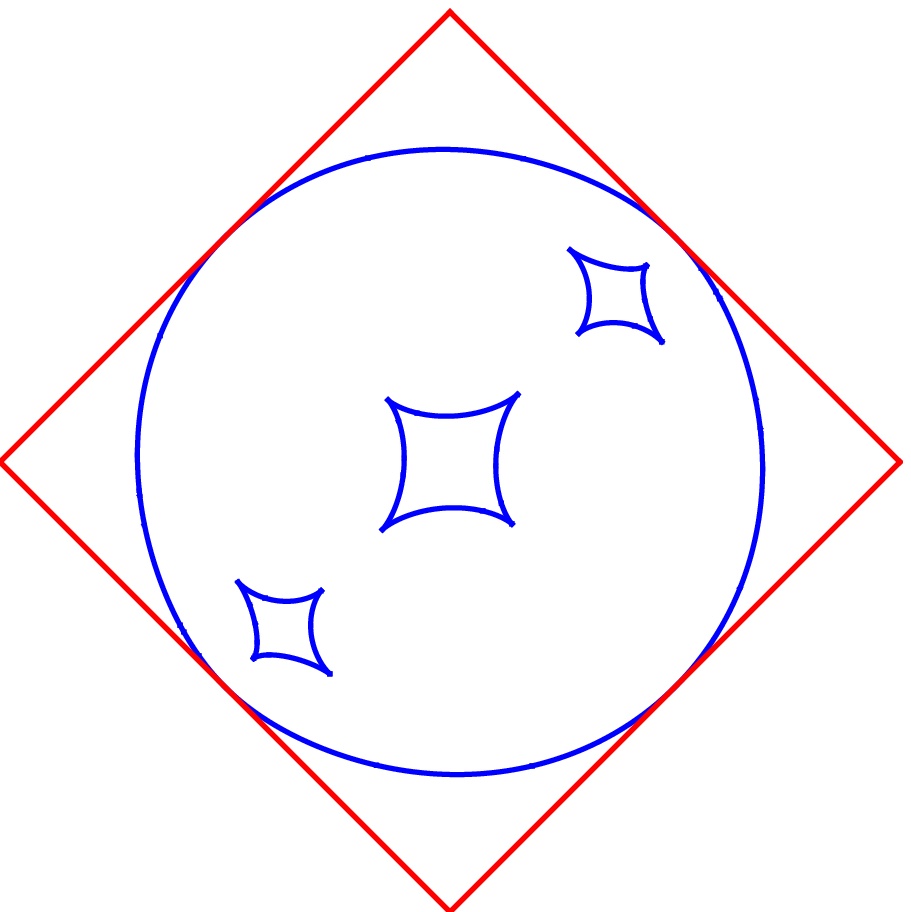}
             \put(-75,-15){$\lambda_2=1/5$}
               \hskip .2cm
             \includegraphics[width=3.75cm]{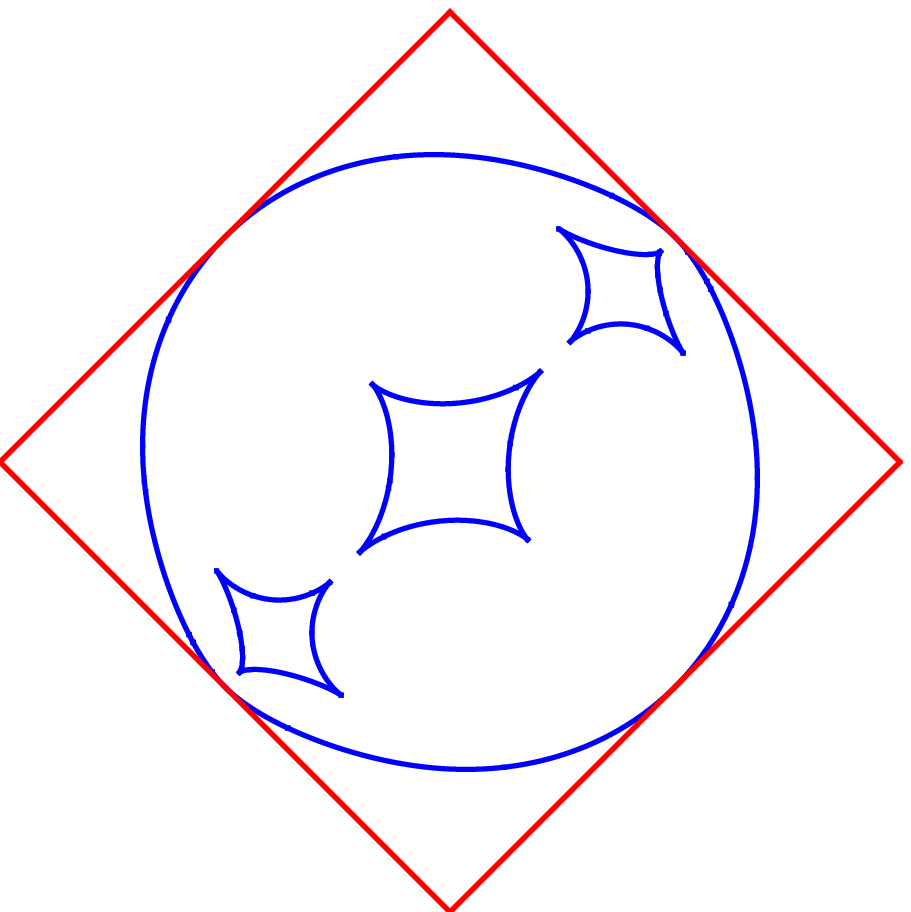}
             \put(-75,-15){$\lambda_2=9/10$}
             \hskip .2cm
             \includegraphics[width=3.75cm]{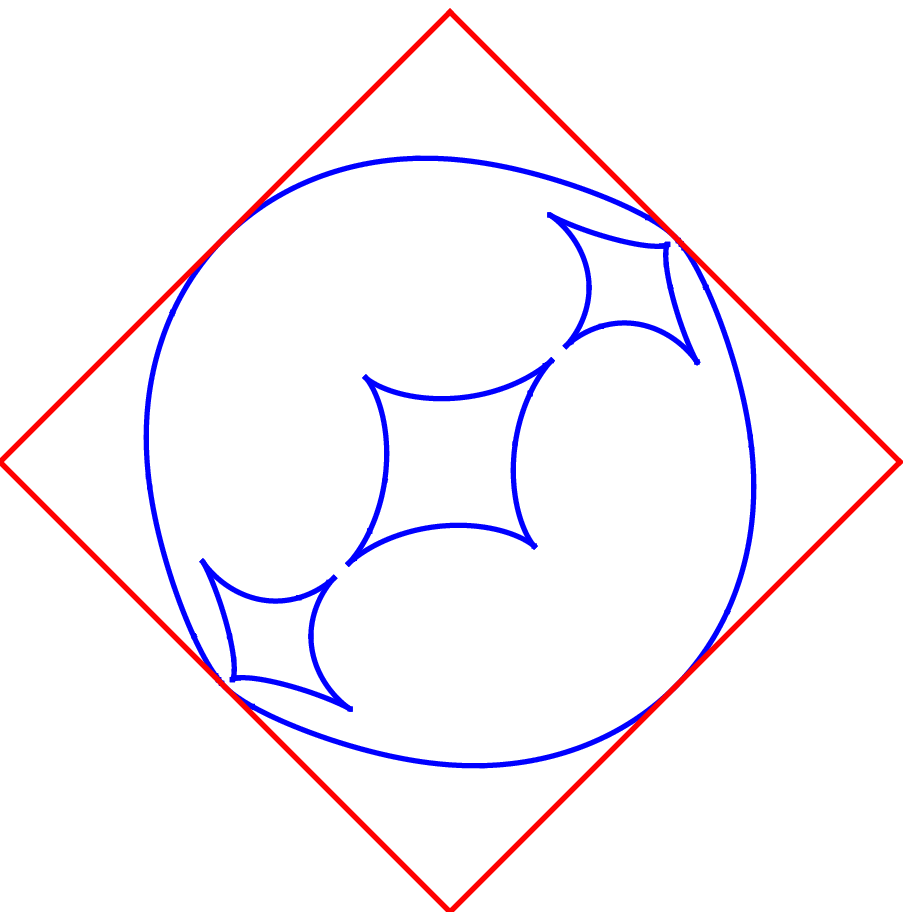}
             \put(-75,-15){$\lambda_2=19/20$}}

        \caption{\small Arctic curves for the $4$-toroidal initial data corresponding to different values of $\lambda_2$.
        Where $\lambda_0=1/2$, $\lambda_1=1/2$, $\lambda_3=1-\lambda_2$ and $\mu_0=\mu_1=\mu_2=\mu_3=1/2$.}
        \label{periodicarcticcurve2x4-1}
\end{figure}

\begin{figure}
        \centering

             \hbox{ 
                \includegraphics[width=3.75cm]{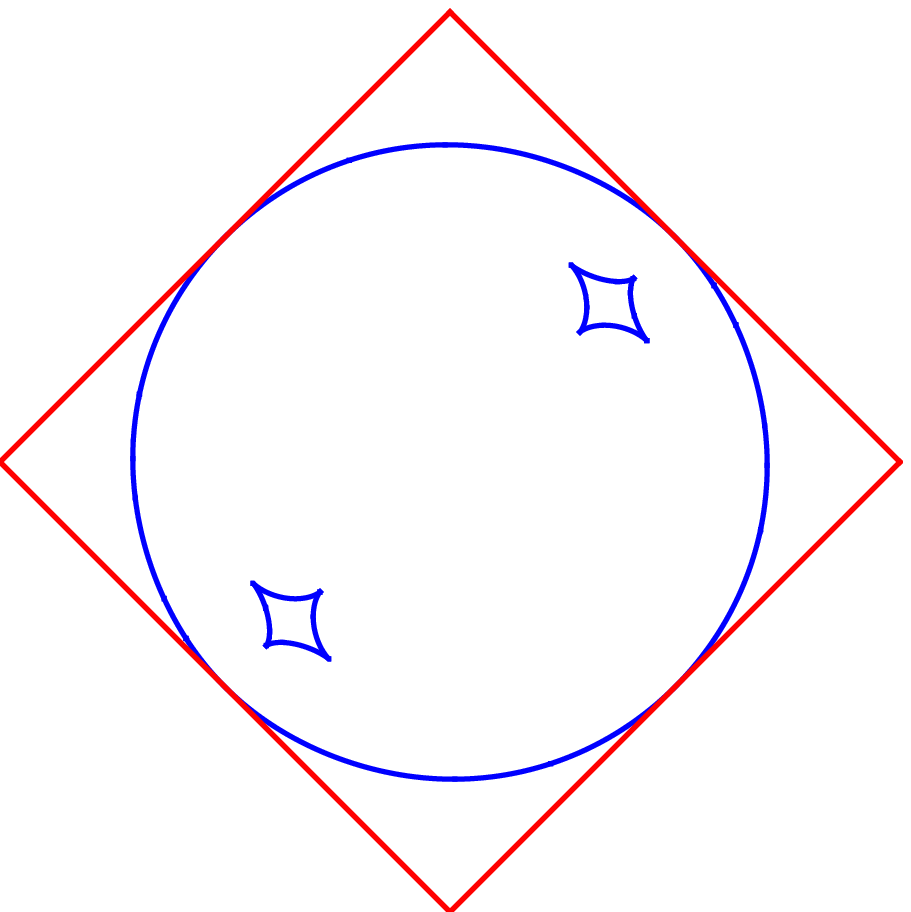}
               \put(-75,-15){$\lambda_2=1/2$}
               \hskip .2cm
                \includegraphics[width=3.75cm]{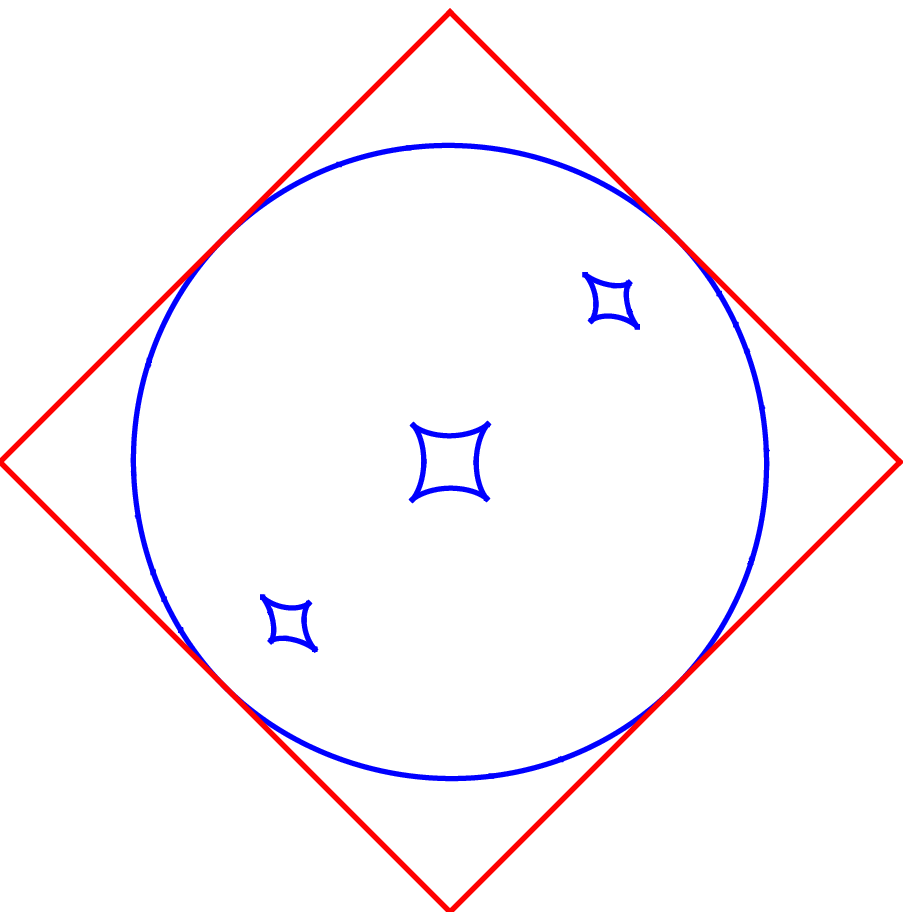}
                \put(-75,-15){$\lambda_2=1/3$}
               \hskip .2cm
             \includegraphics[width=3.75cm]{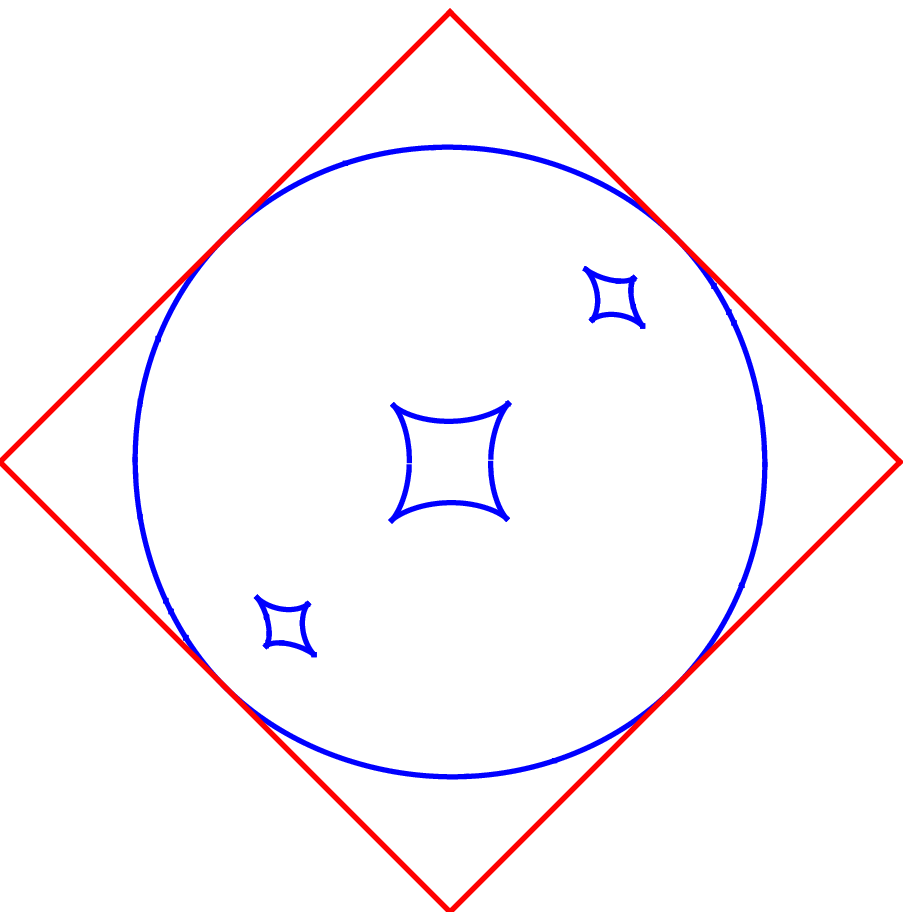}
             \put(-75,-15){$\lambda_2=1/4$}
             \hskip .2cm
             \includegraphics[width=3.75cm]{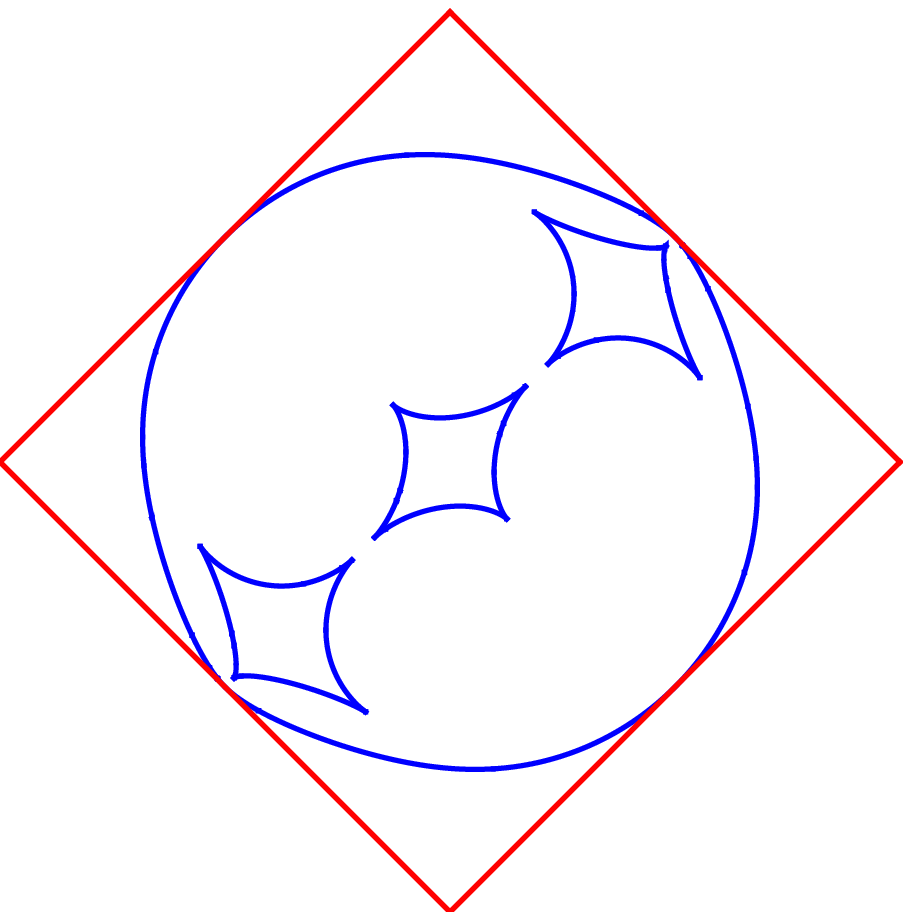}
             \put(-75,-15){$\lambda_2=9/10$}}

         \caption{\small Arctic curves for the $4$-toroidal initial data corresponding to different values of $\lambda_2$.
        Where $\lambda_0=1/2$, $\lambda_1=2/3$, $\lambda_3=\frac{2}{1+\lambda_2}-1$ and $\mu_0=\mu_1=\mu_2=\mu_3=1/2$.}
        \label{periodicarcticcurve2x4-2}
\end{figure}

\begin{figure}
        \centering

             \hbox{ 
                \includegraphics[width=3.75cm]{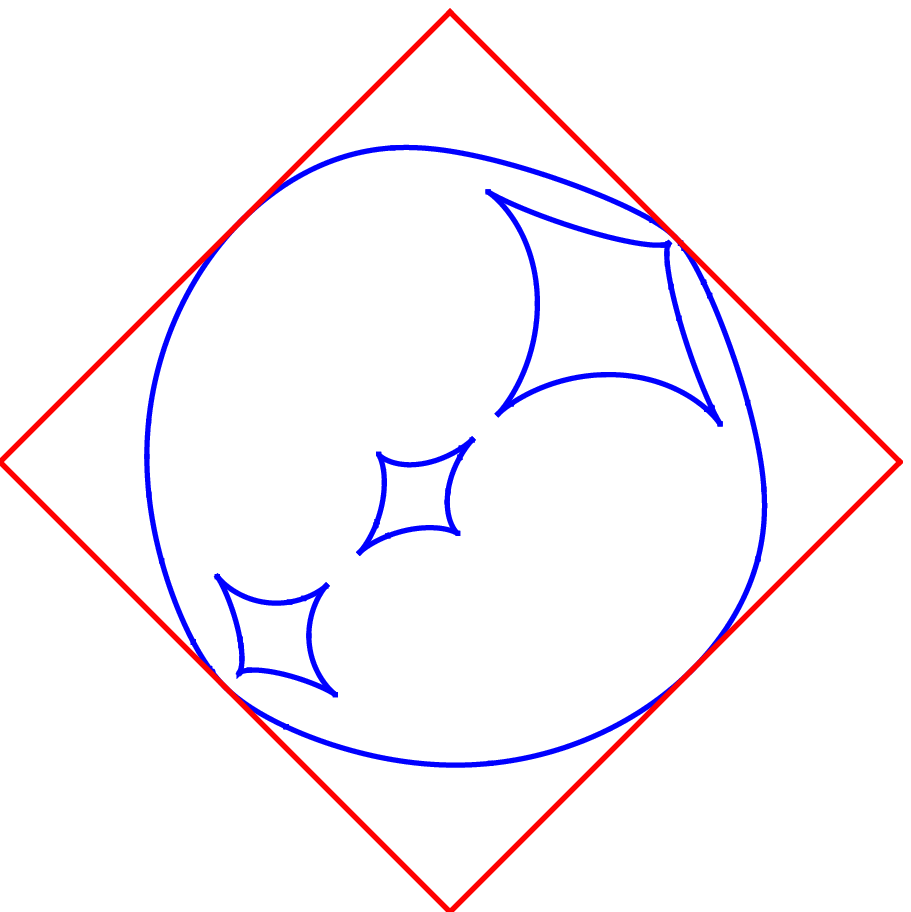}
                \put(-75,-15){$\mu_2=1/3$}
               \hskip .2cm
               \includegraphics[width=3.75cm]{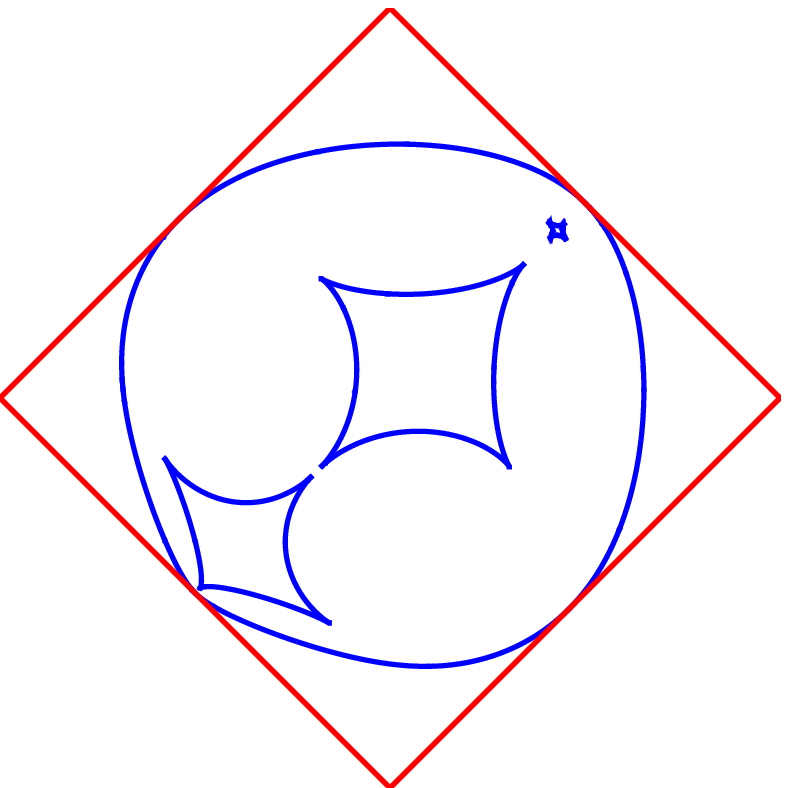}
               \put(-75,-15){$\mu_2=2/3$}
               \hskip .2cm
             \includegraphics[width=3.75cm]{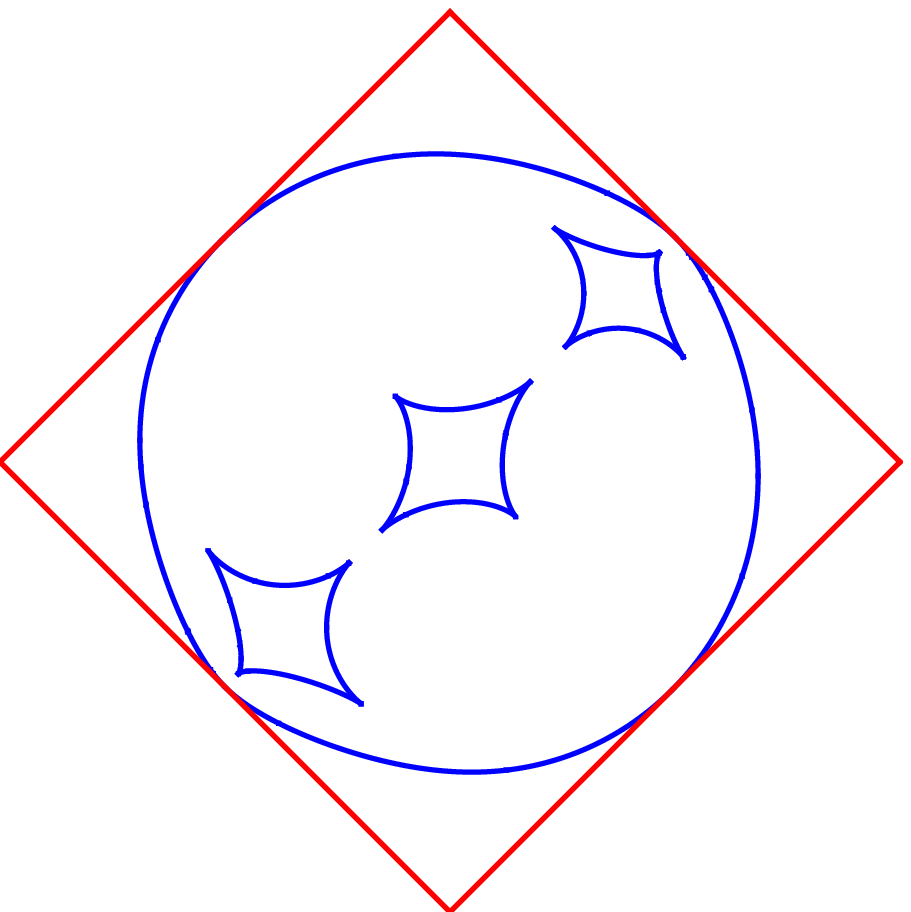}
             \put(-75,-15){$\mu_2=4/5$}
             \hskip .2cm
             \includegraphics[width=3.75cm]{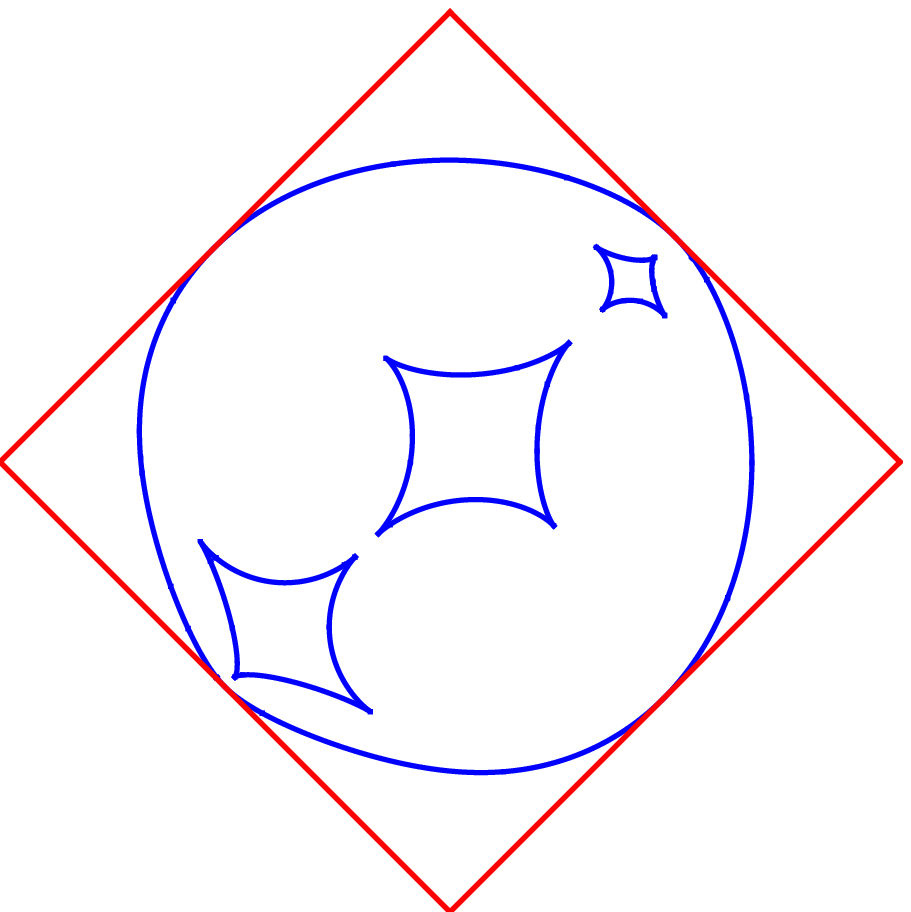}
             \put(-75,-15){$\mu_2=9/10$}}

         \caption{\small Arctic curves for the $4$-toroidal initial data corresponding to different values of $\mu_2$.
        Where $\lambda_0=1/2$, $\lambda_1=2/3$, $\lambda_2=4/5$, $\lambda_3=1/9$ and $\mu_0=1/2$, $\mu_1=1/4$ and 
        $\mu_3=\frac{3(1-\mu_2)}{3-2\mu_2}$.}
        \label{periodicarcticcurve2x4-3}
\end{figure}

The structure of phases is similar to the cases $m=2,3$, except that we now have $3$ inner facet regions along the diagonal of the square domain.
As before, we first display in Figs.\ref{periodicarcticcurve2x4-1}-\ref{periodicarcticcurve2x4-2} the case when all $\mu$'s are trivial ($=1/2$) and $\lambda$'s vary. We see that the picture is again symmetric w.r.t. the center. In Fig. \ref{periodicarcticcurve2x4-3}, we let both $\lambda$'s and $\mu$'s vary, and observe that the relative sizes of the various facet regions vary.

Note finally that the case $m=4$ reduces to the case $m=2$ if we choose $\lambda_2=\lambda_0$ and $\lambda_3=\lambda_1=1-\lambda_0$,
and $\mu_2=\mu_0$, $\mu_3=\mu_1=1-\mu_0$, as this changes the periodicity of $R_{i,j,k}$, $L_{i,j,k}$ to $\vec{e_1}'=(2,2)$, $\vec{e_2}'=(2,-2)$ and $\vec{e_3}'=(1,1,2)$.

\begin{figure}
\centering
\includegraphics[width=6.cm]{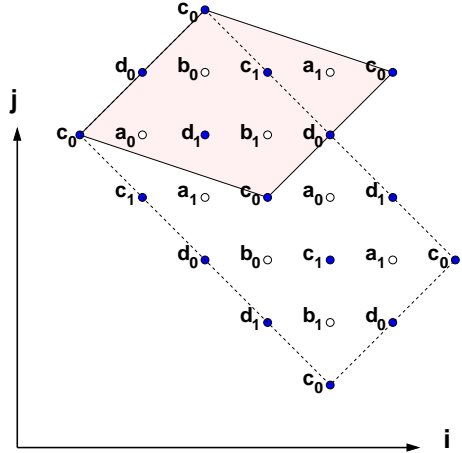}
\caption{\small For the $4$-toroidal case with $\lambda_0+\lambda_2=\lambda_1+\lambda_3=1$ and $\mu_0+\mu_2=\mu_1+\mu_3=1$,
the fundamental domain is {\it de facto} reduced by half. We have shaded the new fundamental domain, generated by $(2,2)$ and $(3,-1)$. }
\label{fig:4to3}
\end{figure}

Another interesting sub-case is when $\lambda_0+\lambda_2=\lambda_1+\lambda_3=1$ 
while $\mu_0+\mu_2=\mu_1+\mu_3=1$. In this case, we get a family of curves with one less facet region, 
which qualitatively look like those of the $m=3$ case (see Fig.\ref{periodicarcticcurve2x4-2} left for an example with $\lambda_0=\frac{1}{2}$
and $\lambda_1=\frac{2}{3}$ while $\mu_0=\mu_1=\frac{1}{2}$). 
Note that the choice of parameters above has changed the periodicity of 
$R_{i,j,k}$, $L_{i,j,k}$ to $\vec{e_1}'=(2,2)$, $\vec{e_2}'=(3,-1)$ and $\vec{e_3}'=(1,1,2)$, thus effectively dividing by $2$ the volume of the fundamental domain in $\Z^3$ for the $4$-toroidal case. One way of realizing the relations between the $\lambda$'s and the $\mu$'s 
is to take $c_{i+2}=d_i$, $d_{i+2}=c_i$ and $b_{i+2}=a_i$, $a_{i+2}=b_i$ for all $i$. The torus of initial data in this case is twice as small,
generated by $(2,2)$ and $(3,-1)$ as shown in Fig.\ref{fig:4to3}.

Note that despite its qualitative similarity with the $m=3$ case, this is different, as the two fundamental domains are inequivalent.

\subsubsection{General case and quotients}

In general, we expect generically $m-1$ facet regions along the diagonal of the square $|u|+|v|=1$.
Like in the cases $m=2,3,4$, we find that $\rho^{(0,0)}(1-tx,1-t y,1-t z)$ behaves like $t^{-3}$ for $\lambda_0\neq \lambda_{m-1}$
and like $t^{-2}$ otherwise, hence we expect that $\rho_{i,j,k}$ scales like $k^{-1}$ for $\lambda_{0}=\lambda_{m-1}$
and tends directly to a scaling function without overall rescaling otherwise.

For even $m=2p$, we have, like for $m=4$, two sub-cases of interest. 

The first one is a reduction to the $p$-toroidal case, by picking parameters $\lambda_i,\mu_i$
such that $\lambda_{i+p}=\lambda_i$ and $\mu_{i+p}=\mu_i$ for $i=0,1,..,p-1$. 
The second is by picking $\lambda_{i+p}=1-\lambda_i$ and $\mu_{i+p}=1-\mu_i$, $i=0,1,...,p-1$. 
The new periodicity of $R_{i,j,k}$, $L_{i,j,k}$ is $\vec{e_1}'=(2,2)$, $\vec{e_2}'=(p+1,p-1)$ and 
$\vec{e_3}'=(1,1,2)$, again dividing by $2$ the volume of the fundamental domain of $\Z^3$ of the $m$-toroidal case.
For this reason, we call the corresponding boundary conditions the $\Z_2$ quotient of the $m=2p$-toroidal case.

More generally, if $m=p q$ for some positive integers $p,q$, we may take $\lambda_{i+q}=\lambda_i$ and $\mu_{i+q}=\mu_i$ for all $i$
to reduce the $m$-toroidal case to the $q$-toroidal one. Similarly, for $p=2\ell$ even,
picking $\lambda_{i+q}=1-\lambda_i$ and $\mu_{i+q}=1-\mu_i$, for all $i$
reduces the $m$-toroidal case to one with periodicities $\vec{e_1}'=(2,2)$, $\vec{e_2}'=(q+1,q-1)$ and $\vec{e_3}'=(1,1,2)$,
that is to the $\Z_2$ quotient of the $2q$-toroidal case.

\section{Conclusion and discussion}

\subsection{Summary and perspectives}

In this paper we have obtained the exact solutions of the octahedron equation with initial data satisfying 
$m$-toroidal boundary conditions, namely  some specific doubly periodic initial conditions. We have used 
this solution to compute a density function of the associated dimer model on an Aztec graph, and investigate its singularities
in the limit of large size of the graph.

The generic result is a phase diagram with three types of phases for the dimer configurations: frozen corners with no entropy, 
disordered intermediate region, and facets with order and entropy.

Our analysis uses exclusively the octahedron equation and the properties of its particular solutions. 
Analogous equations have been 
considered recently such as the cube recurrence \cite{CS} related to combinatorial groves and the hexahedron equation 
related to double-dimers \cite{KEN}. 
Although not directly related to a dimer model, these display the same arctic phenomena \cite{KPS,KEN}. 
It would be interesting to see whether analogous exact solutions such as the $m$-toroidal ones of the present paper, 
exist in these other cases. If so, we expect some special patterns of frozen, disordered and facet phases to occur.

Another direction of generalization would consist of considering different geometries of initial data. In \cite{DF13}, 
arbitrary initial data on stepped surfaces for the octahedron equation were investigated. Given a stepped surface 
$(i,j,k_{i,j})_{i,j\in \Z}$ with $|k_{i,j+1}-k_{i,j}|=|k_{i+1,j}-k_{i,j}|=1$ and $i+j+k_{i,j}=1$ mod 2 for all $i,j$, 
these consist of the following initial data assignments:
$$ T_{i,j,k_{i,j}} =t_{i,j} \qquad (i,j\in \Z) $$
for some fixed parameters $t_{i,j}$, $i,j\in \Z$. It was shown in \cite{DF13} that the solution $T_{i,j,k}$ of the octahedron equation with such initial conditions
is the partition function of a dimer model on a graph obtained by taking the {\it shadow} of the point $(i,j,k)$ onto the initial data surface,
and attaching to the resulting graph edges weights expressed in terms of the local parameters $t_{i,j}$.
We may now consider the succession of partition functions $T_{i,j,k}$ for domains of growing size 
as $k$ increases. For sufficiently nice surfaces such as ``flat" periodic structures with a fixed average rational normal vector, 
we expect the thermodynamic limit $k\to \infty$  of these models to make sense. Special solutions of the octahedron equation
should still be amenable to the study of arctic curve phenomena.
The same could possibly hold for the so-called brane tiling models \cite{MUSIK}.

Our new solutions should allow for an investigation of the behavior of tilings at the boundary between phases, in the same spirit
as Ref. \cite{KJ}, where it was shown in the uniform case that the ``North Polar Region" boundary converges to the Airy process,
allowing for a connection to eigenvalue distributions of large random matrix ensembles.
In particular, we expect new universality classes to govern more singular regions, such as the quadruple points arising when some
weights tend to $0$, corresponding to the identification of two cuspidal points of two neighboring facet boundaries
(see the rightmost picture of  Figs. \ref{periodicarcticcurve2x3-1}  and \ref{periodicarcticcurve2x3-2} for instance), and reminiscent 
of the configuration leading to the tacnode process \cite{BD,ADM}.

\subsection{Cluster algebra and arctic curves}

\begin{figure}
\centering
\includegraphics[width=16.cm]{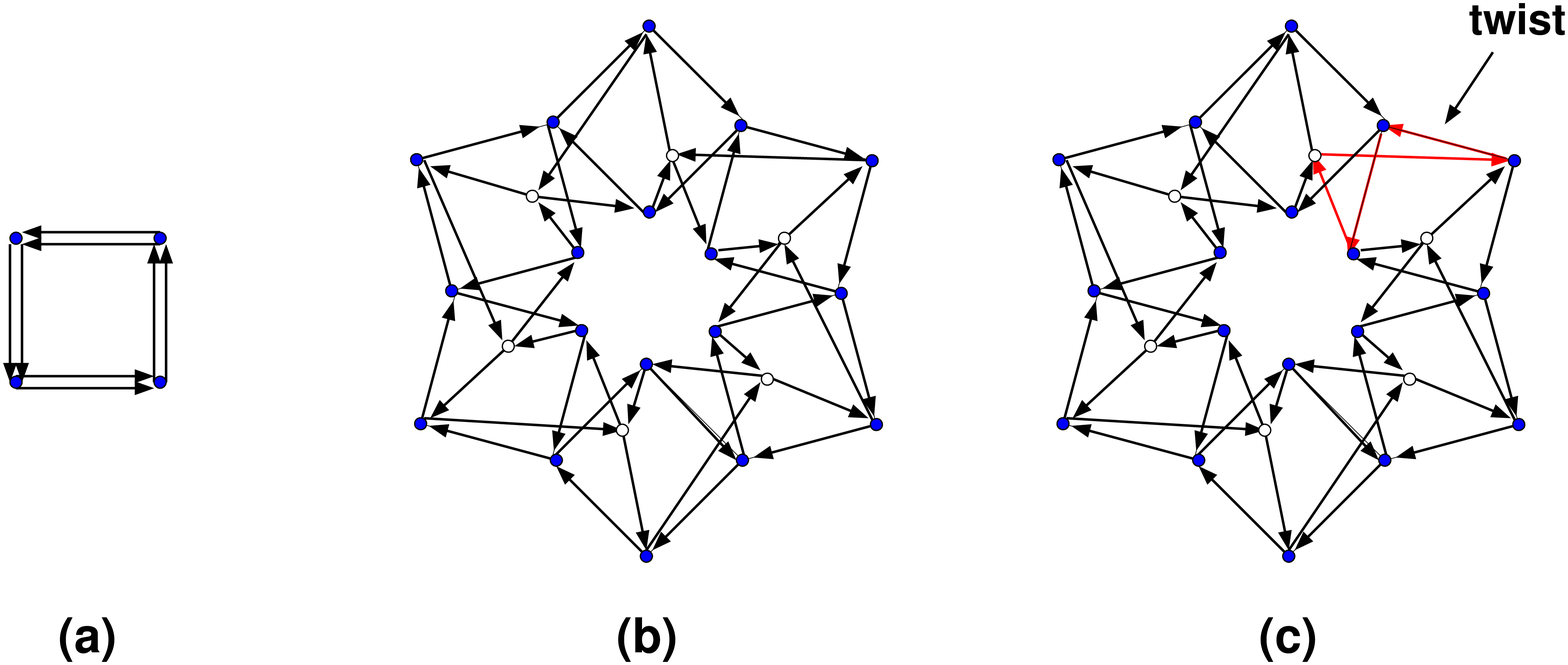}
\caption{\small The quivers obtained by folding the $T$-system quiver onto a torus, for the $2\times 2$ case (a), the $m=6$-toroidal case (b) and the $\Z_2$ quotient of the $m=12$-toroidal case (the quiver is simply obtained from the $m=6$ case by a twist, i.e. reversal of 4 arrows as indicated).}
\label{fig:quivers}
\end{figure}

The $T$-system equation  is known to be a particular mutation in an infinite rank  cluster algebra, with quiver given by 
an antiferromagnetic orientation of the edges of the square lattice $\Z^2$, namely such that every other square face 
is oriented clockwise. The cluster algebra describes rational transformations (called mutations) of variables $(x_i)$ 
attached to the vertices  $i$ of the quiver as follows. There is a mutation $\mu_k$ for each vertex $k$ of the quiver, and 
the action of $\mu_k$ on $(x_i)$ is \cite{FZI}:
\begin{eqnarray}
\mu_k(x_i)&=&x_i\qquad {\rm if}\ i\neq k\nonumber \\
\mu_k(x_k)&=& \frac{1}{x_k}\left(\prod_{i\to k} x_i +\prod_{k\to j} x_j \right)\label{muta}
\end{eqnarray}
where the first product is over all the tails of arrows with head $k$, and the second is over all heads of arrows whose tail is $k$.
The quiver mutates under $\mu_k$ as well (see \cite{FZI} for a precise definition: this is a so-called cluster algebra of geometric type, 
without coefficients).

The doubly periodic situation we have investigated in this paper corresponds to finite rank {\it folded} cluster
algebras, for which the above quiver has been folded, by identifying the vertices modulo the corresponding torus generators. 
We have represented in Fig.\ref{fig:quivers} (a) the folded quiver for the $2\times 2$ periodic case of Section \ref{sectwotwo},
in Fig.\ref{fig:quivers} (b) the folded quiver for the $m=6$-toroidal case, and in in Fig.\ref{fig:quivers} (c) the folded quiver for 
the $\Z_2$ quotient of the $m=12$-toroidal case. The latter is obtained from the $m=6$ quiver via a twist, namely the reversal of 4 arrows 
connecting two neighboring pairs of vertices. Alternatively, if we view the $m$-toroidal case quiver as a ribbon made of $m$ oriented octahedra, 
then the $\Z_2$ quotient of the $2m$-toroidal case quiver is the corresponding M\" obius strip.
Note that all the quivers are naturally bipartite.

The corresponding $T$ system with toroidal boundary conditions is simply obtained by restricting to compound mutations
in which one performs all mutations at vertices of a given parity, and alternating between the two parities (any such compound mutation reflects 
all the arrows of the quiver). Let us denote by $(x_i)_{i\in I}$ and $(B_{i,j})_{i,j\in I}$ respectively the cluster variables and exchange matrix 
elements coding the quiver ($B$ is a skew-symmetric integer matrix, such that $B_{i,j}>0$ counts the number of arrows from vertex $i$ to $j$). 
Here $I$ is a finite set, for instance $I=\{1,2,...,4m\}$ in the $m$-toroidal case.

The special property of the above quivers and their compound mutations is that the ``coefficient" variables defined as
\begin{equation} y_i=\prod_{k\in I} x_k^{B_{i,k}} \qquad (i\in I) \end{equation}
take only finitely many distinct values under arbitrary compound mutations. 
Indeed, in all cases, we may identify $y_i$ with the ratio $L_{i,j,k}/R_{i,j,k}$, 
which indeed takes only finitely many distinct values. Let us call {\it y-finite} the corresponding cluster algebra.

This property was instrumental for determining the density $\rho$ exactly. More generally, we may define an analogue 
of such a density for any cluster algebra. Pick an initial cluster $(x_i)_{i\in I}$, and a particular cluster variable, say $x_{i_0}$. 
For any mutated cluster
$(x_i')_{i\in I}$ we may define the density:
$$ \rho^{(i_0)}(x';x)_j=x_{i_0} \partial_{x_{i_0}} \, \log \, x_j' $$
It is easy to derive a linear recursion relation for $\rho_j$ by differentiating the mutation relation \eqref{muta}. We get:
$$ \rho_k'+\rho_k= L \sum_{i\to k} \rho_i +R \sum_{k\to j}\rho_j $$
with $L/R=y_k$ and $L+R=1$. 

Assume that we choose some special set of (possibly compound) mutations, such that the variables
$y_i$ only take finitely many values under iteration of these mutations, 
then we see that $\rho$ can be determined as the solution of a periodic linear system, with initial data $\rho^{(i_0)}(x;x)_j=\delta_{j,i_0}$. Denoting by $x^{(k)}$ the image of $x$ under the $k$-th iterate of these mutations
we may form the generating series: $\rho(\{w_i\}_{i\in I};z)=\sum_{k\geq 0, j\in I} \rho^{(i_0)}(x^{(k)};x)_j \, w_j\, z^k$. 
As $\rho^{(i_0)}(x^{(k)};x)_j$ solve a  periodic linear system, the function $\rho(\{w_i\}_{i\in I};z)$ is necessarily a rational fraction of $w_i,z$.
The denominator of this fraction governs the singularities of $\rho(\{w_i\}_{i\in I};z)$ at large $k$. 
The interesting case is if $I$ is infinite, and $x_i$
doubly or multiply periodic, say $I=\{ \vec{a}=(a_1,a_2,...,a_r)\in \Z^r \}$ and $x_{\vec{a}+\vec{e}_i}=x_{\vec{a}}$
for $r$ linearly independent vectors $\vec{e_i}$, $i=1,2,...,r$.  Then we can choose finitely many catalytic variables say $w_1,...,w_r$
for the generating function $\rho(\{w_i\}_{i\in[1,r]};z)=\sum_{k\geq 0,\vec{a}\in \Z^r}\rho^{(i_0)}(x^{(k)};x)_{\vec{a}} \, w_{\vec{a}} \, z^k$
where $w_{\vec{a}}=\prod_{i=1}^r w_i^{a_i}$. This multivariate generating function should display an arctic curve, 
obtained by blowing up the vicinity of the point $w_i=1$, $z=1$ and taking the algebraic dual.

This makes the search for y-finite cluster algebras worthwhile, as each of them will lead to interesting algebraic ``arctic" curves.

\appendix
\section{$m=3$ and $m=4$ arctic curves}
In this appendix we include explicit expressions for the limit shape curves for the cases $m=3$ and $m=4$. We only give the expression for a 
specific value of the parameters, since the expressions are in general very cumbersome.
\begin{itemize}
 \item Arctic curve for m=3 with $\lambda_0=1/2$, $\lambda_1=1/4$, $\lambda_2=3/4$, $\mu_0=1/2$, $\mu_1=1/5$ and $\mu_2=4/5$. This corresponds
 to the left most curve in Fig. \ref{periodicarcticcurve2x3-2}.
 \tiny{\begin{align}
  P(u,v)=&603358073569688095393738000 u^{14}+1822971422522481873814304800 vu^{13}+302414835014281399576977600 u^{13}\nonumber\\
  &+7658013562515635323215886000 v^2u^{12}+626386479045976264625165760 v u^{12}+65648625922043130480407960u^{12}\nonumber\\
  &+8502660801885990861442260800 v^3 u^{11}-4016291377989674598197523840 v^2u^{11}-8955889812423159779663425824 v u^{11}\nonumber\\
  &-648516348371464166524636080u^{11}+24870815123962290558144794000 v^4 u^{10}-961218355287663519951292800 v^3u^{10}\nonumber\\
  &-6515407606857043381218037200 v^2 u^{10}-172367781226698452854372560 vu^{10}+1099108080544208467044202281 u^{10}\nonumber\\
  &+8664424609796383599417068000 v^5u^9-17028590399764390279389912000 v^4 u^9-11010604932056552215403730080 v^3u^9\nonumber\\
  &+8631816024097405173283346160 v^2 u^9+6130342332781365103023636918 vu^9+300038159586641951467587240 u^9\nonumber\\
  &+48101067368389417583947124400 v^6u^8-8874912221343735420641284800 v^5 u^8-30642500612723566034952420120 v^4u^8\nonumber\\
  &+4089814979226593490453920400 v^3 u^8+2890833100949061663021542421 v^2u^8-1537862122247709326673670200 v u^8\nonumber\\
  &-1276791684735224437145235252u^8+165638160476224209249648000 v^7 u^7-15185547478970793846022129920 v^6u^7\nonumber\\
  &+10619324480232243252222805440 v^5 u^7+17121927414923400963351428640 v^4u^7-4642084019946561205079466936 v^3 u^7\nonumber\\
  &-4201308893745605384096673600 v^2u^7+97128658780698750571038384 v u^7+16658271644450437458125640u^7\nonumber\\
  &+48101067368389417583947124400 v^8 u^6-15185547478970793846022129920 v^7u^6-54696534109775129942931200352 v^6 u^6\nonumber\\
  &+8498087480515562992290313440 v^5u^6+20848735934263779279940738242 v^4 u^6-2928090072842649426783830400 v^3u^6\nonumber\\
  &-1125942030946106640101862864 v^2 u^6+881693827811784667334364120 vu^6+410818358444129895320450118 u^6\nonumber\\
  &+8664424609796383599417068000 v^9u^5-8874912221343735420641284800 v^8 u^5+10619324480232243252222805440 v^7u^5\nonumber\\
  &+8498087480515562992290313440 v^6 u^5-16598910777434586615901305852 v^5u^5-3118943690894703413913413040 v^4 u^5\nonumber\\
  &+4436727620735139576883870032 v^3u^5+378779090210933672213353800 v^2 u^5-894275420028329313474734772 vu^5\nonumber\\
  &-28143830188642461399955080 u^5+24870815123962290558144794000 v^{10}u^4-17028590399764390279389912000 v^9 u^4\nonumber\\
  &-30642500612723566034952420120 v^8u^4+17121927414923400963351428640 v^7 u^4+20848735934263779279940738242 v^6u^4\nonumber\\
  &-3118943690894703413913413040 v^5 u^4-6585025120215513060415620600 v^4u^4+224576822600011254994156440 v^3 u^4\nonumber\\
  &+730062356407169871489508026 v^2u^4-87999348446432687845418760 v u^4-39991576579826072416315884u^4\nonumber\\
  &+8502660801885990861442260800 v^{11} u^3-961218355287663519951292800 v^{10}u^3-11010604932056552215403730080 v^9 u^3\nonumber\\
  &+4089814979226593490453920400 v^8u^3-4642084019946561205079466936 v^7 u^3-2928090072842649426783830400 v^6u^3\nonumber\\
  &+4436727620735139576883870032 v^5 u^3+224576822600011254994156440 v^4u^3-771752886154129578670446744 v^3 u^3\nonumber\\
  &+54105975565681638845373840 v^2u^3+158742939499283087522192736 v u^3+3181828983737934822021000u^3\nonumber\\
  &+7658013562515635323215886000 v^{12} u^2-4016291377989674598197523840 v^{11}u^2-6515407606857043381218037200 v^{10} u^2\nonumber\\
  &+8631816024097405173283346160 v^9u^2+2890833100949061663021542421 v^8 u^2-4201308893745605384096673600 v^7u^2\nonumber\\
  &-1125942030946106640101862864 v^6 u^2+378779090210933672213353800 v^5u^2+730062356407169871489508026 v^4 u^2\nonumber\\
  &+54105975565681638845373840 v^3u^2-205856416682486477443753704 v^2 u^2-4541013871098771634821000 vu^2\nonumber\\
  &-417838190775940873949175 u^2+1822971422522481873814304800 v^{13}u+626386479045976264625165760 v^{12} u\nonumber\\
  &-8955889812423159779663425824 v^{11}u-172367781226698452854372560 v^{10} u+6130342332781365103023636918 v^9u\nonumber\\
  &-1537862122247709326673670200 v^8 u+97128658780698750571038384 v^7u+881693827811784667334364120 v^6 u\nonumber\\
  &-894275420028329313474734772 v^5u-87999348446432687845418760 v^4 u+158742939499283087522192736 v^3 u\nonumber\\
  &-4541013871098771634821000v^2 u+925709140319743466938350 v u-636257259784396800000 u\nonumber\\
  &+603358073569688095393738000v^{14}+302414835014281399576977600 v^{13}+65648625922043130480407960v^{12}\nonumber\\
  &-648516348371464166524636080 v^{11}+1099108080544208467044202281v^{10}+300038159586641951467587240 v^9\nonumber\\
  &-1276791684735224437145235252v^8+16658271644450437458125640 v^7+410818358444129895320450118 v^6\nonumber\\
  &-28143830188642461399955080v^5-39991576579826072416315884 v^4+3181828983737934822021000 v^3\nonumber\\
  &-417838190775940873949175v^2-636257259784396800000 v+6507176520522240000\nonumber
 \end{align}}
\normalsize{ 
\item Arctic curve for m=4 with $\lambda_0=\lambda_1=1/2$, $\lambda_2=9/10$ $\lambda_3=1/10$ and $\mu_0=\mu_1=\mu_2=\mu_3=1/2$. This corresponds
to the third curve from the left in Fig. \ref{periodicarcticcurve2x4-1}.} 
\tiny{\begin{align*}
P(u,v)&=1865357057070562500 u^{20}+8040214205493930000 v u^{19}+24057908820831125400 v^2 u^{18}-4441125622088345250 u^{18}\\
&+39313216117293630480 v^3u^{17}-29939012844366018600 v u^{17}+73651180421168030644 v^4 u^{16}-12929933668024326890 v^2 u^{16}\\
&+10118310470530522825 u^{16}+72428652840795390912 v^5   u^{15}-123453887076899013696 v^3 u^{15}+7155651657787473900 v u^{15}\\
&+120374020497531686304 v^6 u^{14}-65490141412881018800 v^4 u^{14}+28101666374921987920 v^2u^{14}-7424978692843390100 u^{14}\\
&+54528844935004785600 v^7 u^{13}-119634727591822485216 v^5 u^{13}+82981649042010236868 v^3 u^{13}+16659576911242363500 vu^{13}\\
&+232890902059778826120 v^8 u^{12}-200076579710447004960 v^6 u^{12}+70320961099229135980 v^4 u^{12}-55379407385229146900 v^2 u^{12}\\
&+1294150941317351875u^{12}+963632570756269152 v^9 u^{11}+53942752458379867200 v^7 u^{11}-52476298064771112660 v^5 u^{11}\\
&+13172800480291575480 v^3 u^{11}-8797446414920899800 vu^{11}+343772140948551525776 v^{10} u^{10}-557579149718173524388 v^8 u^{10}\\
&+298100820891210187760 v^6 u^{10}-64636470915389906508 v^4u^{10}+38201320674387580830 v^2 u^{10}+354087698981500350 u^{10}\\
&+963632570756269152 v^{11} u^9+135203610372643741200 v^9 u^9-222149669702554081500 v^7u^9+118161884723925156948 v^5 u^9\\
&-48376121742836702328 v^3 u^9+222939375275280000 v u^9+232890902059778826120 v^{12} u^8-557579149718173524388 v^{10}u^8\\
&+605093816463271297814 v^8 u^8-235160946416126975500 v^6 u^8+55243479544063681389 v^4 u^8-8633087504875621410 v^2 u^8\\
&-67353790853352825u^8+54528844935004785600 v^{13} u^7+53942752458379867200 v^{11} u^7-222149669702554081500 v^9 u^7\\
&+119117082588260646160 v^7 u^7-71814538701551870384 v^5u^7+20230021731703841536 v^3 u^7+281342047822398300 v u^7\\
&+120374020497531686304 v^{14} u^6-200076579710447004960 v^{12} u^6+298100820891210187760 v^{10}u^6-235160946416126975500 v^8 u^6\\
&+124954126399081716836 v^6 u^6-23662723202246442204 v^4 u^6+106708566476157900 v^2 u^6-11636145655350000u^6\\
&+72428652840795390912 v^{15} u^5-119634727591822485216 v^{13} u^5-52476298064771112660 v^{11} u^5+118161884723925156948 v^9 u^5\\
&-71814538701551870384 v^7u^5+16371208262322883456 v^5 u^5-1896684382137151100 v^3 u^5+36851525477017500 v u^5\\
&+73651180421168030644 v^{16} u^4-65490141412881018800 v^{14}u^4+70320961099229135980 v^{12} u^4-64636470915389906508 v^{10} u^4\\
&+55243479544063681389 v^8 u^4-23662723202246442204 v^6 u^4+3656070264108895450 v^4u^4-24666825302795000 v^2 u^4\\
&-212561857484375 u^4+39313216117293630480 v^{17} u^3-123453887076899013696 v^{15} u^3+82981649042010236868 v^{13}u^3\\
&+13172800480291575480 v^{11} u^3-48376121742836702328 v^9 u^3+20230021731703841536 v^7 u^3-1896684382137151100 v^5 u^3\\
&-25956260718995000 v^3u^3+328736881500000 v u^3+24057908820831125400 v^{18} u^2-12929933668024326890 v^{16} u^2\\
&+28101666374921987920v^{14} u^2-55379407385229146900 v^{12}u^2+38201320674387580830 v^{10} u^2-8633087504875621410 v^8 u^2\\
&+106708566476157900 v^6 u^2-24666825302795000 v^4 u^2+529284795718750 v^2 u^2+1366328125000u^2+8040214205493930000 v^{19} u\\
&-29939012844366018600 v^{17} u+7155651657787473900 v^{15} u+16659576911242363500 v^{13} u-8797446414920899800 v^{11}u\\
&+222939375275280000 v^9 u+281342047822398300 v^7 u+36851525477017500 v^5 u+328736881500000 v^3 u-15885000000000 v u\\
&+1865357057070562500v^{20}-4441125622088345250 v^{18}+10118310470530522825 v^{16}-7424978692843390100 v^{14}\\
&+1294150941317351875 v^{12}+354087698981500350v^{10}-67353790853352825 v^8-11636145655350000 v^6-212561857484375 v^4\\
&+1366328125000 v^2+97656250000
\end{align*}}
\end{itemize}

\end{document}